\newtheorem{theorem}{\bf Theorem}[section]
\newtheorem{corollary}{\bf Corollary}[section]
\newtheorem{example}{\bf Example}[section]
\numberwithin{equation}{section}
\newcommand{\be}{\begin{equation}}
\newcommand{\ee}{\end{equation}}
\definecolor{jobcolor}{cmyk}{1,.10,0,.10}
\begin{document}

\title{Macroscopic description of a heavy particle immersed \\ within a flow of light particles}

\author{Radek Erban\thanks{Mathematical Institute, University of Oxford, Radcliffe Observatory Quarter, Woodstock Road, Oxford, OX2 6GG, United Kingdom;
e-mail: erban@maths.ox.ac.uk} 
\hskip 1cm and \hskip 1cm Robert A. Van Gorder\thanks{Department of Mathematics and Statistics, University of Otago, P.O. Box 56, Dunedin 9054, New Zealand; \hfill\break
e-mail: robert.vangorder@otago.ac.nz}}

\maketitle

\begin{abstract}
\noindent
A micro-hydrodynamics model based on elastic collisions of light point solvent particles with a heavy solute particle is investigated in the setting where the light particles have velocity distribution corresponding to a background flow. Considering a range of stationary background flows and distributions for the solvent particle velocities, the macroscopic Langevin-type description of the behaviour of the heavy particle is derived in the form of a generalized Ornstein-Uhlenbeck process. At leading order, the drift term in this process depends upon both the geometric structure of the background flow and the size of the heavy particle, while both drift and diffusion terms scale with moments of the light particle velocity distribution. Computational methods for simulating the micro-hydrodynamics model are then designed to confirm the theoretical results. To enable long-time stochastic simulations, simulations are performed in a frame co-moving with the heavy particle. Efficient methods for sampling the position and velocity distributions of incoming solvent particles at the boundaries of the co-moving frame are derived for a range of distributions of solvent particles. The simulations show good agreement with the theoretical results.
\end{abstract}

\section{Introduction}

The motion of a heavy particle within a heat bath of light particles serves as a canonical model for the molecular theory of Brownian motion~\cite{einstein1905motion, chandrasekhar1943stochastic, mazur1970molecular}. In particular, collisions with light heat bath particles result in momentum transfer to the heavy particle, thereby inducing Brownian motion of the heavy particle. The Brownian dynamics of the heavy particle can be modelled via a Langevin equation~\cite{langevin1908theorie} treating the contribution from random light particle collisions through a diffusion term. 

Consider the motion of a heavy solute particle (a ball with radius $R$ and mass $M$) which is immersed within a solvent flow in domain $\Omega \subset {\mathbb R}^3$. The solvent is described as point particles that interact with the heavy solute particle when they collide with it. This configuration has been applied in a number of studies during the last 50 years, starting as one-dimensional~\cite{holley1971motion} and three-dimensional~\cite{durr1981mechanical} mechanical models of Brownian motion. We denote the position (centre) and velocity of the heavy solute particle by ${\mathbf X}$ and ${\mathbf V}$, respectively. We assume that all solvent particles have the same mass, denoted by $m$, and satisfying $m \ll M$. The positions and velocities of solvent particles are denoted by ${\mathbf x}_j$ and ${\mathbf v}_j$, respectively, where the particle index $j=1,2,3,\dots$ goes over all integers for theoretical studies~\cite{durr1981mechanical,Erban:2020:SMR} with spatial domain $\Omega = {\mathbb R}^3$. In contrast, there are a finite number of solvent particles in computational studies using finite sized domains. To approximate the theoretical case of $\Omega = {\mathbb R}^3$ in computational studies, the solvent particles are introduced through suitable boundary conditions~\cite{Erban:2014:MDB,gunaratne2019multi}.

Denoting the dimensionless parameter $\mu=M/m$ and assuming that the collisions of the heavy particle with point solvent particles are without friction, the conservation of momentum and energy 
yield the following formulae for post-collision velocities~\cite{durr1981mechanical,Erban:2020:SMR}:
\begin{eqnarray}\label{elcol}
\widetilde{\mathbf V}
&=& 
\left[ {\mathbf V} \right]^\parallel
+ 
\frac{\mu - 1}{\mu + 1} \, \left[ {\mathbf V} \right]^\perp
 + 
\frac{2}{\mu + 1} \, \left[{\mathbf v}_j \right]^\perp,
\label{elcol1}
\\
\widetilde{\mathbf v}_j 
&=& 
\left[ {\mathbf v}_j \right]^\parallel
+
\frac{1 - \mu}{\mu + 1} \, \left[{\mathbf v}_j \right]^\perp
 + 
\frac{2 \mu}{\mu + 1} \, \left[ {\mathbf V} \right]^\perp\,.
\label{elcol2}
\end{eqnarray}
Here, ${\mathbf v}_j$ is the velocity of the solvent particle which collided with the heavy solute particle, tildes denote post-collision velocities, superscripts $\perp$ denote projections of velocities on the line through the centre of the solute particle and the 
collision point on its surface, and superscripts $\parallel$ denote tangential components. Micro-hydrodynamics models based on elastic collisions~(\ref{elcol1})--(\ref{elcol2}) have been studied by a number of authors~\cite{holley1971motion, durr1981mechanical, dunkel2006relativistic, Erban:2014:MDB, Erban:2020:SMR, gunaratne2019multi}, all of whom investigate the macroscopic behaviour of the heavy particle for $\mu \gg 1$, where it provides a mechanical description of Brownian motion. 

For example, consider the case of infinite domain $\Omega = \mathbb{R}^3$ containing an infinite number of solvent particles with positions distributed according to a spatial Poisson process with constant density
\begin{equation}
\lambda_{\mu} =
\frac{3}{8 R^2} \sqrt{\frac{(\mu + 1) \, \gamma}{2 \pi D}},
\label{lambda3Dexp}
\end{equation}
where $D$ is a diffusion coefficient having
dimension $[D] = [\mbox{length}]^2 \, [\mbox{time}]^{-1}$ 
and $\gamma$ is a friction coefficient having 
dimension $[\gamma] = [\mbox{time}]^{-1}$. Let the velocities of the solvent particles be distributed according to the Gaussian (Maxwell-Boltzmann) distribution 
\begin{equation}
f_\mu({\mathbf v})
= \frac{1}{\sigma_\mu^3(2\pi)^{3/2}} 
\exp \left[ 
- \frac{v_1^2+v_2^2+v_3^2}{2 \sigma_\mu^2} \right],
\label{fvel3Dexp}
\end{equation}
where ${\mathbf v} = (v_1,v_2,v_3)^{\mathrm{T}}$ and
\begin{equation}
\sigma_\mu = \sqrt{(\mu + 1) \, D \, \gamma } \, ,
\label{sigmaold}
\end{equation}
and assume that a solvent particle moves (between collisions) using the free flight, i.e. the position and velocity of the $j$-th solvent particle satisfy
\begin{equation}
\frac{\mbox{d}{\mathbf x}_j}{\mbox{d}t} 
=
{\mathbf v}_j \, 
\qquad
\mbox{and}
\qquad
\frac{\mbox{d}{\mathbf v}_j}{\mbox{d}t} 
=
{\mathbf 0} 
\label{freeflight}
\end{equation}
between the elastic collisions governed by~(\ref{elcol1})--(\ref{elcol2}). Then, it can be shown~\cite{durr1981mechanical,Erban:2014:MDB} that, in the limit $\mu \to \infty$, the behaviour of the heavy particle converges (in the sense of distributions) 
to the Langevin dynamics given by
\begin{eqnarray}
\mbox{d}{\mathbf X} 
& 
{\hskip -3mm} = {\hskip -3mm}
& 
{\mathbf V} \; \mbox{d}t, 
\label{langeq}
\\
\mbox{d}{\mathbf V} 
& 
{\hskip -3mm} = {\hskip -3mm}
&
- 
\gamma {\mathbf V}\, \mbox{d}t 
+ 
\gamma \sqrt{2 D} \; \mbox{d}{\mathbf W}, 
\label{langeqV1}
\end{eqnarray}
where ${\mathbf W}$ is a three-dimensional vector of independent Wiener processes.

Aside from standard Brownian motion of a heavy particle within a heat bath as governed by~(\ref{langeq})--(\ref{langeqV1}), there has been an interest in manipulating or controlling the motion of a heavy particle in a heat bath in some manner~\cite{wulfert2017driven, goswami2019work}, with temperature gradients~\cite{nicolis1965evaluation, perez1994brownian} and electromagnetic fields~\cite{pal2014extracting, tothova2020brownian} used to bias the motion of the heavy particle. There is a history of literature extending the Brownian motion of a heavy object within a fluid flow comprising lighter particles~\cite{saffman1976brownian, batchelor1977effect, ramshaw1979brownian, russel1981brownian, miyazaki1995brownian}. In many of these studies macroscale models are formulated for the heavy particle motion, informed by heat bath statistics of the type mentioned above. Compare this with literature on the flow of heavy spheres within a given flow where the interaction of the sphere and the surrounding fluid is modelled at the macroscopic scale from the onset of the problem~\cite{stokes1851effect, ho1974inertial, maxey1983equation, feng1994direct}. The Brownian motion of particles within certain flows has been suggested as a possible route to anomalous diffusion~\cite{viecelli1993statistical, solomon1993observation, klafter1996beyond}, motivating the development of more accurate models bridging microscale Brownian motion with macroscale flows~\cite{gallagher2019newton}. The development of more accurate models of Brownian motion of a heavy particle immersed within generic flows aides us in better understanding the theory behind recent experimental results on the dynamics of Brownian particles within flows~\cite{li2010measurement}.

It is possible to reconcile models of Brownian motion of a heavy particle in a heat bath with macroscopic models of a heavy particle within a fluid flow by generalizing the heat bath to account for the directed motion of the heat bath particles according to a prescribed fluid flow. In some works, a Langevin model for the motion of a heavy particle within a background flow is asserted~\cite{rubi1988brownian, gotoh1990brownian, katayama1996brownian, garbaczewski1998diffusion, orihara2011brownian, wang2022generalized}. A derivation of Langevin dynamics for a heavy particle immersed in a background flow field taking the form of a linear shear flow was presented in~\cite{dobson2013derivation}. 

In this paper, we extend the study of the Brownian motion of a heavy particle within a heat bath to account for general flows of the light particles, as well as for general forms of the light particle velocity distribution. Indeed, the latter generalization is motivated by heavy-tailed velocity distributions that find application in high-energy granular gas experiments~\cite{rouyer2000velocity, aranson2002velocity, van1998velocity, kohlstedt2005velocity} and more generic composite velocity distributions that find application in the study of plasma flows~\cite{cairns1995electrostatic, izacard2017generalized}, to give two examples. We take a first principles approach to modelling, starting with the conservation of momentum and energy and upscaling to Langevin dynamics for generic stationary flows and velocity distributions. We consider the case where the solvent particles are immersed in stationary flow described by
\begin{equation}
{\mathbf u}(\mathbf x)
=
(u_1({\mathbf x}),
u_2({\mathbf x}),
u_3({\mathbf x}))^{\mathrm{T}}\,,
\label{statflow}
\end{equation}
where ${\mathbf x} = (x_1,x_2,x_3)^{\mathrm{T}} \in \Omega$ and the velocity distribution (\ref{fvel3Dexp}) is generalized to be both ${\mathbf u}$-dependent and non-Gaussian as $f_\mu({\mathbf v},{\mathbf u})$. Since this means that the velocity distribution $f_\mu({\mathbf v},{\mathbf u})$ is, in general, position-dependent, the free flight equations~(\ref{freeflight}) have to be generalized to include the ${\mathbf u}$-dependence as well. This is done in Section~\ref{sec2}, where we introduce our microscopic model of the solvent in detail. We then derive a Langevin description for the motion of the heavy solute particle immersed within a heat bath comprising solvent particles moving with a prescribed flow profile in Section~\ref{sec3}. The theoretical results we obtain take the form of a generalized Ornstein-Uhlenbeck process and are valid for generic forms of the velocity distribution of the solvent particles comprising the heat bath (which is of use even for heat bath configurations deviating from the standard Maxwell-Boltzmann statistics~\cite{mo2019highly}), as well as for generic incompressible stationary flows. We then develop an efficient computational algorithm -- in a frame co-moving with the heavy solute particle -- to illustrate the theoretical results in Section~\ref{sec4}. We discuss the key findings and possible future directions in Section~\ref{sec5}.

\section{Motion of the solvent particles}
\label{sec2}

We assume that the mean motion of the heat bath is described by velocity field $\mathbf{u}(\mathbf{x})$ which is a stationary solution of the Navier-Stokes equations. In particular, $\mathbf{u}(\mathbf{x})$ satisfies the continuity equation written for incompressible flow as
\begin{equation}
\label{contequation}
\nabla \cdot \mathbf{u} = 0.
\end{equation}
We assume that solvent particle velocities are sampled according to a distribution \begin{equation}\label{Fgeneral}
f_\mu({\mathbf v},{\mathbf u})
= 
F_{\mu}(\mathbf{v}-\mathbf{u})
\end{equation}
where the function $F_\mu: {\mathbb R}^3 \to [0,\infty)$ satisfies the properties 
\begin{equation}\label{Fproperties}
\int_{\mathbb{R}^3} \! F_\mu(\mathbf{q}) \, \mathrm{d}\mathbf{q} =1\, \qquad \text{and} \qquad \int_{\mathbb{R}^3} \parallel\!\mathbf{q}\!\parallel^4  \!F_\mu(\mathbf{q}) \, \mathrm{d}\mathbf{q} < \infty \,,
\end{equation}
for all $\mu \geq 0$. Equation~(\ref{Fgeneral}) is a generalization of equation~(\ref{fvel3Dexp}), which is covered by our framework for $\mathbf{u} \equiv \mathbf{0}$ and $F_\mu$ being the Gaussian distribution. In general, equation~(\ref{Fgeneral}) states that the velocity distribution is centered around the flow solution $\mathbf{u}$ to equation \eqref{contequation} and is therefore $\mathbf{u}$-dependent. Some important cases of the function $F_\mu: {\mathbb R}^3 \to [0,\infty)$ will be given in the product form
\begin{equation}
\label{productform}
F_\mu (\mathbf{q})
= \frac{1}{\sigma_\mu^3} 
\,
{\mathcal F}\!\left(\!\frac{q_1}{\sigma_\mu} \!\right) 
\,
{\mathcal F}\!\left(\!\frac{q_2}{\sigma_\mu} \!\right)
\,
{\mathcal F}\!\left(\!\frac{q_3}{\sigma_\mu} \!\right),
\end{equation}
where function ${\mathcal F}: {\mathbb R} \to [0,\infty)$ is given as the Gaussian, Laplace and generalized Gaussian distributions, respectively, see Section~\ref{sec3}, but our initial derivation will consider $F_\mu$ in its full generality satisfying conditions~(\ref{Fproperties}). 

We assume that the particle velocities are distributed according to distribution~(\ref{Fgeneral}) at any time~$t \ge 0$ in our theoretical investigations. This assumption implies that the position and velocity of the $j$-th solvent particle evolve according to the equations
\begin{equation}
\dfrac{\mathrm{d}\mathbf{x}_j}{\mathrm{d}t} ={\mathbf v}_j  \, 
\qquad
\mbox{and}
\qquad
\dfrac{\mathrm{d}\mathbf{v}_j}{\mathrm{d}t} =
\left(\nabla {\mathbf u} \right){\mathbf v}_j
\equiv \sum_{\ell=1}^3 \frac{\partial {\mathbf u}}{\partial x_\ell} \, v_{j,\ell} \, ,
\label{xveqgen}
\end{equation}%
where ${\mathbf v}_j = (v_{j,1},v_{j,2},v_{j,3})^{\mathrm{T}}$. This is a generalization of 
free-flight equations~(\ref{freeflight}), which we obtain in the special case of zero flow~$\mathbf{u}(\mathbf{x}) \equiv \mathbf{0}$ in equations~(\ref{xveqgen}). 

In our illustrative simulations in Section~\ref{sec4}, we initialize the velocities of particles according to distribution~(\ref{Fgeneral}) at time $t=0$ and let their positions and velocities evolve according to equations~(\ref{xveqgen}). Assuming that the $j$-th particle initial velocity is sampled as $\mathbf{v}_j^0$, its velocity at time $t$ to satisfies
$$
\mathbf{v}_j(t) = \mathbf{u}(\mathbf{x}_j(t))+\mathbf{v}_j^0.
$$
This relation implies that the velocity distribution~(\ref{Fgeneral}) is preserved for all time $t>0$ and we have (for $\Delta t > 0$) 
$$
\mathbf{v}_j(t+\Delta t) - \mathbf{v}_j(t) 
= 
\mathbf{u}\big(\mathbf{x}_j(t+\Delta t)\big) - \mathbf{u}(\mathbf{x}_j(t)) 
= 
\mathbf{u}\big(\mathbf{x}_j(t)+\mathbf{v}_j(t)\Delta t + \mathcal{O}( \Delta t)\big) - \mathbf{u}(\mathbf{x}_j(t)).
$$
Applying the Taylor expansion, we get
\begin{equation*}
\dfrac{\mathrm{d}\mathbf{v}_j}{\mathrm{d}t} 
= 
\lim_{\Delta t \rightarrow 0}\dfrac{\mathbf{v}_j(t+\Delta t) - \mathbf{v}_j(t)}{\Delta t} 
= 
\lim_{\Delta t \rightarrow 0} \left\lbrace \left( \nabla \mathbf{u}\right)\mathbf{v}_j(t)  + \mathcal{O}\left( \Delta t\right) \right\rbrace 
=  \left( \nabla \mathbf{u}\right)\mathbf{v}_j\, ,
\end{equation*}
which is equation~(\ref{xveqgen}).

\section{Derivation of Langevin dynamics}
\label{sec3}

The generalization of the Langevin dynamics~(\ref{langeq})--(\ref{langeqV1}) for the motion of a heavy solute particle immersed within a heat bath comprising solvent particles moving with a prescribed flow profile~${\mathbf u}$ can be written as 
\begin{eqnarray}
\mbox{d}{\mathbf X} 
& 
{\hskip -3mm} = {\hskip -3mm}
& 
{\mathbf V} \; \mbox{d}t, 
\label{SDEmainX}
\\
\mbox{d}{\mathbf V} 
& 
{\hskip -3mm} = {\hskip -3mm}
&
\boldsymbol{\alpha}(\mathbf{X},\mathbf{V}) \, \mathrm{d}t 
+ 
\boldsymbol{\beta}(\mathbf{X},\mathbf{V}) \; \mbox{d}{\mathbf W}, 
\label{SDEmainY}
\end{eqnarray}
where ${\mathbf W}$ is a three-dimensional vector of independent Wiener processes and the coefficients $\boldsymbol{\alpha}(\mathbf{X},\mathbf{V})$ 
and 
$\boldsymbol{\beta}(\mathbf{X},\mathbf{V})$ depend on the velocity field $\mathbf{u}(\mathbf{x})$. To calculate these coefficients, we first express them as integrals over the surface of the heavy particle
\begin{equation}
\mathcal{S}(\mathbf{X},R)
\equiv
\mathcal{S}(\mathbf{X}(t),R) 
= 
\left\lbrace \mathbf{y}\in \mathbb{R}^3 \; \Big| \; \parallel\!\mathbf{y}-\mathbf{X}(t)\! \parallel \, = R \right\rbrace.
\label{partsurface}
\end{equation}
This is done, for general function $F_\mu$ satisfying propertises (\ref{Fproperties}), in the following theorem.

\begin{theorem}\label{theorem1}
Let $\mathbf{y}$ be a point on the surface~(\ref{partsurface}) of the heavy particle, i.e., $\mathbf{y}\in\mathcal{S}(\mathbf{X},R)$, and let vectors $\boldsymbol{\eta}_2 \in \mathbb{R}^3$ and $\boldsymbol{\eta}_3 \in \mathbb{R}^3$ be chosen so that $\left\lbrace (\mathbf{y}-\mathbf{X})/R, \boldsymbol{\eta}_2, \boldsymbol{\eta}_3 \right\rbrace$ comprise an orthonormal basis for $\mathbb{R}^3$. Define the functions
\begin{align}
&\psi_\ell(\mathbf{y}) = - \dfrac{2\lambda_\mu \left( y_\ell - X_\ell(t)\right)}{(1+\mu)R}\int_0^\infty \int_{-\infty}^\infty \int_{-\infty}^\infty \left\lbrace \xi_1 + \mathbf{V}(t)\cdot \left( \dfrac{\mathbf{y}-\mathbf{X}(t)}{R}\right)\right\rbrace^2 \nonumber \\
& \quad \times F_\mu \left(  - \left(\xi_1 + \mathbf{u}\cdot \frac{\mathbf{y}-\mathbf{X}(t)}{R}\right) \left( \dfrac{\mathbf{y}-\mathbf{X}(t)}{R}\right)
+
\sum_{i=2}^3 
\left(\xi_i - \mathbf{u}\cdot \boldsymbol{\eta}_i  \right) \boldsymbol{\eta}_i  \right) \mathrm{d}\xi_3 \, \mathrm{d}\xi_2 \, \mathrm{d}\xi_1 \label{thmproof6}
\end{align}
and 
\begin{align}
&\phi_{\ell,j}(\mathbf{y})  = \dfrac{4\lambda_\mu \left( y_\ell - X_\ell(t)\right)\left( y_j - X_j(t)\right)}{(1+\mu)^2R^2}\int_0^\infty \int_{-\infty}^\infty \int_{-\infty}^\infty \left\lbrace \xi_1 + \mathbf{V}(t)\cdot \left( \dfrac{\mathbf{y}-\mathbf{X}(t)}{R}\right)\right\rbrace^3 \nonumber\\
& \quad \times F_\mu \left( - \left(\xi_1 + \mathbf{u}\cdot \frac{\mathbf{y}-\mathbf{X}(t)}{R}\right) \left( \dfrac{\mathbf{y}-\mathbf{X}(t)}{R}\right) 
+ 
\sum_{i=2}^3 
\left(\xi_i - \mathbf{u}\cdot \boldsymbol{\eta}_i  \right) \boldsymbol{\eta}_i \right) \mathrm{d}\xi_3 \,\mathrm{d}\xi_2 \,\mathrm{d}\xi_1 \,. \label{thmproof7}
\end{align}
Then the drift coefficient of the It\^{o} stochastic differential equations (\ref{SDEmainX})--(\ref{SDEmainY}) for the motion of the heavy solute particle can be expressed as the surface integral
\begin{equation}\label{alphageneral}
\alpha_\ell(\mathbf{X},\mathbf{V}) = \int_{\mathcal{S}(\mathbf{X}(t),R)} \psi_\ell(\mathbf{y})\,\mathrm{d}A\,,\quad \ell =1,2,3\,,
\end{equation}
where $\mathrm{d}A$ is the surface element  centred at $\mathbf{y}$. 
The diffusion terms are given by the square root of the matrix having entries
\begin{equation}
 \beta_{\ell,j}^2(\mathbf{X},\mathbf{V}) =  \int_{\mathcal{S}(\mathbf{X}(t),R)} \phi_{\ell,j}(\mathbf{y})\,\mathrm{d}A\,, \quad \ell,j =1,2,3\,.
 \label{betageneral}
\end{equation}
\end{theorem}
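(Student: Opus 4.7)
The plan is to identify the drift $\boldsymbol{\alpha}$ and diffusion $\boldsymbol{\beta}$ in (\ref{SDEmainY}) with the first two infinitesimal moments of the velocity increment $\Delta\mathbf{V}$ produced by the Poissonian stream of elastic collisions on $\mathcal{S}(\mathbf{X},R)$, and then to recast the resulting double integrals into the surface-integral form (\ref{alphageneral})--(\ref{betageneral}) by a change of variables in the inner velocity integral. First I would read off the single-collision momentum kick from (\ref{elcol1}): writing $\mathbf{n}(\mathbf{y}) = (\mathbf{y}-\mathbf{X})/R$ for the outward unit normal at $\mathbf{y}\in\mathcal{S}(\mathbf{X},R)$ and using $[\mathbf{w}]^\perp = (\mathbf{w}\cdot\mathbf{n})\mathbf{n}$, subtracting $\mathbf{V}$ from $\widetilde{\mathbf{V}}$ yields
\[
\Delta\mathbf{V} = \frac{2}{\mu+1}\bigl[(\mathbf{v}_j-\mathbf{V})\cdot\mathbf{n}\bigr]\mathbf{n}, \qquad \Delta V_\ell = \frac{2\,(y_\ell-X_\ell)}{R(\mu+1)}\,(\mathbf{v}_j-\mathbf{V})\cdot\mathbf{n}.
\]

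Next I would invoke the standard kinetic-theory collision flux: because the solvent forms a spatial Poisson process of intensity $\lambda_\mu$ with i.i.d.\ velocity marks of density $F_\mu(\,\cdot\,-\mathbf{u}(\mathbf{y}))$, the expected number of collisions landing on $\mathrm{d}A$ at $\mathbf{y}$ with incoming velocity in $\mathrm{d}\mathbf{v}$ during a short interval $\Delta t$ equals $\lambda_\mu\,|(\mathbf{v}-\mathbf{V})\cdot\mathbf{n}|\,F_\mu(\mathbf{v}-\mathbf{u}(\mathbf{y}))\,\mathbf{1}_{(\mathbf{v}-\mathbf{V})\cdot\mathbf{n}<0}\,\mathrm{d}\mathbf{v}\,\mathrm{d}A\,\Delta t$, with distinct collisions independent at leading order in $\Delta t$. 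The Kramers--Moyal identification $\alpha_\ell = \lim_{\Delta t\downarrow 0}\mathbb{E}[\Delta V_\ell]/\Delta t$ and $\beta_{\ell,j}^2 = \lim_{\Delta t\downarrow 0}\mathbb{E}[\Delta V_\ell\,\Delta V_j]/\Delta t$, combined with the elementary identities $z\,|z| = -z^2$ and $z^2\,|z| = -z^3$ on $\{z<0\}$, then collapses the jump-process moments into
\[
\alpha_\ell = -\frac{2\lambda_\mu}{\mu+1}\int_{\mathcal{S}(\mathbf{X},R)}\frac{y_\ell-X_\ell}{R}\,\mathrm{d}A\int_{(\mathbf{v}-\mathbf{V})\cdot\mathbf{n}<0}\bigl[(\mathbf{v}-\mathbf{V})\cdot\mathbf{n}\bigr]^2 F_\mu(\mathbf{v}-\mathbf{u})\,\mathrm{d}\mathbf{v}
\]
together with the analogous cubic integral for $\beta_{\ell,j}^2$ carrying the prefactor $4\lambda_\mu/(\mu+1)^2$ and the tensorial factor $(y_\ell-X_\ell)(y_j-X_j)/R^2$.

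To match (\ref{thmproof6})--(\ref{thmproof7}) I would then introduce the orthonormal basis $\{\mathbf{n},\boldsymbol{\eta}_2,\boldsymbol{\eta}_3\}$ supplied by the theorem and substitute $\xi_1 = -\mathbf{v}\cdot\mathbf{n}$, $\xi_i = \mathbf{v}\cdot\boldsymbol{\eta}_i$ for $i = 2,3$, whose Jacobian is unity. The expansion $\mathbf{u} = (\mathbf{u}\cdot\mathbf{n})\mathbf{n} + \sum_{i=2}^3 (\mathbf{u}\cdot\boldsymbol{\eta}_i)\boldsymbol{\eta}_i$ turns $\mathbf{v}-\mathbf{u}$ into exactly the argument of $F_\mu$ displayed inside (\ref{thmproof6}), while the identity $-(\mathbf{v}-\mathbf{V})\cdot\mathbf{n} = \xi_1 + \mathbf{V}\cdot\mathbf{n}$ produces the integrands $\{\xi_1+\mathbf{V}\cdot\mathbf{n}\}^2$ and $\{\xi_1+\mathbf{V}\cdot\mathbf{n}\}^3$ after the minus signs are absorbed into the even (respectively odd) powers.

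The hard part will be the treatment of the collision half-space under this substitution: its natural image is $\{\xi_1 > -\mathbf{V}\cdot\mathbf{n}\}$ rather than $\{\xi_1 > 0\}$, so the formulae (\ref{thmproof6})--(\ref{thmproof7}) match the microscopic moments exactly only in the Langevin regime $\|\mathbf{V}\|\ll\sigma_\mu\sim\sqrt{\mu}$ in which the missing $O(\|\mathbf{V}\|)$ sliver of $\xi_1$ contributes at subleading order (equivalently, one re-parametrises using the relative velocity, shifting $\mathbf{u}\to\mathbf{u}-\mathbf{V}$, so that the collision condition becomes $\xi_1 > 0$ exactly). The remaining checks --- integrability under $\Delta t\downarrow 0$ guaranteed by the fourth-moment bound in (\ref{Fproperties}), the fact that the off-diagonal entries of $\mathbb{E}[\Delta\mathbf{V}\,\Delta\mathbf{V}^{\mathrm{T}}]/\Delta t$ are generated by the same kernel $\phi_{\ell,j}$, and the matrix-square-root step turning the symmetric tensor (\ref{betageneral}) into the diffusion coefficient $\boldsymbol{\beta}$ of the It\^{o} SDE --- are then routine.
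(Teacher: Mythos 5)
Your proposal is correct and follows essentially the same route as the paper's proof: compute the single-collision kick from~(\ref{elcol1}), weight it by the Poissonian collision flux through a surface element $\mathrm{d}A$ (your flux factor $\lambda_\mu|(\mathbf{v}-\mathbf{V})\cdot\mathbf{n}|\,\mathrm{d}A\,\Delta t$ is exactly the paper's swept-cylinder volume), take first and second infinitesimal moments, change to the basis $\{\mathbf{n},\boldsymbol{\eta}_2,\boldsymbol{\eta}_3\}$, and integrate over the sphere. The one place you go beyond the paper is the collision half-space: the paper asserts $\mathbf{v}\cdot\mathbf{n}<0$ (hence $\xi_1>0$) outright, whereas you correctly derive the exact condition $(\mathbf{v}-\mathbf{V})\cdot\mathbf{n}<0$ and observe that the stated domain $\xi_1>0$ drops an $O(\|\mathbf{V}\|)$ sliver that is negligible only in the regime $\|\mathbf{V}\|\ll\sigma_\mu$ --- a legitimate refinement the paper leaves implicit.
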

\begin{proof}
We assume, in the limit of large $\mu$, that the velocity of the heavy particle evolves according to discretized SDE~(\ref{SDEmainY}) which can be written as
\begin{equation}\label{thmproof1}
V_\ell(t+\Delta t) = V_\ell(t) + \alpha_\ell(t) \Delta t + \beta_\ell(t) \sqrt{\Delta t} \, \chi_\ell \,,
\end{equation}
where $\Delta t$ is a (small) time step and $\chi_\ell$, for $\ell=1,2,3,$ is a normally distributed random number with zero mean and unit variance. To determine $\boldsymbol{\alpha}$ and $\boldsymbol{\beta}$ we will match the mean and variance of the velocity jump in the microscopic solvent model to those of \eqref{thmproof1}.

Let $\mathbf{y}\in\mathcal{S}(\mathbf{X},R)$ be a point on the surface of the heavy particle. We first find the distribution measuring the average change in velocity of the heavy particle due to collisions near the surface point $\mathbf{y}$ during the time interval $(t,t+\Delta t)$; we will show that this quantity is to the leading order in $\Delta t$ given as $\psi_\ell(\mathbf{y}) \, \Delta t$, where $\psi_\ell(\mathbf{y})$ is expressed by~(\ref{thmproof6}). Then, $\psi_\ell(\mathbf{y}) \, \Delta t \, \mathrm{d}A$ is the average change of the $\ell$-th component of the velocity of the heavy molecule caused by collisions with heat bath particles in the time interval $(t,t+\Delta t)$ on the surface element $\mathrm{d}A$ centred at $\mathbf{y}$. 

Consider a heat bath particle located at point $\mathbf{x}$ at time $t$ which collides with the heavy particle at time $t+\tau \in (t,t+\Delta t)$ at the surface point which had position $\mathbf{y}$ at time $t$. Since $\Delta t$ is small and the velocity jump of the heavy particle due to collision with any one heat bath particle is small, we approximate $\mathbf{V}$ to be a constant in the interval $(t,t+\Delta t)$. That is to say, the change in~$\mathbf{V}$ is $\mathcal{O}(\sqrt{\Delta t})$. At this level of approximation, the coordinate of the surface point at the collision time~$t+\tau$ is equal to $\mathbf{y} + \tau \mathbf{V}$. Since the heat bath molecule moved from $\mathbf{x}$ to the collision point $\mathbf{y}+\tau \mathbf{V}$, its velocity before the collision was
$$
\mathbf{v}=\dfrac{\mathbf{y}+\tau \mathbf{V}(t)-\mathbf{x}}{\tau} = \mathbf{V}(t) + \dfrac{\mathbf{y}-\mathbf{x}}{\tau} \,.
$$
Making use of \eqref{elcol}, we write the change in velocity of the heavy particle due to the collision as
\begin{equation}\label{thmproof2}
\widetilde{\mathbf{V}}-\mathbf{V} = \dfrac{2}{1+\mu} \left[ \mathbf{v} - \mathbf{V}(t) \right]^\perp = \dfrac{2}{1+\mu}\left\lbrace \left( \mathbf{v}-\mathbf{V}(t)\right)\cdot \left( \dfrac{\mathbf{y}-\mathbf{X}(t)}{R}\right)\right\rbrace \left( \dfrac{\mathbf{y}-\mathbf{X}(t)}{R}\right) .
\end{equation}
The position $\mathbf{x}$ of the heat bath molecule must be in the half space above the plane tangent to the heavy particle at the collision point $\mathbf{y}+\tau \mathbf{V}(t)$; in particular, this means that the component of the velocity $\mathbf{v}$ in the direction $\mathbf{y}-\mathbf{X}(t)$ must be negative. We then parameterize the allowable velocities by 
\begin{equation}\label{v_param}
\mathbf{v} = - \xi_1 \left( \dfrac{\mathbf{y}-\mathbf{X}(t)}{R}\right) + \xi_2 \boldsymbol{\eta}_2 + \xi_3 \boldsymbol{\eta}_3\,,
\end{equation}
where $\xi_1 >0$, $\xi_2, \xi_3 \in \mathbb{R}$, and $(\mathbf{y}-\mathbf{X}(t))/R,$ $\boldsymbol{\eta}_2$ and $\boldsymbol{\eta}_3$ comprise an orthonormal basis for $\mathbb{R}^3$. Using this basis to represent velocity vectors, we rewrite \eqref{thmproof2} as
\begin{equation}\label{thmproof3}
\widetilde{\mathbf{V}}-\mathbf{V} = -\dfrac{2}{(1+\mu)R}\left\lbrace \xi_1 + \mathbf{V}(t)\cdot \left( \dfrac{\mathbf{y}-\mathbf{X}(t)}{R}\right)\right\rbrace \left( \mathbf{y}-\mathbf{X}(t)\right)\,.
\end{equation}
For a given heat bath particle velocity $\mathbf{v}$, the set of possible starting points allowing for a collision with the heavy particle within the surface area element $\mathrm{d}A$ centred at $\mathbf{y}$ during the time interval $(t,t+\Delta t)$ is a cylinder of cross-sectional area $\mathrm{d}A$ and perpendicular height
$$
h = \Delta t \left( \mathbf{V}(t) - \mathbf{v}\right)\cdot \left( \dfrac{\mathbf{y}-\mathbf{X}(t)}{R}\right) = \Delta t \left\lbrace \xi_1 + \mathbf{V}(t)\cdot \left( \dfrac{\mathbf{y}-\mathbf{X}(t)}{R}\right)\right\rbrace \,.
$$
We assume the number of heat bath particles in this cylinder is Poisson distributed with mean $\lambda_\mu$ times its volume $h \, \mathrm{d}A$. The probability of collision of a heat bath particle with velocity in the interval $(\mathbf{v},\mathbf{v}+\mathrm{d}\mathbf{v})$ with the surface element $\mathrm{d}A$ over the time interval $(t,t+\Delta t)$ reads
\begin{equation}\label{thmproof4}
\lambda_\mu \Delta t \left\lbrace \xi_1 + \mathbf{V}(t)\cdot \left( \dfrac{\mathbf{y}-\mathbf{X}(t)}{R}\right)\right\rbrace \mathrm{d}A \, f_\mu \left( - \xi_1 \left( \dfrac{\mathbf{y}-\mathbf{X}(t)}{R}\right) + \xi_2 \boldsymbol{\eta}_2 + \xi_3 \boldsymbol{\eta}_3 \right)\mathrm{d}\mathbf{v}\,.
\end{equation}
To find the average change in velocity of the heavy particle due to collisions with $\mathrm{d}A$ during the time interval $(t,t+\Delta t)$, we multiply~\eqref{thmproof3} by \eqref{thmproof4} and integrate over all possible velocities $\mathbf{v}$ parameterized by \eqref{v_param}, obtaining
\begin{align*}
\psi_\ell(\mathbf{y}) \,
\Delta t \, \mathrm{d}A & 
= - \Delta t \, 
\mathrm{d}A \,
\dfrac{2\lambda_\mu \left( y_\ell - X_\ell(t)\right)}{(1+\mu)R}\int_0^\infty \int_{-\infty}^\infty \int_{-\infty}^\infty \left\lbrace \xi_1 + \mathbf{V}(t)\cdot \left( \dfrac{\mathbf{y}-\mathbf{X}(t)}{R}\right)\right\rbrace^2 \nonumber \\
& \qquad \times f_\mu \left( - \xi_1 \left( \dfrac{\mathbf{y}-\mathbf{X}(t)}{R}\right) + \xi_2 \boldsymbol{\eta}_2 + \xi_3 \boldsymbol{\eta}_3 \right) \mathrm{d}\xi_3 \, \mathrm{d}\xi_2 \, \mathrm{d}\xi_1 \nonumber \\
& = - \Delta t \, \mathrm{d}A \, \dfrac{2\lambda_\mu \left( y_\ell - X_\ell(t)\right)}{(1+\mu)R}\int_0^\infty \int_{-\infty}^\infty \int_{-\infty}^\infty \left\lbrace \xi_1 + \mathbf{V}(t)\cdot \left( \dfrac{\mathbf{y}-\mathbf{X}(t)}{R}\right)\right\rbrace^2 \nonumber \\
& \quad \times F_\mu \left(  - \left(\xi_1 + \mathbf{u}\cdot \frac{\mathbf{y}-\mathbf{X}(t)}{R}\right) \left( \dfrac{\mathbf{y}-\mathbf{X}(t)}{R}\right)  
+ \sum_{i=2}^3
\left(\xi_i - \mathbf{u}\cdot \boldsymbol{\eta}_i  \right) \boldsymbol{\eta}_i \right) \mathrm{d}\xi_3 \, \mathrm{d}\xi_2 \, \mathrm{d}\xi_1 \,. 
\end{align*}
Note that we have expressed $\mathbf{u}$ in terms of the basis vectors $\left\lbrace
(\mathbf{y}-\mathbf{X}(t))/R, \boldsymbol{\eta}_2, \boldsymbol{\eta}_3 \right\rbrace$, obtaining the representation 
$$
\mathbf{u} = \left\lbrace \mathbf{u}\cdot \frac{\mathbf{y}-\mathbf{X}(t)}{R}\right\rbrace \left( \frac{\mathbf{y}-\mathbf{X}(t)}{R} \right) + \left( \mathbf{u}\cdot \boldsymbol{\eta}_2 \right) \boldsymbol{\eta}_2 + \left( \mathbf{u}\cdot \boldsymbol{\eta}_3 \right) \boldsymbol{\eta}_3
.$$ From here, we integrate this expression over the surface of the heavy molecule, and upon equating this integral to the average velocity jump in \eqref{thmproof1}, we obtain
\eqref{alphageneral} in the limit $\Delta t \to 0$.  

To determine the diffusion term, we calculate the variance in the velocity jump in the $\ell$-th direction from time $t$ to time $t+\Delta t$ and equate this with $\beta_\ell^2(t) \Delta t$ in equation~(\ref{thmproof1}). It is also possible to have off diagonal terms in the diffusion tensor, and we account for these as well. Since the mean velocity jump is $\mathcal{O}(\Delta t)$, the variance (to leading order) is the second moment of the velocity jump. We square the quantity \eqref{thmproof3}, multiply it by \eqref{thmproof4}, and then integrate over all possible parameterizations of $\mathbf{v}$ in \eqref{v_param}, finding that the variance in the $\ell$-th component of the velocity jump of the heavy particle due to collisions with $\mathrm{d}A$ over the time interval $(t,t+\Delta t)$ is $\phi_{\ell,j}(\mathbf{y}) \,
\Delta t \, \mathrm{d}A$ where
\begin{align*}
\phi_{\ell,j}(\mathbf{y}) & = \dfrac{4\lambda_\mu \left( y_\ell - X_\ell(t)\right)\left( y_j - X_j(t)\right)}{(1+\mu)^2R^2}\int_0^\infty \int_{-\infty}^\infty \int_{-\infty}^\infty \left\lbrace \xi_1 + \mathbf{V}(t)\cdot \left( \dfrac{\mathbf{y}-\mathbf{X}(t)}{R}\right)\right\rbrace^3 \nonumber\\
& \qquad \times f_\mu \left( - \xi_1 \left( \dfrac{\mathbf{y}-\mathbf{X}(t)}{R}\right) + \xi_2 \boldsymbol{\eta}_2 + \xi_3 \boldsymbol{\eta}_3 \right) \mathrm{d}\xi_3 \,\mathrm{d}\xi_2 \, \mathrm{d}\xi_1 \nonumber \\
& = \dfrac{4\lambda_\mu \left( y_\ell - X_\ell(t)\right)\left( y_j - X_j(t)\right)}{(1+\mu)^2R^2}\int_0^\infty \int_{-\infty}^\infty \int_{-\infty}^\infty \left\lbrace \xi_1 + \mathbf{V}(t)\cdot \left( \dfrac{\mathbf{y}-\mathbf{X}(t)}{R}\right)\right\rbrace^3 \nonumber\\
& \quad \times F_\mu \left( - \left(\xi_1 + \mathbf{u}\cdot \frac{\mathbf{y}-\mathbf{X}(t)}{R}\right) \left( \dfrac{\mathbf{y}-\mathbf{X}(t)}{R}\right) +
\sum_{i=2}^3
\left(\xi_i - \mathbf{u}\cdot \boldsymbol{\eta}_i  \right) \boldsymbol{\eta}_i \right) \mathrm{d}\xi_3 \, \mathrm{d}\xi_2 \, \mathrm{d}\xi_1 \,.
\end{align*}
Integrating $\phi_{\ell,j}(\mathbf{y}) \,
\Delta t \, \mathrm{d}A$ over the surface $\mathcal{S}(\mathbf{X}(t),R)$ of the heavy molecule, and matching with second moment terms  $\beta_{\ell,j}^2(t) \Delta t$ in equation~(\ref{thmproof1}), we obtain 
$3\times 3$ matrix $\boldsymbol{\beta}^2$ for which we then need to find the matrix root. However, since $\beta_{\ell,j}^2(t)=\beta_{j,\ell}^2(t)$ by symmetry of equation~(\ref{thmproof7}), the matrix~$\boldsymbol{\beta}^2$ is real symmetric and hence such a root exists. 
\end{proof}

\noindent
Theorem \ref{theorem1} has given us a procedure by which to determine the drift and diffusion terms once a distribution $F_\mu$ is known. While this procedure is general, it does not convey much insight. For this reason, we will consider distributions for which we can obtain a better analytical understanding of the drift and diffusion terms. As we will see, this will still involve a rather generic class of distributions, holding physically relevant examples -- such as the Gaussian distribution~(\ref{fvel3Dexp}) -- as special cases.

\subsection{Generalizations of the Gaussian distribution}

To make further progress, note that Theorem \ref{theorem1} has involved finding a triple integral over the moments of certain generic probability density functions. There is always one direction of integration along which a moment is contributed, and two which are effectively inert. Define the \textit{marginal} density function
\begin{equation}\label{mdfIntegral}
\widehat{f}_\mu(\xi_1) = \int_{-\infty}^\infty \int_{-\infty}^\infty f_\mu \!\left(\! - \xi_1 \left( \dfrac{\mathbf{y}-\mathbf{X}(t)}{R}\right) + \xi_2 \boldsymbol{\eta}_2 + \xi_3 \boldsymbol{\eta}_3 \right)\mathrm{d}\xi_3 \, \mathrm{d}\xi_2 \,,
\end{equation}
where we have integrated over the two inert directions. Assuming that this integration can be carried out, let us assume the result takes the functional form 
\begin{equation}\label{mdfForm}
\widehat{f}_\mu(\xi) = \dfrac{1}{\sigma_\mu} \,\mathcal{F}\!\left( \!\dfrac{\xi}{\sigma_\mu}\!\right).
\end{equation}
Here $\mathcal{F}$ is still kept fairly general, and is assumed to inherit properties of the density function $F_\mu$ in~\eqref{Fproperties}. We also assume the marginal density function \eqref{mdfForm} satisfies the moment relations
\begin{equation}\label{mdfxnScale}
\int_{a}^\infty \xi^\kappa \, \widehat{f}_\mu(\xi) \, \mathrm{d}\xi = \mathcal{F}_\kappa \, \sigma_\mu^\kappa + \mathcal{O}\!\left( \!\dfrac{1}{\sigma_\mu}\!\right)
\quad
\mbox{as} \;\;
\sigma_\mu \rightarrow \infty,
\quad
\mbox{for} \;\;
\kappa = 0,1,2,3
\end{equation}
where the leading term is independent of the specific finite value of $a$. The asymptotic relation~(\ref{mdfxnScale}) follows for many distributions from 
$$
\int_{a}^\infty \xi^\kappa \, \widehat{f}_\mu(\xi) \, \mathrm{d}\xi 
= 
\int_{a}^\infty \xi^\kappa \dfrac{1}{\sigma_\mu} \, \mathcal{F}\!\left( \!\dfrac{\xi}{\sigma_\mu}\!\right) \mathrm{d}\xi = 
\sigma_\mu^\kappa 
\int_{a/\sigma_\mu}^\infty q^\kappa \, \mathcal{F}\!\left( q\right) \mathrm{d}q
=
\mathcal{F}_\kappa \, \sigma_\mu^\kappa
-
\sigma_\mu^\kappa 
\int_0^{a/\sigma_\mu} q^\kappa \, \mathcal{F}\!\left( q\right) \mathrm{d}q
$$
where
\begin{equation}
\label{fkformula}    
\mathcal{F}_\kappa
=
\int_{0}^\infty q^\kappa \, \mathcal{F}\!\left( q\right) \mathrm{d}q.
\end{equation}
A list of leading-order terms in~\eqref{mdfxnScale} is provided for a variety of specific distributions in Table~\ref{tab:mdfxnscalinglist}. 

\begin{table}
\caption{{\it Asymptotic scale constants in equation~$(\ref{mdfxnScale})$ for various marginal density functions $\mathcal{F}$, given by~$(\ref{fkformula})$.} \label{tab:mdfxnscalinglist}}
\begin{center}
\begin{tabular}{lccccc}
\rowcolor{jobcolor} \color{white}{Distribution} & \color{white}{$\mathcal{F}(q)$} & \color{white}{$\mathcal{F}_0$} & \color{white}{$\mathcal{F}_1$} & \color{white}{$\mathcal{F}_2$} & \color{white}{$\mathcal{F}_3$}\\
Gaussian  & \raise -4.1mm \hbox{\rule{0pt}{10.5mm}}$\dfrac{1}{\sqrt{2\pi}} \, \exp\!\left( \!-\dfrac{q^2}{2}\right)$  & {\hskip -2mm}
$\dfrac{1}{2}$ {\hskip -2mm} & $\dfrac{1}{\sqrt{2\pi}}$ &  $\dfrac{1}{2}$  &  $\sqrt{\dfrac{2}{\pi}}$ \\
      \hline
Laplace & \raise -4.1mm \hbox{\rule{0pt}{10.2mm}}$\dfrac{1}{2\sqrt{2}}\, \exp\!\left(\! -\dfrac{|q|}{\sqrt{2}}\right)$  & {\hskip -2mm} $\dfrac{1}{2}$ {\hskip -2mm} & $\dfrac{1}{\sqrt{2}}$ &  $2$  &  $6\sqrt{2}$ \\
\hline
\hskip -1mm \raise -2mm \hbox{\hsize=3.8cm\vbox{generalized Gaussian \hfill ($\theta=3$ in (\ref{gengauss})) \hfill}} {\hskip -3mm} & \raise -5.4mm \hbox{\rule{0pt}{12.6mm}}$\dfrac{3}{2 \sqrt{2}\,\Gamma(1/3)} \, \exp\!\left(\! - \left|\dfrac{q}{\sqrt{2}}\right|^{3}\right)$  {\hskip -1mm} & {\hskip -2mm} $\dfrac{1}{2}$ {\hskip -2mm} & {\hskip -2mm} $\dfrac{2^{-5/6}\Gamma(5/6)}{\sqrt{\pi}}$ {\hskip -2mm} & {\hskip -2mm}  $\dfrac{1}{\Gamma(1/3)}$  {\hskip -2mm} & {\hskip -2mm}  $\dfrac{\sqrt{2}}{3}$ \\
\hline
hyperbolic secant & \raise -4.1mm  \hbox{\rule{0pt}{10.1mm}}$\dfrac{1}{2} \, \mathrm{sech} \!\left( \dfrac{\pi q}{2}\right)$  & {\hskip -2mm} $\dfrac{1}{2}$ {\hskip -2mm} & $0.37122$ &  $\dfrac{1}{2}$  &  $0.97464$ \\
\hline
uniform &  \raise -6.4mm \hbox{\rule{0pt}{15mm}}$\begin{cases} \dfrac{1}{2} & \text{if}~|q|\leq 1 \\
0 & \text{otherwise}
\end{cases}$  & {\hskip -2mm} $\dfrac{1}{2}$ {\hskip -2mm} & $\dfrac{1}{4}$ &  $\dfrac{1}{6}$  &  $\dfrac{1}{8}$ \\
\hline
\end{tabular}
\end{center}
\end{table}

\begin{theorem}\label{theorem2}
Assume that the marginal density function $\widehat{f}_\mu(\xi_1)$ in \eqref{mdfIntegral} takes the form \eqref{mdfForm} and satisfies the asymptotic scalings \eqref{mdfxnScale}. The It\^{o} stochastic differential equation \eqref{SDEmainX}--\eqref{SDEmainY} for the motion of the heavy solute particle has drift and diffusion terms which take the form 
\begin{subequations}\label{Thm2alphabeta}
\begin{equation}
\boldsymbol{\alpha}(\mathbf{X},\mathbf{V}) = - \sqrt{2\pi} \, \mathcal{F}_1 \, \gamma \, \Big( \mathbf{V} - \mathbf{u}(\mathbf{X}) - \boldsymbol{A}_R\big[\mathbf{u}(\mathbf{y})-\mathbf{u}(\mathbf{X})\big]\Big) \,+\, \mathcal{O}\!\left(\! \dfrac{1}{\sigma_\mu}\!\right)\label{Thm2alpha}
\end{equation}
and 
\begin{equation}
\boldsymbol{\beta}(\mathbf{X},\mathbf{V}) = \sqrt{\sqrt{2\pi}\,\mathcal{F}_3 \, D} \, \gamma \, \mathbf{I} \, + \, \mathcal{O}\!\left(\! \dfrac{1}{\sigma_\mu}\!\right)
\qquad
\mbox{as}
\quad \sigma_\mu \rightarrow \infty,\label{Thm2beta}
\end{equation}
\end{subequations}
where operator $\boldsymbol{A}_R$ given by
\begin{equation} \label{A_correction}
\boldsymbol{A}_R[\mathbf{w}] = \dfrac{3}{4\pi R^2}\int_{\mathcal{S}(\mathbf{X}(t),R)} \dfrac{\mathbf{y}-\mathbf{X}(t)}{R}\left\lbrace \mathbf{w}(\mathbf{y}) \cdot \dfrac{\mathbf{y}-\mathbf{X}(t)}{R}\right\rbrace \, \mathrm{d}A
\end{equation}
incorporates the geometry of the flow and finite-size effects of the heavy particle. 
\end{theorem}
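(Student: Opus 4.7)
The plan is to begin from the surface-integral representations~\eqref{alphageneral}--\eqref{betageneral} supplied by Theorem~\ref{theorem1}, reduce the nested triple integrals inside $\psi_\ell$ and $\phi_{\ell,j}$ to a one-dimensional moment integral against the marginal density $\widehat{f}_\mu$, extract the leading behaviour in~$\sigma_\mu$ via the moment asymptotics~\eqref{mdfxnScale}, and finally evaluate the remaining surface integrals with elementary identities for unit normals on a sphere.

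First, in the integrand of~\eqref{thmproof6}--\eqref{thmproof7} I would apply the unit-Jacobian change of variables $\tilde\xi_1 = \xi_1 + \mathbf{u}\cdot\mathbf{n}$ and $\tilde\xi_i = \xi_i - \mathbf{u}\cdot\boldsymbol{\eta}_i$ for $i=2,3$, where $\mathbf{n} = (\mathbf{y}-\mathbf{X}(t))/R$ is the outward unit normal. The argument of $F_\mu$ collapses to $-\tilde\xi_1\mathbf{n} + \tilde\xi_2\boldsymbol{\eta}_2 + \tilde\xi_3\boldsymbol{\eta}_3$, the polynomial factor becomes $\{\tilde\xi_1 + w\}^k$ with $w = (\mathbf{V}-\mathbf{u}(\mathbf{y}))\cdot\mathbf{n}$ (and $k=2$ or $3$), and the lower limit on the first variable shifts to $b := \mathbf{u}(\mathbf{y})\cdot\mathbf{n}$. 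Integrating over $\tilde\xi_2$ and $\tilde\xi_3$ produces the marginal $\widehat{f}_\mu(-\tilde\xi_1)$ of~\eqref{mdfIntegral}. Substituting $\eta = -\tilde\xi_1$ and using the evenness of $\mathcal{F}$ (as in every case tabulated in Table~\ref{tab:mdfxnscalinglist}), the triple integral reduces to the polynomial moment $\int_b^\infty (\eta+w)^k\widehat{f}_\mu(\eta)\,\mathrm{d}\eta$. Expanding and invoking~\eqref{mdfxnScale} with $a=b$, which is bounded independently of $\sigma_\mu$, gives
\[
\int_b^\infty (\eta+w)^2\widehat{f}_\mu(\eta)\,\mathrm{d}\eta = \mathcal{F}_2\sigma_\mu^2 + 2w\mathcal{F}_1\sigma_\mu + w^2\mathcal{F}_0 + \mathcal{O}(1/\sigma_\mu),
\]
together with the analogous cubic expansion carrying $\mathcal{F}_3\sigma_\mu^3$ as the leading term for $\phi_{\ell,j}$.

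A direct computation from~\eqref{lambda3Dexp} and~\eqref{sigmaold} gives $\lambda_\mu\sigma_\mu^2/(1+\mu) = 3\gamma\sigma_\mu/(8R^2\sqrt{2\pi})$ and $\lambda_\mu\sigma_\mu^3/(1+\mu)^2 = 3D\gamma^2/(8R^2\sqrt{2\pi})$. Thus the apparent $\mathcal{O}(\sigma_\mu)$ contribution to $\psi_\ell$ is a $\mathbf{y}$-independent constant multiple of $n_\ell$, and vanishes upon integration over the sphere because $\int_{\mathcal{S}(\mathbf{X},R)} n_\ell\,\mathrm{d}A = 0$. The surviving $\mathcal{O}(1)$ piece is proportional to $\int_{\mathcal{S}(\mathbf{X},R)} n_\ell(\mathbf{V}-\mathbf{u}(\mathbf{y}))\cdot\mathbf{n}\,\mathrm{d}A$; decomposing $\mathbf{V}-\mathbf{u}(\mathbf{y}) = (\mathbf{V}-\mathbf{u}(\mathbf{X})) - (\mathbf{u}(\mathbf{y})-\mathbf{u}(\mathbf{X}))$ and using the identity $\int_{\mathcal{S}(\mathbf{X},R)} n_\ell n_j\,\mathrm{d}A = (4\pi R^2/3)\delta_{\ell j}$, the first summand produces the bulk drag $-\sqrt{2\pi}\mathcal{F}_1\gamma\bigl(V_\ell - u_\ell(\mathbf{X})\bigr)$ after the prefactors combine, while the second summand reproduces the finite-size correction built from the operator~$\boldsymbol{A}_R$ in~\eqref{A_correction}. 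For the diffusion, the leading $\mathcal{F}_3\sigma_\mu^3 n_\ell n_j$ contribution integrates through the same $\delta_{\ell j}$ identity to $\beta_{\ell,j}^2 = \sqrt{2\pi}\mathcal{F}_3 D\gamma^2\delta_{\ell j}$, whose principal square root is the diagonal matrix $\sqrt{\sqrt{2\pi}\mathcal{F}_3 D}\,\gamma\,\mathbf{I}$ in~\eqref{Thm2beta}.

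The main obstacle is that naive power counting in $\sigma_\mu$ is misleading for the drift: the formally leading $\mathcal{O}(\sigma_\mu)$ contribution is killed by sphere symmetry, so one must track the subleading $\mathcal{O}(1)$ term of the moment expansion to capture the true leading behaviour, and then combine the $\mathbf{u}(\mathbf{y})$-dependence under the surface integral with the split around $\mathbf{X}$ to expose the $\boldsymbol{A}_R$ geometric correction. Once this cancellation is recognised, the remaining bookkeeping is routine.
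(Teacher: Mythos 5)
Your proposal is correct and follows essentially the same route as the paper's proof: reduce the triple integrals in $\psi_\ell$ and $\phi_{\ell,j}$ to one-dimensional moments of the marginal density via the shift $\xi \mapsto \xi + \mathbf{u}\cdot\mathbf{n}$, apply the moment asymptotics \eqref{mdfxnScale}, observe that the formally leading $\mathcal{F}_2\sigma_\mu$ contribution integrates to zero over the sphere, and evaluate the surviving terms with the identities \eqref{sphereformulas} to expose the drag and the $\boldsymbol{A}_R$ correction. The only cosmetic difference is your extra substitution $\eta=-\tilde\xi_1$ undone by evenness of $\mathcal{F}$, which the paper avoids because the marginal \eqref{mdfIntegral} is already defined with the velocity component taken along $-\mathbf{n}$.
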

\begin{proof}
Using \eqref{mdfIntegral} in \eqref{thmproof6}, we obtain
$$
\psi_\ell(\mathbf{y}) = - \dfrac{2\lambda_\mu \left( y_\ell - X_\ell(t)\right)}{(1+\mu)R } \int_0^\infty \left\lbrace \xi_1 + \mathbf{V}(t)\cdot \left( \dfrac{\mathbf{y}-\mathbf{X}(t)}{R}\right)\right\rbrace^2  \widehat{f}_\mu \left( \xi_1 + \mathbf{u}\cdot \frac{\mathbf{y}-\mathbf{X}(t)}{R} \right) \mathrm{d}\xi_1. 
$$
Using~(\ref{lambda3Dexp}) and (\ref{sigmaold}), we get
$$
\psi_\ell(\mathbf{y}) = - \dfrac{3\gamma\left( y_\ell - X_\ell(t)\right)}{4 \sigma_\mu R^3 \sqrt{2\pi}} 
\int_{\mathbf{u}\cdot \frac{\mathbf{y}-\mathbf{X}(t)}{R}}^\infty \left\lbrace \xi + \left(\mathbf{V}(t) - \mathbf{u}\right)\cdot \left( \dfrac{\mathbf{y}-\mathbf{X}(t)}{R}\right)\right\rbrace^2  \widehat{f}_\mu(\xi) \, \mathrm{d}\xi \,.
$$
Using \eqref{mdfxnScale}, we obtain
$$
\psi_\ell(\mathbf{y}) = - \dfrac{3\gamma}{4 R^2 \sqrt{2\pi}} \left(\dfrac{y_\ell - X_\ell(t)}{R}\right)
\left( \mathcal{F}_2 \sigma_\mu + 2\mathcal{F}_1 \left(\mathbf{V}(t) - \mathbf{u}\right)\cdot \left( \dfrac{\mathbf{y}-\mathbf{X}(t)}{R}\right) + \mathcal{O}\!\left( \!\dfrac{1}{\sigma_\mu}\!\right) 
\right)
$$
as $\sigma_\mu \rightarrow \infty$. Substituting into~(\ref{alphageneral}) and using the following integration results on the surface of the sphere
\begin{equation*}
\int_{\mathcal{S}(\mathbf{X}(t),R)}\left(\dfrac{y_\ell - X_\ell(t)}{R}\right)\mathrm{d}\mathbf{y} = 0 \quad \text{for all}\quad \ell=1,2,3\,, \quad \text{and}
\end{equation*}
\begin{equation}
\int_{\mathcal{S}(\mathbf{X}(t),R)}\left(\dfrac{y_\ell - X_\ell(t)}{R}\right)\left(\dfrac{y_j - X_j(t)}{R}\right)\mathrm{d}\mathbf{y} = \dfrac{4\pi R^2}{3} \, \delta_{\ell,j} \quad \text{for all}\quad \ell,j=1,2,3 \, ,
\label{sphereformulas}
\end{equation}
then the drift term $\boldsymbol{\alpha}$ can be evaluated as
\begin{align*}
\boldsymbol{\alpha}(\mathbf{X},\mathbf{V}) & =  - \dfrac{3\mathcal{F}_1\gamma}{2 R^2 \sqrt{2\pi}}\int_{\mathcal{S}(\mathbf{X}(t),R)}\left( \dfrac{\mathbf{y}-\mathbf{X}(t)}{R}\right)\left\lbrace \left(\mathbf{V}(t) - \mathbf{u}(\mathbf{y})\right)\cdot \left( \dfrac{\mathbf{y}-\mathbf{X}(t)}{R}\right)  \right\rbrace\mathrm{d}\mathbf{y} + \mathcal{O}\!\left( \!\dfrac{1}{\sigma_\mu}\!\right) \\
& = - \dfrac{3\mathcal{F}_1\gamma}{2 R^2 \sqrt{2\pi}}\left[ \int_{\mathcal{S}(\mathbf{X}(t),R)}\left( \dfrac{\mathbf{y}-\mathbf{X}(t)}{R}\right)\left\lbrace \left(\mathbf{V}(t) -\mathbf{u}(\mathbf{X}(t))\right)\cdot \left( \dfrac{\mathbf{y}-\mathbf{X}(t)}{R}\right)  \right\rbrace\mathrm{d}\mathbf{y}
\right. \\
& \qquad \quad \left. - \int_{\mathcal{S}(\mathbf{X}(t),R)}\left( \dfrac{\mathbf{y}-\mathbf{X}(t)}{R}\right)\left\lbrace \left(\mathbf{u}(\mathbf{y}) - \mathbf{u}(\mathbf{X}(t)) \right)\cdot \left( \dfrac{\mathbf{y}-\mathbf{X}(t)}{R}\right) \right\rbrace \mathrm{d}\mathbf{y}
 \right] + \mathcal{O}\!\left(\! \dfrac{1}{\sigma_\mu}\!\right) ,
\end{align*}
as $\sigma_\mu \rightarrow \infty$, which simplifies to~(\ref{Thm2alpha}). Next, to simplify the diffusion terms, we use \eqref{mdfIntegral}, \eqref{mdfxnScale}, (\ref{lambda3Dexp}) and (\ref{sigmaold}) in \eqref{thmproof7} to obtain
\begin{align}
\phi_{\ell,j}(\mathbf{y}) & = \dfrac{4\lambda_\mu \left( y_\ell - X_\ell(t)\right)\left( y_j - X_j(t)\right)}{(1+\mu)^2R^2} \int_{\mathbf{u}\cdot \frac{\mathbf{y}-\mathbf{X}(t)}{R}}^\infty \left\lbrace \xi + \left(\mathbf{V}(t) - \mathbf{u}\right)\cdot \left( \dfrac{\mathbf{y}-\mathbf{X}(t)}{R}\right)\right\rbrace^3  \widehat{f}_\mu (\xi) \, \mathrm{d}\xi \nonumber \\
& =  \dfrac{3D\gamma^2}{2 R^2 \sqrt{2\pi}}\dfrac{\left( y_\ell - X_\ell(t)\right)\left( y_j - X_j(t)\right)}{R^2}  \left\lbrace \mathcal{F}_3 + \mathcal{O}\!\left( \!\dfrac{1}{\sigma_\mu}\! \right) \right\rbrace \nonumber\,.
\end{align}
Substituting into~(\ref{betageneral}) and using~(\ref{sphereformulas}), we get
\begin{align*}
\beta_{\ell,j}^2 & = \dfrac{3D\gamma^2}{2R^2\sqrt{2\pi}}\int_{\mathcal{S}(\mathbf{X}(t),R)} \dfrac{\left( y_\ell - X_\ell(t)\right)\left( y_j - X_j(t)\right)}{R^2}  \left\lbrace \mathcal{F}_3 + \mathcal{O}\!\left( \!\dfrac{1}{\sigma_\mu} \!\right) \right\rbrace \mathrm{d}A\\
& = \sqrt{2\pi} \, \mathcal{F}_3 \, D \, \gamma^2 \, \delta_{\ell,j} + \mathcal{O}\!\left(\! \dfrac{1}{\sigma_\mu}\! \right) \quad \text{for all}\quad \ell,j=1,2,3\,.
\end{align*}
As the squared diffusion tensor $\boldsymbol{\beta}^2$ is diagonal up to leading order, and since the full diffusion tensor $\boldsymbol{\beta}^2$ is real symmetric, we have that the square root matrix exists and is diagonal up to leading order, resulting in~(\ref{Thm2beta}).
\end{proof}

\begin{corollary}\label{CorMB}
If the light particles are distributed according to Maxwell-Boltzmann statistics~(\ref{fvel3Dexp}), then $\mathcal{F}$ is Gaussian, $\mathcal{F}_1 = 1/\sqrt{2\pi}$, $\mathcal{F}_3 = \sqrt{2/\pi}$ and formulas \eqref{Thm2alphabeta} reduce to
\begin{subequations}\label{MBDistnalphabeta}
\begin{align}
\boldsymbol{\alpha}(\mathbf{X},\mathbf{V}) &= - \gamma \left\lbrace \mathbf{V} - \mathbf{u}(\mathbf{X}) - \boldsymbol{A}[\mathbf{u}(\mathbf{y})-\mathbf{u}(\mathbf{X})]\right\rbrace + \mathcal{O}\!\left(\! \dfrac{1}{\sigma_\mu}\!\right)\,, \\
\boldsymbol{\beta}(\mathbf{X},\mathbf{V}) &= \sqrt{2D}\, \gamma \, \mathbf{I} + \mathcal{O}\!\left(\! \dfrac{1}{\sigma_\mu}\!\right),
\qquad \mbox{as} \quad \sigma_\mu \rightarrow \infty.
\end{align}
\end{subequations}
\end{corollary}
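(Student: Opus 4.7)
The plan is to treat this as the substitution of three easy ingredients into Theorem~\ref{theorem2}: the marginal density, the two scale constants $\mathcal{F}_1,\mathcal{F}_3$, and finally the arithmetic collapsing $\sqrt{2\pi}\,\mathcal{F}_1$ and $\sqrt{2\pi}\,\mathcal{F}_3$ to clean numerical factors.

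First I would verify that when $f_\mu$ is the Maxwell--Boltzmann density~\eqref{fvel3Dexp}, the marginal defined by~\eqref{mdfIntegral} is a one-dimensional Gaussian with the same variance $\sigma_\mu^2$. Because $\{(\mathbf{y}-\mathbf{X})/R,\boldsymbol{\eta}_2,\boldsymbol{\eta}_3\}$ is an orthonormal basis, the linear substitution appearing in~\eqref{mdfIntegral} is a rotation, so the product-Gaussian and rotational-invariance structure of~\eqref{fvel3Dexp} is preserved; integrating out $\xi_2,\xi_3$ leaves $\widehat{f}_\mu(\xi)=(\sigma_\mu\sqrt{2\pi})^{-1}\exp(-\xi^2/(2\sigma_\mu^2))$. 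This matches~\eqref{mdfForm} with $\mathcal{F}(q)=(2\pi)^{-1/2}\exp(-q^2/2)$, so $\mathcal{F}$ is indeed Gaussian, as listed in Table~\ref{tab:mdfxnscalinglist}.

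Second I would compute $\mathcal{F}_1$ and $\mathcal{F}_3$ from~\eqref{fkformula}. The integral $\int_0^\infty q\,\mathcal{F}(q)\,\mathrm{d}q$ is immediate from the antiderivative $-\mathcal{F}(q)$ and gives $\mathcal{F}_1=1/\sqrt{2\pi}$. For $\mathcal{F}_3$, the substitution $u=q^2/2$ reduces the integral to $(2\pi)^{-1/2}\int_0^\infty 2u\,e^{-u}\,\mathrm{d}u=2/\sqrt{2\pi}=\sqrt{2/\pi}$. It is worth noting in passing that the asymptotic identity~\eqref{mdfxnScale} holds for this $\mathcal{F}$ by virtue of the rapid Gaussian tail decay, so the hypothesis of Theorem~\ref{theorem2} is in force.

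Third I would substitute these constants into~\eqref{Thm2alphabeta}. The drift prefactor simplifies as $\sqrt{2\pi}\,\mathcal{F}_1\,\gamma=\gamma$, yielding the first line of~\eqref{MBDistnalphabeta} with $\boldsymbol{A}\equiv\boldsymbol{A}_R$. For the diffusion coefficient, $\sqrt{2\pi}\,\mathcal{F}_3=\sqrt{2\pi}\cdot\sqrt{2/\pi}=2$, so $\sqrt{\sqrt{2\pi}\,\mathcal{F}_3\,D}\,\gamma=\sqrt{2D}\,\gamma$, giving the second line. There is no real obstacle in the proof; the only subtlety worth spelling out is the appeal to rotational invariance of~\eqref{fvel3Dexp} that identifies the marginal along the collision-normal direction without having to redo the integration in an explicit coordinate system.
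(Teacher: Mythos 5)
Your proposal is correct and follows exactly the route the paper intends: the paper states this corollary without an explicit proof, relying on the Gaussian entries $\mathcal{F}_1=1/\sqrt{2\pi}$ and $\mathcal{F}_3=\sqrt{2/\pi}$ from Table~\ref{tab:mdfxnscalinglist} and direct substitution into Theorem~\ref{theorem2}. Your additional checks (rotational invariance giving the Gaussian marginal in the collision-normal frame, and the arithmetic $\sqrt{2\pi}\,\mathcal{F}_1=1$, $\sqrt{2\pi}\,\mathcal{F}_3=2$) are all accurate and simply make explicit what the paper leaves implicit.
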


\noindent
Corollary~\ref{CorMB} shows that Theorem~\ref{theorem2} reduces to the result (\ref{langeqV1}) for the special case ${\mathbf u} \equiv {\mathbf 0}$. Next, we consider more general velocity distributions $F_\mu(\mathbf{q})$ with thin or heavy tails. For any fixed shape parameter $\theta >0$, a generalised Gaussian distribution takes the form 
\begin{equation}\label{genGauss1}
F_\mu(\mathbf{q};\theta) = \left(\dfrac{\theta}{2\sqrt{2}\,\Gamma(1/\theta)\,\sigma_\mu}\right)^{\! 3}\exp\!\left( \!-\left(\frac{\parallel\!\mathbf{q}\!\parallel}{\sqrt{2} \, \sigma_\mu}\right)^{\!\theta}\right).
\end{equation}
This may seem like a distribution that will not separate into the product form~(\ref{productform}) for general $\theta$, making the calculation of a marginal distribution complicated or even impossible. However, this all depends upon how we define the vector norm $\parallel\!\mathbf{q}\!\parallel$. Let us take 
$$
\parallel\!\mathbf{q}\!\parallel
=
\parallel\!\mathbf{q}\!\parallel_\theta  := \left(\sum_{\ell =1}^3 |q_\ell|^\theta \right)^{\! 1/\theta}.
$$ Then, \eqref{genGauss1} becomes
\begin{equation}\label{genGauss2}
F_\mu(\mathbf{q};\theta) 
= 
\prod_{\ell =1}^3 \dfrac{\theta}{2\sqrt{2} \,\Gamma(1/\theta)\,\sigma_\mu}\,\exp\!\left( \!-\left(\frac{|q_\ell|}{\sqrt{2}\,\sigma_\mu}\right)^{\!\theta}\right)\,.
\end{equation}
Although the $\theta=2$ (Maxwellian) distribution is most common in the statistical physics literature, we remark that other values of $\theta$ have applications. In particular, heavy-tailed distributions with $\theta=1$ (Laplace)~\cite{kohlstedt2005velocity} and $\theta = 3/2$~\cite{rouyer2000velocity, aranson2002velocity, van1998velocity} are arguably better distributions for certain granular gases in experiments with high-energy particles. Considering more detailed all-atom molecular dynamics models of solvent~\cite{mdbrownian,dna1,Shin:2010:BMM}, non-Gaussian distributions of forces can be estimated from simulations and used to parameterize coarse-grained Brownian dynamics and Langevin dynamics models~\cite{cmdnongaussian,utterson}. 

\begin{theorem}\label{theorem3}
Let the distribution of solvent particle velocities follow a generalised Gaussian distribution of the form \eqref{genGauss1} with vector norm $\parallel\!\mathbf{q}\!\parallel=\parallel\!\mathbf{q}\!\parallel_\theta$, then drift and diffusion terms in \eqref{Thm2alphabeta} reduce to
\begin{subequations}\label{GenGaussianalphabeta}
\begin{align}
\boldsymbol{\alpha}(\mathbf{X},\mathbf{V}) & = - \chi_1(\theta) \, \gamma \left\lbrace \mathbf{V} - \mathbf{u}(\mathbf{X}) - \boldsymbol{A}[\mathbf{u}(\mathbf{y})-\mathbf{u}(\mathbf{X})]\right\rbrace + \mathcal{O}\!\left( \!\dfrac{1}{\sigma_\mu}\!\right),\\
\boldsymbol{\beta}(\mathbf{X},\mathbf{V}) & = \chi_2(\theta)\sqrt{2 D} \, \gamma \, \mathbf{I} + \mathcal{O}\!\left( \!\dfrac{1}{\sigma_\mu}\!\right),
\end{align}
\end{subequations}
as $\sigma_\mu \rightarrow \infty$, where the scale factors read
\begin{equation}\label{chiscalefactors}
\chi_1(\theta) = 2^{(2-\theta)/\theta} \, \Gamma\!\left( \dfrac{2+\theta}{2\theta}\right) \quad \text{and} \quad \chi_2(\theta) = \dfrac{8^{1/\theta}}{2\pi^{1/4}}\sqrt{\Gamma\!\left( \dfrac{2+\theta}{2\theta}\right)\Gamma\!\left( \dfrac{4+\theta}{2\theta}\right)} \,.
\end{equation}
Here $\Gamma$ denotes the gamma function, and the scale factors are well-defined for all $\theta >0$.
\end{theorem}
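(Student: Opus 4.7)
The plan is to specialise Theorem~\ref{theorem2} to the generalised Gaussian \eqref{genGauss1}--\eqref{genGauss2} by identifying the one-dimensional factor $\mathcal{F}$ of~\eqref{mdfForm}, evaluating the scale constants $\mathcal{F}_1$ and $\mathcal{F}_3$, and then simplifying the resulting gamma-function prefactors using the Legendre duplication formula. Because the $\|\cdot\|_\theta$ norm makes $F_\mu(\mathbf{q};\theta)$ separable as in~\eqref{genGauss2}, I would read off
$$
\mathcal{F}(q)=\frac{\theta}{2\sqrt{2}\,\Gamma(1/\theta)}\,\exp\!\left(-\left|\frac{q}{\sqrt{2}}\right|^{\theta}\right),
$$
which coincides with the marginal~\eqref{mdfIntegral} whenever the normal direction $(\mathbf{y}-\mathbf{X})/R$ is aligned with a coordinate axis; this is the natural $\mathcal{F}$ to feed into Theorem~\ref{theorem2} and is consistent with the $\theta=1,2,3$ rows of Table~\ref{tab:mdfxnscalinglist}.

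Next, I would evaluate $\mathcal{F}_\kappa=\int_0^\infty q^{\kappa}\,\mathcal{F}(q)\,\mathrm{d}q$ via the substitution $u=(q/\sqrt{2})^{\theta}$, which turns~\eqref{fkformula} into a standard gamma integral. A short computation produces
$$
\mathcal{F}_1=\frac{\Gamma(2/\theta)}{\sqrt{2}\,\Gamma(1/\theta)}\,,\qquad \mathcal{F}_3=\frac{\sqrt{2}\,\Gamma(4/\theta)}{\Gamma(1/\theta)}\,,
$$
and both values can be cross-checked against the specific entries of Table~\ref{tab:mdfxnscalinglist}. Inserting them into~\eqref{Thm2alpha}--\eqref{Thm2beta} yields a drift prefactor $\sqrt{2\pi}\,\mathcal{F}_1=\sqrt{\pi}\,\Gamma(2/\theta)/\Gamma(1/\theta)$ and a squared diffusion prefactor $\sqrt{2\pi}\,\mathcal{F}_3\,D\,\gamma^2=2\sqrt{\pi}\,\Gamma(4/\theta)\,D\,\gamma^2/\Gamma(1/\theta)$.

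The final step is algebraic. Applying the Legendre duplication formula $\Gamma(z)\,\Gamma(z+\tfrac12)=2^{1-2z}\sqrt{\pi}\,\Gamma(2z)$ with $z=1/\theta$ converts $\sqrt{\pi}\,\Gamma(2/\theta)/\Gamma(1/\theta)$ into $2^{(2-\theta)/\theta}\,\Gamma((2+\theta)/(2\theta))=\chi_1(\theta)$. Applying it a second time with $z=2/\theta$ and combining with the previous identity converts the squared diffusion prefactor into $\chi_2(\theta)^{2}\cdot 2D\gamma^{2}$, hence $\boldsymbol{\beta}(\mathbf{X},\mathbf{V})=\chi_2(\theta)\sqrt{2D}\,\gamma\,\mathbf{I}+\mathcal{O}(1/\sigma_\mu)$. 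The $\mathcal{O}(1/\sigma_\mu)$ remainders propagate unchanged from Theorem~\ref{theorem2}, and well-posedness of $\chi_1,\chi_2$ for every $\theta>0$ follows from $(2+\theta)/(2\theta)$ and $(4+\theta)/(2\theta)$ being strictly positive.

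The main obstacle I anticipate lies in the first step: for $\theta\ne 2$ the norm $\|\cdot\|_\theta$ is not rotationally symmetric, so the marginal~\eqref{mdfIntegral} genuinely depends on the orientation of $(\mathbf{y}-\mathbf{X})/R$, whereas the scale factors in the statement do not. Justifying that the one-dimensional factor above is the effective $\mathcal{F}$ felt by the surface integrals in Theorem~\ref{theorem1} --- presumably because the subsequent integration over the sphere averages the directional dependence out at leading order, or because only the ``diagonal'' moments of the product form survive the collision kinematics --- is the step that would require most care; everything thereafter is routine manipulation of gamma functions.
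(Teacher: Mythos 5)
Your proposal is correct and follows essentially the same route as the paper: extract the one-dimensional factor \eqref{gengauss} from the separable form \eqref{genGauss2}, evaluate $\mathcal{F}_1$ and $\mathcal{F}_3$ via \eqref{fkformula}, and apply Theorem~\ref{theorem2}; your values $\mathcal{F}_1=\Gamma(2/\theta)/(\sqrt{2}\,\Gamma(1/\theta))$ and $\mathcal{F}_3=\sqrt{2}\,\Gamma(4/\theta)/\Gamma(1/\theta)$ coincide with the paper's stated expressions after exactly the Legendre duplication step you describe, and yield \eqref{chiscalefactors}. The orientation-dependence of the marginal \eqref{mdfIntegral} for $\theta\neq 2$ that you flag at the end is a genuine subtlety, but the paper's own proof does not address it either --- it simply computes the marginal in the axis-aligned frame and invokes Theorem~\ref{theorem2} --- so your attempt is no less complete than the published argument on this point.
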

\begin{proof}
Using~\eqref{genGauss2}, we can integrate over the two inert coordinates in~\eqref{mdfIntegral}, constructing the marginal distribution function according to \eqref{mdfForm},
\begin{equation}
\mathcal{F}(q) 
= 
\dfrac{\theta}{2\sqrt{2}\,\Gamma(1/\theta)}\,\exp\!\left(\! -\left(\frac{|q|}{\sqrt{2}}\right)^{\!\theta}\right)
\label{gengauss}
\end{equation}
with $\theta >0$. Using~(\ref{fkformula}), we have
\begin{equation*}
\mathcal{F}_1 = \dfrac{2^{(4-3\theta)/(2\theta)}}{\sqrt{\pi}}
\,  \Gamma\!\left( \dfrac{2+\theta}{2\theta}\right) \quad \text{and} \quad \mathcal{F}_3 = \dfrac{8^{(4-\theta)/(2\theta)}}{\pi} \, \Gamma\!\left( \dfrac{2+\theta}{2\theta}\right)\Gamma\!\left( \dfrac{4+\theta}{2\theta}\right).
\end{equation*}
The result then follows from an application of Theorem \ref{theorem2}.
\end{proof}

\begin{table}
\caption{{\it Constants scaling the drift and diffusion terms $\eqref{GenGaussianalphabeta}$ in Theorem $\ref{theorem3}$ for various marginal density functions $\mathcal{F}$ according to $\eqref{chiscalefactors}$ (left) and plots of the scale factors $\chi_1(\theta)$ and $\chi_2(\theta)$ (right).} \label{tab:GenGaussianScales}}
\begin{center}
\vspace{-0.1in}
\begin{tabular}{lcccc}
\rowcolor{jobcolor} \color{white}{Distribution} & \color{white}{$\theta$} & \color{white}{$\chi_1(\theta)$} & \color{white}{$\chi_2(\theta)$} & \cellcolor{white} \multirow{9}{*}{\includegraphics[width=0.26\textwidth]{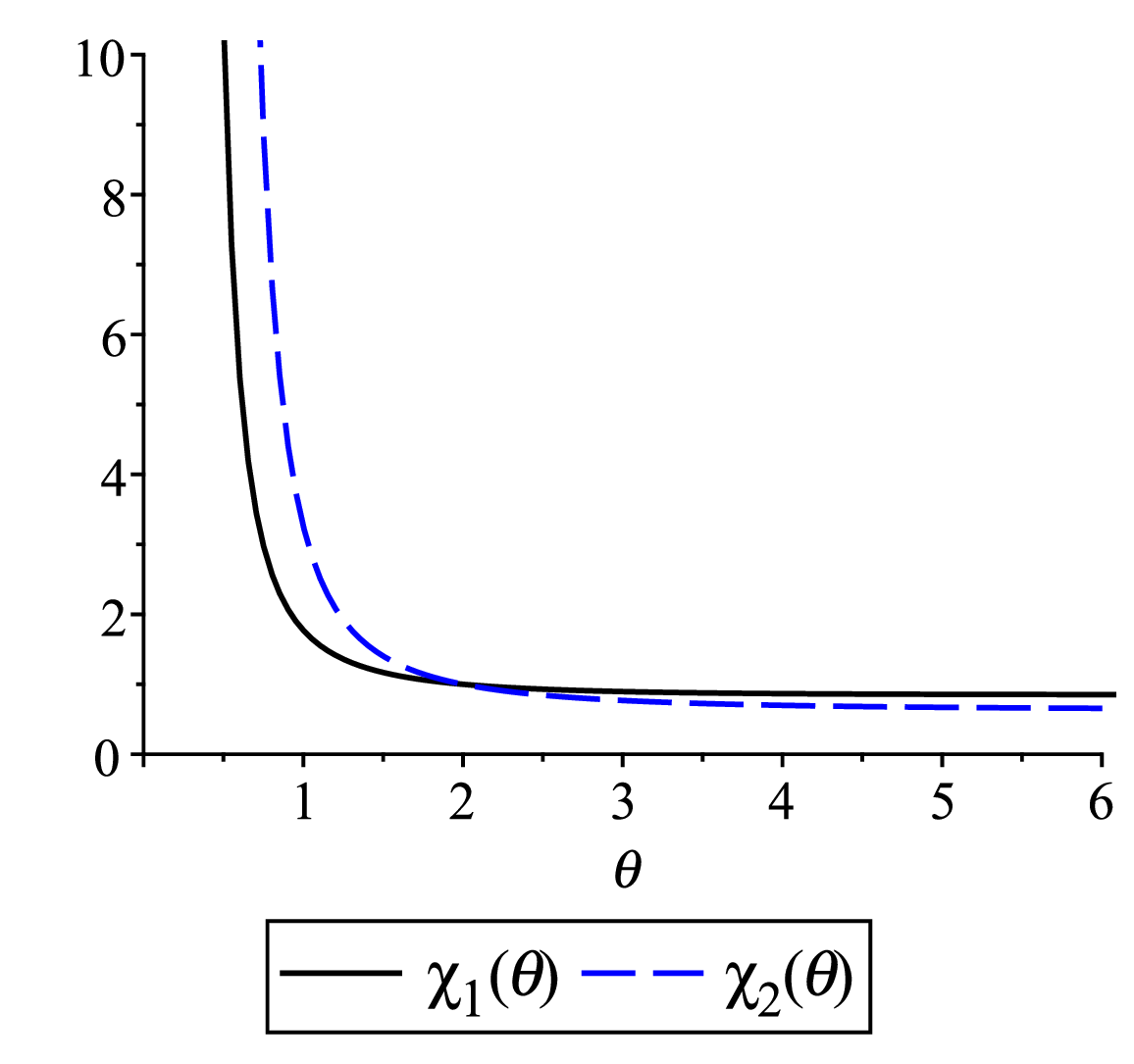}} \\
\rule{0pt}{4mm} heavy-tail limit & $\theta \rightarrow 0^+$ & $\infty$ & $\infty$ & \\
\cline{1-4}
\rule{0pt}{4mm} Laplace & $\theta = 1$ & $1.77245$ & $3.26109$  &\\
\cline{1-4}
\rule{0pt}{4mm} heavy-tailed Gaussian & $\theta = 3/2$ & $1.16885$ & $1.40335$ & \\
\cline{1-4}
\rule{0pt}{4mm} Maxwell-Boltzmann & $\theta = 2$ & $1$ & $1$ & \\
\cline{1-4}
\rule{0pt}{4mm} thin-tailed Gaussian & $\theta = 3$ & $0.89592$ & $0.76865$ & \\
\cline{1-4}
\rule{0pt}{4mm} thin-tail limit & $\theta \rightarrow \infty$ & $0.88623$ & $0.66567$ & \\
\cline{1-4} \\
& & & & \\
& & & & 
\end{tabular}
\vskip -15mm \rule{0pt}{1pt}
\end{center}
\end{table}

\noindent
We have that both scale factors~(\ref{chiscalefactors}) are decreasing functions of $\theta$, with
\begin{equation*}
\lim_{\theta \rightarrow 0^+}\chi_{1,2}(\theta) = \infty\,, \quad \lim_{\theta \rightarrow \infty}\chi_{1}(\theta) = \dfrac{\sqrt{\pi}}{2}\approx 0.88623 \,, \quad \text{and} \quad \lim_{\theta \rightarrow \infty}\chi_{2}(\theta) = \dfrac{\pi^{1/4}}{2} \approx 0.66567 \,.
\end{equation*}
Therefore, when $0< \theta < 2$, the heavy tails result in a sampling of a greater proportion of larger velocities, resulting in both faster drift and larger diffusion when compared to the Gaussian (Maxwell-Boltzmann) results. When $\theta >2$, the thin tails result in sampling of a greater proportion of smaller velocities, resulting in both slower drift and smaller diffusion when compared to the Gaussian results. We highlight key values of these scale factors in Table~\ref{tab:GenGaussianScales}, while also providing a plot of each as a function of $\theta$. We conclude this section with an example where Theorem~\ref{theorem2} is not applicable, necessitating the more general treatment of Theorem \ref{theorem1}. 

\begin{example}\label{exampleDelta}  {\rm
Assume that the solvent particle velocities are sampled according to the Dirac delta distribution $f_\mu(\mathbf{v},\mathbf{u})=F_\mu(\mathbf{v}-\mathbf{u})$, where $F_\mu(\mathbf{q)}$ is in the product form~(\ref{productform}) for $\mathcal{F}(q) = \delta(q)$. Then we have
\begin{equation*}
\int_{a}^\infty \xi^\kappa \, \widehat{f}_\mu(\xi) \, \mathrm{d}\xi = \int_{a}^\infty \xi^\kappa \dfrac{1}{\sigma_\mu} \, \delta\!\left(\! \dfrac{\xi}{\sigma_\mu}\!\right) \mathrm{d}\xi = \begin{cases}
1 - H(a) & \text{if}~~ \kappa =0\,,\\
    0 & \text{if}~~ \kappa =1,2,3\,,
\end{cases}
\end{equation*}
where $H(\cdot)$ is the Heaviside step function, so the marginal distribution depends on $a$ and yet not on $\sigma_\mu$.  Define $\mathcal{S}_- (\mathbf{X}(t),R) \subset \mathcal{S}(\mathbf{X}(t),R)$ to be the set
\begin{equation*}
\mathcal{S}_- (\mathbf{X}(t),R) = \left\lbrace \mathbf{y}\in \mathcal{S}(\mathbf{X}(t),R) \quad\text{and}\quad \mathbf{u}(\mathbf{y})\cdot \left( \dfrac{\mathbf{y}-\mathbf{X}}{R}\right)<0\right\rbrace\,.
\end{equation*}
Applying Theorem \ref{theorem1}, the drift coefficient of the It\^{o} stochastic differential equations (\ref{SDEmainX})--(\ref{SDEmainY}) for the motion of the heavy solute particle can be expressed as the surface integral
\begin{equation*}
\alpha_\ell(\mathbf{X},\mathbf{V}) = - \dfrac{2\lambda_\mu }{1+\mu} \int_{\mathcal{S}_-(\mathbf{X}(t),R)} \left(\dfrac{ y_\ell - X_\ell(t)}{R}\right)\left\lbrace \left(\mathbf{V}(t) - \mathbf{u}\right)\cdot \left( \dfrac{\mathbf{y}-\mathbf{X}(t)}{R}\right)\right\rbrace^2 \,\mathrm{d}A\,,\quad \ell =1,2,3\,,
\end{equation*}
where $\mathrm{d}A$ is the surface element centred at $\mathbf{y}$. Meanwhile, Theorem \ref{theorem1} implies that the diffusion terms are given by the square root of the matrix having entries
\begin{equation*}
 \beta_{\ell,j}^2(\mathbf{X},\mathbf{V}) = \dfrac{4\lambda_\mu}{(1+\mu)^2} \int_{\mathcal{S}_-(\mathbf{X}(t),R)} \!\!\left(\! \frac{y_\ell - X_\ell(t)}{R}\!\right)\!\left(\!\frac{y_j - X_j(t)}{R}\!\right) \left\lbrace \!\left(\mathbf{V}(t) - \mathbf{u}\right)\cdot \left( \dfrac{\mathbf{y}-\mathbf{X}(t)}{R}\right)\!\right\rbrace^{\!3} \!\mathrm{d}A
\end{equation*}
for each pair $\ell,j =1,2,3$. Since the integral is performed over $\mathcal{S}_-$ rather than $\mathcal{S}$ this builds an asymmetry into the dynamics, where the structure of this asymmetry strongly depends upon the geometry of the flow $\mathbf{u}(\mathbf{y})$ local to the heavy particle. }
\end{example}

\subsection{Velocities distributed as a sum of densities}

We consider a distribution function for the velocities of the light particles given by 
\begin{equation} \label{FSum}
f_\mu(\mathbf{v},\mathbf{u}) 
= 
F_\mu(\mathbf{v}-\mathbf{u}) \quad \text{where} \quad F_\mu(\mathbf{q})= \sum_{\ell=1}^{\ell^*}c_\ell \, F_{\mu,\ell}(\mathbf{q})
\end{equation}
and each function $F_{\mu,\ell}$ obeys the properties \eqref{Fproperties} while the coefficients are subject to the constraint $\sum_{\ell =1}^{\ell^*} c_\ell =1$, so that in turn $F_\mu(\mathbf{q})$ also obeys the properties \eqref{Fproperties}.  Probability densities viewed as a sum of simpler densities find application in the statistical understanding of certain plasma flows \cite{izacard2017generalized, cairns1995electrostatic}, where they allow for better agreement with experimental observations than would single simple distributions.
We assume that each $F_{\mu,\ell}$ has corresponding marginal distribution function $\widehat{F}_{\mu,\ell}(\xi)=\sigma_\mu^{-1} \widetilde{\mathcal{F}}_{\ell}(\xi/\sigma_\mu)$ satisfying the asymptotic scaling
\begin{equation}
\int_a^\infty \xi^\kappa \frac{1}{\sigma_\mu}\,\widetilde{\mathcal{F}}_\ell\!\left(\!\dfrac{\xi}{\sigma_\mu}\right) \mathrm{d}\xi = \mathcal{F}_{\ell;\kappa} \sigma_\mu^\kappa + \mathcal{O}\!\left( \!\dfrac{1}{\sigma_\mu}\!\right)
\end{equation}
as $\sigma_\mu \rightarrow \infty$, given $\kappa =0,1,2,3,\dots$, where the leading term is independent of the value of $a$.

\begin{theorem}\label{theoremSum}
Let the marginal distribution of light particle velocities takes the form of a superposition of simpler densities~\eqref{FSum}. Then the drift and diffusion terms in~\eqref{Thm2alphabeta} reduce to
\begin{subequations}\label{Sumalphabeta}
\begin{align}
\boldsymbol{\alpha}(\mathbf{X},\mathbf{V}) & = - \sqrt{2\pi}\left( \sum_{\ell=1}^{\ell^*}\mathcal{F}_{\ell;1}\right) \gamma \left\lbrace \mathbf{V} - \mathbf{u}(\mathbf{X}) - \boldsymbol{A}[\mathbf{u}(\mathbf{y})-\mathbf{u}(\mathbf{X})]\right\rbrace + \mathcal{O}\!\left(\! \dfrac{1}{\sigma_\mu}\!\right),\\
\boldsymbol{\beta}(\mathbf{X},\mathbf{V}) & = \sqrt{\sqrt{\dfrac{\pi}{2}} \sum_{\ell=1}^{\ell^*}\mathcal{F}_{\ell;3}}\,\sqrt{2 D}\, \gamma \, \mathbf{I} + \mathcal{O}\!\left( \!\dfrac{1}{\sigma_\mu}\!\right) \quad \text{as} \quad \sigma_\mu \rightarrow \infty\,.
\end{align}
\end{subequations}
\end{theorem}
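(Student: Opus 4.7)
The plan is to exploit the linearity of Theorems~\ref{theorem1} and \ref{theorem2} in the density $F_\mu$. The integrands $\psi_\ell$ and $\phi_{\ell,j}$ defined in \eqref{thmproof6}--\eqref{thmproof7} depend linearly on $F_\mu$, and the subsequent triple integrations in $\xi_1,\xi_2,\xi_3$ as well as the surface integral over $\mathcal{S}(\mathbf{X},R)$ in \eqref{alphageneral}--\eqref{betageneral} are all linear operations. They therefore commute with the finite sum in the superposition~\eqref{FSum}.

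First I would substitute $F_\mu = \sum_{i=1}^{\ell^*} c_i F_{\mu,i}$ into \eqref{thmproof6}--\eqref{thmproof7} and move the sum outside, decomposing
\[
\psi_\ell(\mathbf{y}) \;=\; \sum_{i=1}^{\ell^*} c_i\,\psi_\ell^{(i)}(\mathbf{y}),\qquad \phi_{\ell,j}(\mathbf{y}) \;=\; \sum_{i=1}^{\ell^*} c_i\,\phi_{\ell,j}^{(i)}(\mathbf{y}),
\]
where $\psi_\ell^{(i)}$ and $\phi_{\ell,j}^{(i)}$ are defined exactly as in Theorem~\ref{theorem1} but with $F_\mu$ replaced by $F_{\mu,i}$. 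By \eqref{alphageneral}--\eqref{betageneral}, this decomposition is inherited by the drift and the squared diffusion: $\boldsymbol{\alpha} = \sum_{i} c_i\,\boldsymbol{\alpha}^{(i)}$ and $\boldsymbol{\beta}^2 = \sum_{i} c_i\,(\boldsymbol{\beta}^{(i)})^2$.

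Next I would apply the derivation in the proof of Theorem~\ref{theorem2} to each component individually. Because each marginal $\widehat{F}_{\mu,i}$ has the required scaling form $\sigma_\mu^{-1}\widetilde{\mathcal{F}}_i(\xi/\sigma_\mu)$ and satisfies the asymptotic moment relation with constants $\mathcal{F}_{i;\kappa}$, the same chain of steps---substituting $\lambda_\mu$ from~\eqref{lambda3Dexp} and $\sigma_\mu$ from~\eqref{sigmaold}, performing the one remaining $\xi_1$-integration via the moment scaling, and applying the sphere identities~\eqref{sphereformulas}---yields, for each $i$, a drift contribution of the form $-\sqrt{2\pi}\,\mathcal{F}_{i;1}\,\gamma\bigl\{\mathbf{V}-\mathbf{u}(\mathbf{X})-\boldsymbol{A}[\mathbf{u}(\mathbf{y})-\mathbf{u}(\mathbf{X})]\bigr\} + \mathcal{O}(1/\sigma_\mu)$ and a squared diffusion contribution that is diagonal at leading order with entries $\sqrt{2\pi}\,\mathcal{F}_{i;3}\,D\,\gamma^2 + \mathcal{O}(1/\sigma_\mu)$.

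Summing these contributions (absorbing the $c_i$ weights into the moment constants, in keeping with the notation $\mathcal{F}_{\ell;\kappa}$ used in the statement) produces the drift formula in \eqref{Sumalphabeta} immediately. For the diffusion, the summed matrix $\boldsymbol{\beta}^2$ remains a positive scalar multiple of $\mathbf{I}$ at leading order, so its real symmetric square root is unambiguously the corresponding scalar times $\mathbf{I}$; the factor $\sqrt{2\pi}\,D$ inside the square root rewrites as $2\sqrt{\pi/2}\,D$, yielding exactly the form stated in \eqref{Sumalphabeta}. I do not expect a genuine obstacle here: the whole argument is linearity plus a termwise reuse of Theorem~\ref{theorem2}, and the only care needed is to track the scalar prefactors and to confirm that the leading-order matrix remains scalar so that the square-root step is trivial and the $\mathcal{O}(1/\sigma_\mu)$ error does not spoil it.
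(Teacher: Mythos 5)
Your proof is correct and follows essentially the same route as the paper's: both arguments reduce the claim to linearity --- the marginal of the superposition is the superposition of the marginals, so the moment constants $\mathcal{F}_{\ell;\kappa}$ simply add --- followed by an application of Theorem~\ref{theorem2} (you decompose one step earlier, at the level of the surface integrals of Theorem~\ref{theorem1}, and you correctly note that it is $\boldsymbol{\beta}^2$ rather than $\boldsymbol{\beta}$ that decomposes linearly, but the computation is the same). Your parenthetical about absorbing the weights $c_i$ into the constants $\mathcal{F}_{\ell;\kappa}$ matches the convention the paper implicitly adopts in its own statement and proof.
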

\begin{proof}
Using \eqref{FSum} we can integrate over the two inert coordinates in \eqref{mdfIntegral}, constructing the marginal distribution function according to \eqref{mdfForm}, in the form
$
\mathcal{F}(q) = \sum_{\ell=1}^{\ell^*} c_\ell \,  \widetilde{\mathcal{F}}_\ell(q).
$
From here we calculate the asymptotic scalings needed to describe the drift and diffusion terms, $\mathcal{F}_1 = \sum_{\ell=1}^{\ell^*}\mathcal{F}_{\ell;1}$ and $\mathcal{F}_3 = \sum_{\ell=1}^{\ell^*}\mathcal{F}_{\ell;3}$. The result then follows from an application of Theorem~\ref{theorem2}.
\end{proof}

\begin{example} {\rm
Consider a generalised Gaussian distribution for the light particle velocities with polynomial factors that influence the shape of the distribution (for applications of such distributions in plasma physics, see \cite{cairns1995electrostatic, izacard2016kinetic, izacard2017generalized}), say \eqref{FSum} with each $F_{\mu,\ell}(\mathbf{q})$ chosen so that the marginal distribution
$\mathcal{F}(q)$ takes the form
\begin{equation}\label{GenMBSum}
\mathcal{F}(q) = \sum_{\ell =1}^{\ell^*} c_\ell \, \dfrac{q^{2\ell}\exp(-q^2/2)}{2^{\ell + \frac{1}{2}}\,\Gamma\!\left( \ell + \frac{1}{2}\right)} 
\end{equation}
with the coefficients satisfying $\sum_{\ell=0}^{\ell^*} c_\ell = 1$. We find for \eqref{GenMBSum} that the leading-order terms in the moments \eqref{mdfxnScale} take the form 
\begin{align}
\mathcal{F}_0 = \dfrac{1}{2} \,, \quad \mathcal{F}_1 = \sum_{\ell=1}^{\ell^*}  \dfrac{c_\ell \, \Gamma(\ell +1)}{\sqrt{2} \, \Gamma\!\left( \ell + \frac{1}{2}\right)} \,, \quad \mathcal{F}_2 = \sum_{\ell=1}^{\ell^*}  \dfrac{c_\ell\,\Gamma\left( \ell + \frac{3}{2}\right)}{\Gamma\!\left( \ell + \frac{1}{2}\right)}\,, \quad \mathcal{F}_3 = \sum_{\ell=1}^{\ell^*}  \dfrac{\sqrt{2}\,c_\ell\,\Gamma\!\left( \ell + 2\right)}{\Gamma\!\left( \ell + \frac{1}{2}\right)}\,.
\end{align}
Applying Theorem \ref{theoremSum}, the drift and diffusion terms in \eqref{Thm2alphabeta} reduce to
\begin{subequations}\label{GenMBSumalphabeta}
\begin{align}
\boldsymbol{\alpha}(\mathbf{X},\mathbf{V}) & = - \sqrt{\pi}\left( \sum_{\ell=1}^{\ell^*}  \dfrac{c_\ell\,\Gamma(\ell +1)}{\Gamma\!\left( \ell + \frac{1}{2}\right)} \right) \gamma \left\lbrace \mathbf{V} - \mathbf{u}(\mathbf{X}) - \boldsymbol{A}[\mathbf{u}(\mathbf{y})-\mathbf{u}(\mathbf{X})]\right\rbrace + \mathcal{O}\!\left(\! \dfrac{1}{\sigma_\mu}\!\right)\,,\\
\boldsymbol{\beta}(\mathbf{X},\mathbf{V}) & = \sqrt{\sqrt{\pi} \sum_{\ell=1}^{\ell^*}  \dfrac{c_\ell \, \Gamma( \ell + 2)}{\Gamma\!\left( \ell + \frac{1}{2}\right)}}\,\sqrt{2 D} \,\gamma \,\mathbf{I} + \mathcal{O}\!\left(\! \dfrac{1}{\sigma_\mu}\!\right) \quad \text{as} \quad \sigma_\mu \rightarrow \infty\,.
\end{align}
\end{subequations} }
\end{example}

\subsection{Finite-size effects and the role of the flow geometry}

In this subsection we consider how the drift term~\eqref{Thm2alpha} in Theorem \ref{theorem2} behaves under stationary flows $\mathbf{u}$ of a given geometric structure. We first derive a general result for flows which are smooth enough. It is given as Theorem~\ref{theoremGeometry}, which is applied to several specific choices of flows in the corollaries. 
Theorem~\ref{theoremGeometry} can be applied to a number of flows commonly used in the fluid mechanics literature; see \cite{batchelor1967introduction} and \cite{landau2013fluid} for many examples and additional motivation behind some of the flows we study in this subsection.

\begin{theorem}\label{theoremGeometry}
Let $\mathbf{u}\in C^2(\mathbb{R}^3)$. Then, the finite-size correction \eqref{A_correction} in the drag term \eqref{Thm2alpha} reads
\begin{equation}\label{correctionFormula}
\boldsymbol{A}[\mathbf{u}(\mathbf{y})-\mathbf{u}(\mathbf{X})] = \dfrac{R^2}{10}\begin{pmatrix}
3\dfrac{\partial^2 u_1}{\partial y_1^2} + \dfrac{\partial^2 u_1}{\partial y_2^2}  + \dfrac{\partial^2 u_1}{\partial y_3^2}+ 2\dfrac{\partial^2 u_2}{\partial y_1 \partial y_2}  + 2\dfrac{\partial^2 u_3}{\partial y_1 \partial y_3} \\  
\dfrac{\partial^2 u_2}{\partial y_1^2} + 3\dfrac{\partial^2 u_2}{\partial y_2^2}  + \dfrac{\partial^2 u_2}{\partial y_3^2}+ 2\dfrac{\partial^2 u_1}{\partial y_1 \partial y_2}  + 2\dfrac{\partial^2 u_3}{\partial y_2 \partial y_3} \\  
\dfrac{\partial^2 u_3}{\partial y_1^2} + \dfrac{\partial^2 u_3}{\partial y_2^2}  + 3\dfrac{\partial^2 u_3}{\partial y_3^2}+ 2\dfrac{\partial^2 u_1}{\partial y_1 \partial y_3}  + 2\dfrac{\partial^2 u_2}{\partial y_2 \partial y_3} \\  
\end{pmatrix}_{\!\mathbf{y}=\mathbf{X}(t)}
{\hskip -8mm} + \;\, \mathcal{O}\!\left( \!R^4\!\right),
\end{equation}
as $R\rightarrow 0$.
\end{theorem}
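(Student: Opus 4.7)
The plan is to parameterise the sphere $\mathcal{S}(\mathbf{X},R)$ by $\mathbf{y} = \mathbf{X}+R\mathbf{n}$ where $\mathbf{n}\in S^2$ is a unit vector, so that $(\mathbf{y}-\mathbf{X})/R = \mathbf{n}$ and $\mathrm{d}A = R^2\,\mathrm{d}\Omega$ with $\mathrm{d}\Omega$ the usual area element on the unit sphere. After cancelling the overall $R^2$, this converts $\boldsymbol{A}_R[\mathbf{u}(\mathbf{y})-\mathbf{u}(\mathbf{X})]$ into
\begin{equation*}
\boldsymbol{A}_R[\mathbf{u}(\mathbf{y})-\mathbf{u}(\mathbf{X})] \;=\; \frac{3}{4\pi}\int_{S^2} \mathbf{n}\,\bigl\{[\mathbf{u}(\mathbf{X}+R\mathbf{n})-\mathbf{u}(\mathbf{X})]\cdot\mathbf{n}\bigr\}\,\mathrm{d}\Omega,
\end{equation*}
a purely geometric average that is well-suited to Taylor expansion in $R$ (we will use regularity slightly beyond $C^2$ in order to justify the $\mathcal{O}(R^4)$ remainder; a $C^4$ hypothesis suffices).

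The second step is to Taylor-expand componentwise,
\begin{equation*}
u_i(\mathbf{X}+R\mathbf{n}) - u_i(\mathbf{X}) \;=\; R\sum_{j}\frac{\partial u_i}{\partial y_j}\,n_j \;+\; \frac{R^2}{2}\sum_{j,k}\frac{\partial^2 u_i}{\partial y_j\partial y_k}\,n_j n_k \;+\; \frac{R^3}{6}\sum_{j,k,m}\frac{\partial^3 u_i}{\partial y_j\partial y_k\partial y_m}\,n_j n_k n_m \;+\; \mathcal{O}(R^4),
\end{equation*}
where the derivatives are evaluated at $\mathbf{y}=\mathbf{X}$. Substituting into the integrand produces, for the $\ell$-th component, integrals of the form $\int_{S^2} n_\ell\, n_{i_1}\cdots n_{i_p}\,\mathrm{d}\Omega$ with $p=2,3,4$. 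The key observation is that these moments vanish whenever the total number of factors is odd, so the $R$ term ($p=2$, total $3$ factors) and the $R^3$ term ($p=4$, total $5$ factors) both integrate to zero and only the $R^2$ term survives, leaving an $\mathcal{O}(R^4)$ error.

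The third step evaluates the surviving $R^2$ contribution via the standard fourth moment on $S^2$,
\begin{equation*}
\int_{S^2} n_\ell\, n_i\, n_j\, n_k\,\mathrm{d}\Omega \;=\; \frac{4\pi}{15}\bigl(\delta_{\ell i}\delta_{jk} + \delta_{\ell j}\delta_{ik} + \delta_{\ell k}\delta_{ij}\bigr),
\end{equation*}
so that the $\ell$-th component collapses to
\begin{equation*}
\bigl(\boldsymbol{A}_R[\mathbf{u}(\mathbf{y})-\mathbf{u}(\mathbf{X})]\bigr)_\ell \;=\; \frac{R^2}{10}\Bigl[\Delta u_\ell + 2\,\frac{\partial}{\partial y_\ell}(\nabla\cdot\mathbf{u})\Bigr]_{\mathbf{y}=\mathbf{X}} \;+\; \mathcal{O}(R^4).
\end{equation*}
Writing this compact expression out in Cartesian components term by term recovers exactly the $3\times 1$ vector displayed in \eqref{correctionFormula}: for the first row, for instance, $\Delta u_1 + 2\partial_1(\nabla\cdot\mathbf{u}) = 3\partial_1^2 u_1 + \partial_2^2 u_1 + \partial_3^2 u_1 + 2\partial_1\partial_2 u_2 + 2\partial_1\partial_3 u_3$, and analogously for rows 2 and 3.

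The only real obstacle is bookkeeping: one must carry all four indices $\ell,i,j,k$ through the triple contraction with the symmetric tensor $\partial_j\partial_k u_i$ and verify that the three $\delta$-pair contributions produce one Laplacian and two gradient-of-divergence terms. Everything else is either an elementary odd/even parity argument on $S^2$ or invocation of the standard Taylor remainder estimate. Note that although the statement assumes only $\mathbf{u}\in C^2$, the claimed $\mathcal{O}(R^4)$ remainder implicitly uses that the cubic Taylor term is present (so that it can be shown to integrate to zero), which is most cleanly handled by assuming slightly more regularity; this is the only subtlety worth flagging.
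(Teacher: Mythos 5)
Your proposal is correct and follows essentially the same route as the paper's proof: Taylor expansion of $\mathbf{u}$ about $\mathbf{X}(t)$, vanishing of the odd-order spherical moments to kill the linear and cubic contributions, and evaluation of the surviving quadratic term via the fourth moment of $\mathbf{n}$ on $S^2$ (the paper carries this out with explicit component bookkeeping rather than your compact contraction $\frac{R^2}{10}\bigl[\Delta u_\ell + 2\,\partial_\ell(\nabla\cdot\mathbf{u})\bigr]$, but the computation is the same). Your remark that the $\mathcal{O}(R^4)$ remainder requires regularity beyond the stated $C^2$ hypothesis is well taken and applies equally to the paper's own argument, which also expands to cubic order.
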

\begin{proof}
Using the Taylor expansion near $\mathbf{y}=\mathbf{X}(t)$, we have for $\mathbf{y}\in \mathcal{S}(\mathbf{X}(t),R)$
\begin{align}
\mathbf{u}(\mathbf{y}) - \mathbf{u}(\mathbf{X}(t)) & = \sum_{\ell =1}^3 \dfrac{\partial  \mathbf{u}(\mathbf{X}(t))}{\partial y_\ell} (y_\ell - X_\ell(t)) +  \sum_{\ell =1}^3\sum_{k=1}^3 \dfrac{1}{2} \dfrac{\partial^2  \mathbf{u}(\mathbf{X}(t))}{\partial y_\ell \partial y_k} (y_\ell - X_\ell(t))(y_k - X_k(t))
\nonumber
\\
& \quad + \sum_{\ell =1}^3\sum_{k=1}^3\sum_{j=1}^3 \dfrac{1}{3} \dfrac{\partial^3  \mathbf{u}(\mathbf{X}(t))}{\partial y_\ell \partial y_k \partial y_k} (y_\ell - X_\ell(t))(y_k - X_k(t))(y_j-X_j(t)) + \mathcal{O}\!\left(\! R^4\!\right)
\nonumber
\\
& = \text{linear terms} + \text{quadratic terms} + \text{cubic terms} + \mathcal{O}\!\left(\! R^4\!\right).
\label{quadterms}
\end{align}
Since $\boldsymbol{A}_R$ is a linear operator, we have
$$
\boldsymbol{A}_R[\mathbf{u}(\mathbf{y}) - \mathbf{u}(\mathbf{X}(t))] = \boldsymbol{A}_R[\text{linear terms}] + \boldsymbol{A}_R[\text{quadratic terms}] + \boldsymbol{A}_R[\text{cubic terms}] + \mathcal{O}\! \left( \! R^4\!\right).
$$
Utilizing the symmetry of the integration domain, we have that
\begin{align*}
\int_{\mathcal{S}(\mathbf{X}(t),R)} \dfrac{\mathbf{y}-\mathbf{X}(t)}{R}\left\lbrace (y_\ell-X_\ell(t)) \mathbf{C} \cdot \dfrac{\mathbf{y}-\mathbf{X}(t)}{R}\right\rbrace \mathrm{d}A =0 
\end{align*}
for all $\ell =1,2,3$ and all constant vectors $\mathbf{C}$, from which it follows $\boldsymbol{A}_R[\text{linear terms}] = \mathbf{0}$. Similarly, 
\begin{align*}
\int_{\mathcal{S}(\mathbf{X}(t),R)} \dfrac{\mathbf{y}-\mathbf{X}(t)}{R}\left\lbrace (y_\ell-X_\ell(t))(y_j-X_j(t))(y_k-X_k(t)) \mathbf{C} \cdot \dfrac{\mathbf{y}-\mathbf{X}(t)}{R}\right\rbrace \mathrm{d}A =0 
\end{align*}
for all $\ell,j,k =1,2,3$ and all constant vectors $\mathbf{C}$, from which it follows $\boldsymbol{A}_R[\text{cubic terms}] = \mathbf{0}$. Consequently, we deduce
\begin{equation}
\boldsymbol{A}_R[\mathbf{u}(\mathbf{y}) - \mathbf{u}(\mathbf{X}(t))] = \boldsymbol{A}_R[\text{quadratic terms}] + \mathcal{O}\!\left(\! R^4\!\right),
\label{ARquad}
\end{equation}
where the quadratic terms given in~(\ref{quadterms}) will result in non-zero contributions. We consider the action of $\boldsymbol{A}_R$ on a generic vector with strictly quadratic terms, of the form 
\begin{equation}
\mathbf{B} = \sum_{\ell=1}^3\sum_{j=1}^3 \mathbf{B}_{\ell,j} (y_\ell - X_\ell(t)) (y_j - X_j(t))\,,
\label{genericform}
\end{equation}
where the $\mathbf{B}_{\ell,j}$ are constant in $\mathbf{y}$. We have that 
\begin{align}
\boldsymbol{A}_R[\mathbf{B}] & = \dfrac{3}{4\pi R^2}\int_{\mathcal{S}(\mathbf{X}(t),R)} \dfrac{\mathbf{y}-\mathbf{X}(t)}{R}\left\lbrace \left[\sum_{\ell=1}^3\sum_{j=1}^3 \mathbf{B}_{\ell,j} (y_\ell - X_\ell(t)) (y_j - X_j(t))\right] \cdot \dfrac{\mathbf{y}-\mathbf{X}(t)}{R}\right\rbrace \mathrm{d}A 
\nonumber\\
& = \dfrac{3}{4\pi R^2}\sum_{\ell=1}^3\sum_{j=1}^3\int_{\mathcal{S}(\mathbf{X}(t),R)} \dfrac{\mathbf{y}-\mathbf{X}(t)}{R} (y_\ell - X_\ell(t)) (y_j - X_j(t))\left\lbrace \mathbf{B}_{\ell,j} \cdot \dfrac{\mathbf{y}-\mathbf{X}(t)}{R}\right\rbrace \mathrm{d}A 
\nonumber 
\\
& = \dfrac{R^2}{5}\begin{pmatrix}
3B_{1,1;1} + B_{2,2;1} + B_{3,3;1} + B_{1,2;2} + B_{2,1;2} + B_{1,3;3} + B_{3,1;3} \\
B_{1,2;1} + B_{2,1;1} + B_{1,1;2} + 3B_{2,2;2} + B_{3,3;2} + B_{2,3;3} + B_{3,2;3} \\
B_{1,3;1} + B_{3,1;1} + B_{2,3;2} + B_{3,2;2} + B_{1,1;3} + B_{2,2;3} + 3B_{3,3;3}
\end{pmatrix}.
\label{ARcalc}
\end{align}
Here by $B_{\ell,j;k}$ we mean the $k$-th component ($k=1,2,3$) of $\mathbf{B}_{\ell,j}$. Comparing the general vector~$\mathbf{B}$ with the form of the quadratic terms in the expansion~(\ref{quadterms}) for $\mathbf{u}(\mathbf{y}) - \mathbf{u}(\mathbf{X}(t))$, the quadratic terms may be written in the form~(\ref{genericform}) provided
\begin{equation}
B_{\ell,j;k} = \dfrac{1}{2}\dfrac{\partial^2 u_k(\mathbf{X}(t))}{\partial y_\ell \partial y_j}\,.
\label{derterm}
\end{equation}
Since $\mathbf{u}(\mathbf{y})$ is assumed smooth, by Clairaut's Theorem we can exchange partial derivatives in~(\ref{derterm}) to get $B_{\ell,j;k} = B_{j,\ell;k}$  for all $\ell,j=1,2,3$. Substituting~(\ref{derterm})
into~(\ref{ARcalc}) and (\ref{ARquad}), we obtain~(\ref{correctionFormula}).\end{proof}

\begin{corollary}\label{CorrLinear}
If the flow of solvent particles is governed by a linear velocity field $\mathbf{u}(\mathbf{y})$ then the correction term $\boldsymbol{A}_R[\mathbf{u}(\mathbf{y})-\mathbf{u}(\mathbf{X})]$ in \eqref{Thm2alphabeta} vanishes, in which case the It\^{o} stochastic differential equation \eqref{SDEmainX}--\eqref{SDEmainY} for the motion of the heavy solute particle has drift term which takes the form 
\begin{equation}\label{CorrLinearDDterms}
\boldsymbol{\alpha}(\mathbf{X},\mathbf{V}) = - \sqrt{2\pi} \, \mathcal{F}_1 \gamma \left\lbrace \mathbf{V} - \mathbf{u}(\mathbf{X}) \right\rbrace + \mathcal{O}\!\left( \!\dfrac{1}{\sigma_\mu}\!\right) \quad \text{as} \quad \sigma_\mu \rightarrow \infty\,.
\end{equation}
\end{corollary}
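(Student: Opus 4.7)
The plan is as follows. A linear velocity field admits the exact representation $\mathbf{u}(\mathbf{y}) = \mathbf{u}(\mathbf{X}(t)) + \mathbf{M}(\mathbf{y}-\mathbf{X}(t))$, where $\mathbf{M}$ is a constant $3\times 3$ matrix (globally constant because the flow is linear, so the Jacobian is independent of $\mathbf{y}$). Hence $\mathbf{u}(\mathbf{y}) - \mathbf{u}(\mathbf{X}(t))$ contains only linear terms in $(\mathbf{y}-\mathbf{X}(t))$; the quadratic and cubic remainder terms that appeared in the Taylor expansion~\eqref{quadterms} used in the proof of Theorem~\ref{theoremGeometry} simply do not exist.

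I would then invoke the same parity identity already used in that proof: for any index $\ell \in \{1,2,3\}$ and any constant vector $\mathbf{C} \in \mathbb{R}^3$,
\begin{equation*}
\int_{\mathcal{S}(\mathbf{X}(t),R)} \dfrac{\mathbf{y}-\mathbf{X}(t)}{R} \left\{ (y_\ell - X_\ell(t)) \, \mathbf{C} \cdot \dfrac{\mathbf{y}-\mathbf{X}(t)}{R}\right\} \mathrm{d}A = \mathbf{0},
\end{equation*}
which vanishes because the integrand is cubic in the components of the unit outward normal $(\mathbf{y}-\mathbf{X}(t))/R$ and is therefore odd under the antipodal map $\mathbf{y} \mapsto 2\mathbf{X}(t) - \mathbf{y}$. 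Applying $\boldsymbol{A}_R$ term by term to the linear expression $\mathbf{M}(\mathbf{y}-\mathbf{X}(t)) = \sum_{\ell=1}^{3} \mathbf{M}_{\cdot,\ell}(y_\ell - X_\ell(t))$, where $\mathbf{M}_{\cdot,\ell}$ denotes the $\ell$-th column of $\mathbf{M}$, each contribution vanishes and hence $\boldsymbol{A}_R[\mathbf{u}(\mathbf{y}) - \mathbf{u}(\mathbf{X}(t))] = \mathbf{0}$ exactly. Substituting this into the drift expression~\eqref{Thm2alpha} of Theorem~\ref{theorem2} then yields~\eqref{CorrLinearDDterms}.

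Equivalently, one may cite Theorem~\ref{theoremGeometry} directly: all second partial derivatives of a linear $\mathbf{u}$ vanish, so the matrix in~\eqref{correctionFormula} is the zero vector, and the $\mathcal{O}(R^4)$ remainder is also identically zero because the Taylor series of a linear field terminates at first order. There is no substantive technical obstacle here; the only point worth emphasizing is that the vanishing of the finite-size correction is exact, not merely leading-order in $R$, because the full Taylor expansion of a linear field has no contributions beyond the first order at all.
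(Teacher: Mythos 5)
Your proposal is correct and follows essentially the same route as the paper: the paper's one-line proof invokes the formulas from the proof of Theorem~\ref{theoremGeometry} (namely that $\boldsymbol{A}_R$ annihilates linear terms by the odd-parity surface integral and that the quadratic contribution is built from second derivatives, which vanish for a linear field), which is exactly the argument you spell out. Your added emphasis that the vanishing is exact rather than merely $\mathcal{O}(R^4)$ is a point the paper also makes, in the remark immediately following the corollary.
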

\begin{proof}
For any linear flow, the formulas~(\ref{ARquad}) and~(\ref{derterm}) imply $\boldsymbol{A}_R[\mathbf{u}(\mathbf{y})-\mathbf{u}(\mathbf{X})]  = \mathbf{0}$ and the result~(\ref{CorrLinearDDterms}) then directly follows from~(\ref{Thm2alpha}).
\end{proof}

\noindent
Examples of flows for which Corollary~\ref{CorrLinear} is applicable include (i) uniform flow, $\mathbf{u}(\mathbf{x}) =\mathbf{C}$, where $\mathbf{C}$ is a constant vector; (ii) planar Couette flow, $\mathbf{u}(\mathbf{x}) = \left( 0,0,\overline{u} \, x_1\right)^\mathrm{T} $, with velocity directed along the $x_3$ axis and $\overline{u}$ being a constant; and (iii) generic stagnation flows given in the form $\mathbf{u}(\mathbf{x}) = \left( \overline{u}_1 \, x_1 , \, \overline{u}_2 \, x_2 , \, \overline{u}_3 \, x_3\right)^\mathrm{T}$, where the condition $\overline{u}_1 + \overline{u}_2 + \overline{u}_3 =0$ holds \cite{taylor1934formation, moffatt1994stretched}.
For these flows, Corollary~\ref{CorrLinear} implies that the SDEs given by drift and diffusion terms \eqref{CorrLinearDDterms} are exact, with no finite-size corrections needed. We now provide an example of a quadratic flow, the Poiseuille flow within a circular pipe \cite{sutera1993history}, to illustrate the role of the correction term.

\begin{corollary}\label{CorrPoiseuille}
If the flow of solvent particles is governed by a Poiseuille flow within a pipe of radius $h$ taking the form 
\begin{equation}\label{flow_Poiseuille}
\mathbf{u}(\mathbf{x}) = \left( 0,0,
\left( 1-\dfrac{x_1^2+x_2^2}{h^2} \right)
\overline{u}\right)^{\!\mathrm{T}},
\end{equation}
where $\overline{u}$ is a constant,
then the It\^{o} stochastic differential equation \eqref{SDEmainX}--\eqref{SDEmainY} for the motion of the heavy solute particle has drift term of the form 
\begin{equation}\label{CorrPoiseuilleDDterms}
\boldsymbol{\alpha}(\mathbf{X},\mathbf{V}) = - \sqrt{2\pi} \, \mathcal{F}_1 \, \gamma \left\lbrace \mathbf{V} - 
\left( 0,0,\left( 1- \dfrac{2R^2}{5h^2} -\dfrac{X_1^2+X_2^2}{h^2} \right) \overline{u} \right)^{\!\mathrm{T}}\right\rbrace \, + \, \mathcal{O}\!\left(\! \dfrac{1}{\sigma_\mu}\!\right)
\end{equation}
as $\sigma_\mu \rightarrow \infty$.
\end{corollary}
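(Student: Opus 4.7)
The plan is to invoke Theorem~\ref{theoremGeometry} applied to the specific Poiseuille flow~\eqref{flow_Poiseuille} and then substitute the resulting finite-size correction into the general drift formula~\eqref{Thm2alpha} from Theorem~\ref{theorem2}. Since the flow is polynomial of degree two, the Taylor remainder $\mathcal{O}(R^4)$ in Theorem~\ref{theoremGeometry} identically vanishes, so the correction formula~\eqref{correctionFormula} holds as an exact identity for this flow.

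First I would compute the relevant second partial derivatives of $\mathbf{u}$. Because $u_1 \equiv 0$ and $u_2 \equiv 0$, only derivatives of $u_3 = (1 - (x_1^2 + x_2^2)/h^2)\overline{u}$ matter, and among those only $\partial^2 u_3 / \partial y_1^2 = \partial^2 u_3 / \partial y_2^2 = -2\overline{u}/h^2$ are nonzero, with all mixed partials and $\partial^2 u_3/\partial y_3^2$ equal to zero. Plugging this into the column vector on the right-hand side of~\eqref{correctionFormula} from Theorem~\ref{theoremGeometry}, the first two components are zero, while the third component equals
\begin{equation*}
\frac{R^2}{10}\left( -\frac{2\overline{u}}{h^2} - \frac{2\overline{u}}{h^2} + 0 \right) = -\frac{2R^2\overline{u}}{5h^2}.
\end{equation*}
Hence $\boldsymbol{A}_R[\mathbf{u}(\mathbf{y})-\mathbf{u}(\mathbf{X})] = (0,0,-2R^2\overline{u}/(5h^2))^{\mathrm{T}}$.

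Next I would evaluate the background advection term. Since $\mathbf{u}(\mathbf{X}) = (0,0,(1 - (X_1^2+X_2^2)/h^2)\overline{u})^{\mathrm{T}}$, adding the above correction yields
\begin{equation*}
\mathbf{u}(\mathbf{X}) + \boldsymbol{A}_R[\mathbf{u}(\mathbf{y})-\mathbf{u}(\mathbf{X})] = \left(0,\,0,\, \left(1 - \frac{2R^2}{5h^2} - \frac{X_1^2+X_2^2}{h^2}\right)\overline{u}\right)^{\mathrm{T}}.
\end{equation*}
Substituting this expression into~\eqref{Thm2alpha} from Theorem~\ref{theorem2} produces the claimed form~\eqref{CorrPoiseuilleDDterms}, with the same asymptotic error $\mathcal{O}(1/\sigma_\mu)$ inherited from Theorem~\ref{theorem2}.

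There is no real obstacle here beyond bookkeeping: the quadratic nature of the Poiseuille profile means the Taylor expansion underlying Theorem~\ref{theoremGeometry} truncates exactly, so the only potential pitfall is arithmetic in reading off the five-term linear combination of second derivatives in the third slot of~\eqref{correctionFormula}. The physical content worth emphasizing is that the finite-size correction shifts the effective advection velocity by $-2R^2\overline{u}/(5h^2)$ in the flow direction, i.e., a heavy solute particle of finite radius $R$ drifts more slowly than a point tracer at the same location $\mathbf{X}$, by an amount quadratic in the confinement ratio $R/h$.
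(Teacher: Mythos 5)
Your proposal is correct and follows essentially the same route as the paper: compute the second partial derivatives of the Poiseuille profile, read off the finite-size correction $\boldsymbol{A}_R[\mathbf{u}(\mathbf{y})-\mathbf{u}(\mathbf{X})] = -\tfrac{2R^2\overline{u}}{5h^2}(0,0,1)^{\mathrm{T}}$ from Theorem~\ref{theoremGeometry}, note that the $\mathcal{O}(R^4)$ remainder vanishes because the flow is exactly quadratic, and substitute into~\eqref{Thm2alpha}. The arithmetic and the observation about exactness both match the paper's argument.
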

\begin{proof}
We have
\begin{equation*}
\dfrac{\partial^2 u_3}{\partial y_1^2} = -\dfrac{2\overline{u}}{h^2}\,, \quad \dfrac{\partial^2 u_3}{\partial y_2^2} = -\dfrac{2\overline{u}}{h^2}\,, \quad \text{and} \quad \dfrac{\partial^2 u_3}{\partial y_\ell y_j} = 0 \quad \text{otherwise}\,.
\end{equation*}
The correction term $\boldsymbol{A}[\mathbf{u}(\mathbf{y})-\mathbf{u}(\mathbf{X})]$ then reads
$$
\boldsymbol{A}[\mathbf{u}(\mathbf{y})-\mathbf{u}(\mathbf{X})] = \dfrac{R^2}{10}\left( 
0 , 
0 ,  
-\dfrac{4\overline{u}}{h^2}\right)^{\!\mathrm{T}}
=
-\dfrac{2R^2\overline{u}}{5h^2} \left( 0, 0, 1 \right)^{\mathrm{T}}\,.
$$
Using this correction term in~\eqref{Thm2alpha}, and noting that there are no higher-order corrections since the flow is exactly quadratic, we obtain~(\ref{CorrPoiseuilleDDterms}). 
\end{proof}

\noindent
As our last example in this subsection, we consider a boundary layer flow over a stretching plate, which admits an exact solution found by Crane~\cite{crane1970flow}. Assume a plate lying on the $x_3 =0$ plane is stretched along the $x_1$ axis. Then, in the region $x_1>0$, $x_2 \in \mathbb{R}$, $x_3 >0$ we have the velocity field
\begin{equation}\label{BLsoln}
\mathbf{u}(\mathbf{x}) = \left( \overline{u} \, \dfrac{x_1}{h_1} \exp\!\left( \!- \dfrac{x_3}{h_3}\!\right), \, 0 , \, - \frac{\overline{u}\, h_3}{h_1}\left[ 1- \exp\!\left( \!- \dfrac{x_3}{h_3}\!\right)\right]\right)^{\!\mathrm{T}},
\end{equation}
where $\overline{u},$ $h_1$ and $h_3$ are positive constants.

\begin{corollary}\label{CorrBoundaryLayer}
If the flow of solvent particles is governed by a boundary layer flow described by the solution of Crane~\eqref{BLsoln}, then the It\^{o} stochastic differential equation \eqref{SDEmainX}--\eqref{SDEmainY} for the motion of the heavy solute particle has drift term of the form 
\begin{equation}
\label{CorrBLDDterms}
\boldsymbol{\alpha}(\mathbf{X},\mathbf{V}) = - \sqrt{2\pi}\,\mathcal{F}_1 \, \gamma \left\lbrace \mathbf{V} - \overline{u}\begin{pmatrix}
\left[1+\dfrac{R^2}{10h_3^2}\right]\dfrac{X_1(t)}{h_1} \, \mathrm{e}^{-X_3(t)/h_3}\\ 0\\ \dfrac{h_3}{h_1}\left[1+\dfrac{R^2}{10h_3^2}\right] \mathrm{e}^{-X_3(t)/h_3}- \dfrac{h_3}{h_1} \end{pmatrix} \right\rbrace \, + \, \mathcal{O}\!\left( \!\dfrac{1}{\sigma_\mu}\!\right) 
\, + \, \mathcal{O}\!\left(\!R^4\!\right)
\end{equation}
as $\sigma_\mu \rightarrow \infty$ and $R\rightarrow 0$.
\end{corollary}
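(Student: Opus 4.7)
The plan is to invoke Theorem~\ref{theoremGeometry} directly: since the Crane solution~\eqref{BLsoln} lies in $C^2(\mathbb{R}^3)$, I can compute the finite-size correction $\boldsymbol{A}_R[\mathbf{u}(\mathbf{y})-\mathbf{u}(\mathbf{X})]$ up to $\mathcal{O}(R^4)$ by evaluating the $3\times 1$ vector of second-order partial derivatives in~\eqref{correctionFormula} at $\mathbf{y}=\mathbf{X}(t)$, and then feed the result into the drift formula~\eqref{Thm2alpha} of Theorem~\ref{theorem2}.

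First I would list the nonzero second partial derivatives of the components of~\eqref{BLsoln}. Because $u_2\equiv 0$ and both $u_1$ and $u_3$ depend on $x_3$ only through the common exponential factor $\mathrm{e}^{-x_3/h_3}$, most entries vanish; the surviving ones are $\partial^2 u_1/\partial y_3^2$ and $\partial^2 u_1/\partial y_1\partial y_3$ (each with a factor $1/(h_1 h_3)$ or $1/(h_1 h_3^2)$ and carrying an $X_1$ in the purely-$y_3$ case) together with $\partial^2 u_3/\partial y_3^2$. These are straightforward differentiations, so I will quote the values and move on.

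Next I plug these into the three-component expression in~\eqref{correctionFormula}. The first row collapses to the single term $\partial^2 u_1/\partial y_3^2$, evaluated at $\mathbf{X}(t)$; the second row is zero; in the third row the coefficient $3$ in front of $\partial^2 u_3/\partial y_3^2$ combines with the mixed derivative $2\,\partial^2 u_1/\partial y_1\partial y_3$ to yield, after the cancellation $3-2=1$, a single exponential term multiplied by $\overline{u}/(h_1 h_3)$. After factoring out $R^2/10$, this gives $\boldsymbol{A}_R[\mathbf{u}(\mathbf{y})-\mathbf{u}(\mathbf{X})] = (\overline{u}R^2/(10 h_1 h_3^2))\,\mathrm{e}^{-X_3/h_3}\,(X_1,0,h_3)^{\mathrm{T}} + \mathcal{O}(R^4)$.

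Finally I would add this correction to $\mathbf{u}(\mathbf{X})$ inside the braces of~\eqref{Thm2alpha}. The first and third components each combine naturally with the leading $\overline{u}\, X_1/h_1\cdot\mathrm{e}^{-X_3/h_3}$ and $\overline{u}\,h_3/h_1\cdot\mathrm{e}^{-X_3/h_3}$ terms of $\mathbf{u}(\mathbf{X})$ to produce the common prefactor $[1+R^2/(10 h_3^2)]$ displayed in~\eqref{CorrBLDDterms}, while the constant $-\overline{u}\,h_3/h_1$ in $u_3(\mathbf{X})$ is unaffected by the correction. Tracking the two small-parameter error sources separately yields the announced $\mathcal{O}(1/\sigma_\mu)+\mathcal{O}(R^4)$ remainder. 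No step is really an obstacle here; the only thing to be careful about is the bookkeeping of which second derivatives enter which row of~\eqref{correctionFormula}, and making sure to retain both error terms since the result is exact in neither $\sigma_\mu\to\infty$ nor $R\to 0$ individually.
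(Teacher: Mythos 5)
Your proposal is correct and follows essentially the same route as the paper: compute the three nonvanishing second derivatives of the Crane solution, evaluate the vector in~\eqref{correctionFormula} (the first row reducing to $\partial^2 u_1/\partial y_3^2$ and the third to $3\,\partial^2 u_3/\partial y_3^2 + 2\,\partial^2 u_1/\partial y_1\partial y_3$ with the sign-driven cancellation $3-2=1$), and substitute the resulting $\boldsymbol{A}_R = \bigl(\overline{u}R^2/(10 h_1 h_3^2)\bigr)\mathrm{e}^{-X_3/h_3}(X_1,0,h_3)^{\mathrm{T}}$ into~\eqref{Thm2alpha}. The intermediate correction term and the final combination into the common prefactor $1+R^2/(10h_3^2)$ both match the paper exactly.
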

\begin{proof}
Differentiating the components of~(\ref{BLsoln}), we obtain
\begin{equation*}
\dfrac{\partial^2 u_1}{\partial y_1 \partial y_3} 
= 
- \dfrac{\overline{u}}{h_1 h_3}\mathrm{e}^{-y_3/h_3}, \qquad \dfrac{\partial^2 u_1}{\partial y_3^2} 
= 
\dfrac{\overline{u}}{h_1 h_3^2} \, y_1 \, \mathrm{e}^{-y_3/h_3}, 
\qquad \dfrac{\partial^2 u_3}{\partial y_3^2} = \dfrac{\overline{u}}{h_1 h_3} \, \mathrm{e}^{-y_3/h_3}, 
\end{equation*}
while all other second-order partial derivatives are zero. Using Theorem~\ref{theoremGeometry}, we have
\begin{align*}
\boldsymbol{A}_R[\mathbf{u}(\mathbf{y})-\mathbf{u}(\mathbf{X})]
+
\mathcal{O}\!\left( \!R^4\!\right) 
= \dfrac{R^2}{10}\begin{pmatrix}
 \dfrac{\partial^2 u_1}{\partial y_3^2}
 \\  
0 \\  
3\dfrac{\partial^2 u_3}{\partial y_3^2}+ 2\dfrac{\partial^2 u_1}{\partial y_1 \partial y_3} 
\\  
\end{pmatrix}_{\! \mathbf{y}=\mathbf{X}(t)}
{\hskip -1cm} = 
\dfrac{R^2 \, \overline{u} \, \mathrm{e}^{-X_3(t)/h_3}}{10 h_1 h_3}
\begin{pmatrix}
 X_1(t)/h_3\\  
0 \\  
1 \\  
\end{pmatrix}.
\end{align*}
Substituting into~\eqref{Thm2alpha},
we obtain~(\ref{CorrBLDDterms}).
\end{proof}

\section{Simulations in a co-moving frame}
\label{sec4}

In this section, we validate the results of Section~\ref{sec3} by performing illustrative computer simulations. One way to compare the theory and simulation is to numerically approximate the scale factors for the drift ($\chi_1(\theta)$) and diffusion ($\chi_2(\theta)$) terms in the motion of a heavy particle predicted by the theory, assuming the velocities of solvent particles are sampled according to the generalized Gaussian distribution~(\ref{gengauss}). Since we want to estimate the scale factors $\chi_1(\theta)$ and $\chi_2(\theta)$, we need to perform simulations over sufficiently long time, averaging over many collisions with solvent particles. Our simulations will consider more than $10^8$ solvent particles and the system will evolve for more than $10^8$ time steps of length $\Delta t$. In particular, a direct simulation of such a large system would be computationally intensive. 

To design an efficient computational scheme, we will only explicitly simulate the behaviour of about $900$ particles at any one time, because our computer simulations will be performed in a co-moving cubic frame of length $L$ that is centered at $\mathbf{X}(t)$, i.e. solvent particles are explicitly simulated in the cubic box
\begin{equation}
\mathbf{X}(t)
+
\left[
-
\frac{L}{2}, 
\frac{L}{2}
\right]
\times
\left[
-
\frac{L}{2}, 
\frac{L}{2}
\right]
\times
\left[
-
\frac{L}{2}, 
\frac{L}{2}
\right],
\label{coframe}
\end{equation} 
while we assume that the solvent particles are distributed according to the spatial Poisson process with density $\lambda_{\mu}$ given by~(\ref{lambda3Dexp}) and their velocities are distributed according to $f_\mu({\mathbf v},{\mathbf u})$ given by~(\ref{Fgeneral}). Simulation in a co-moving frame was introduced in~\cite{gunaratne2019multi} for the case without flow, i.e. for  ${\mathbf u} \equiv {\mathbf 0}$ and for the Gaussian distribution~(\ref{fvel3Dexp}). Here, we extend this approach to the case of general flow, i.e. for ${\mathbf u} \ne {\mathbf 0}$ and for general velocity distribution (\ref{Fgeneral}). One iteration (that is an update of the state of the system from time $t$ to time $t+\Delta t$) for the simulation in co-moving frame~(\ref{coframe}) is given as Algorithm [A1]--[A6] in Table~\ref{table2}. It evolves
the position and velocity of the solute particle together with the positions
and velocities of $N(t)$ solvent particles, where $N(t)$ depends on time $t$, and $N(t)$ will fluctuate around the value $900$ in our presented illustrative simulations. To formulate Algorithm [A1]--[A6], we assume that the timestep $\Delta t$ is chosen small enough so that at most one collision happens per iteration. 

\begin{table}
\caption{\label{table2}
{\it One iteration of the simulation algorithm in a co-moving frame.}}
\framebox{%
\hsize=0.961\hsize
\vbox{
\leftskip 9.8mm
\parindent -9.8mm

[A1] \hskip 1.2mm
Update the positions of the solute and solvent particles by calculating their free-flight positions~(\ref{frfl1})--(\ref{frfl2}).

\smallskip

[A2] \hskip 1.2 mm Update the velocities of all solvent particles for which the free-flight position~(\ref{frfl2}) lies outside the radius 
of the solute particle by equation~(\ref{discvel}).

\smallskip

[A3] \hskip 1.2mm
If the free-flight position~(\ref{frfl2}) of a solvent particle lies within the radius 
of the solute particle, reverse the trajectories of the solvent and the solute particles by time 
$\tau < \Delta t$ such that they are just touching. Calculate post-collision velocities 
by equations~(\ref{elcol1})--(\ref{elcol2}) and further update their new positions by moving forward by time $\tau$. Otherwise, each free-flight position~(\ref{frfl2})
is accepted as the particle's position at time $t+\Delta t$.
 
\smallskip

[A4] \hskip 1mm
Update $N(t)$ by removing those solvent particles which lie outside of the co-moving frame~(\ref{coframe}) centered at new position of the solute particle, $\mathbf{X}(t+\Delta t)$, from the simulation. Calculate the velocity of the co-moving frame, $\mathbf{V}_{\mbox{\scriptsize f}}$,  
over the time interval $[t,t+\Delta t]$ by equation~(\ref{framevel}). 

\smallskip

[A5] \hskip 1mm
Generate six Poisson distributed random numbers $N_i^-$ and $N_i^+$, for $i=1,2,3$, with means $p_i^{-}({\mathbf V}_{\!\mbox{\scriptsize f}})$ and $p_i^{+}({\mathbf V}_{\!\mbox{\scriptsize f}})$,
respectively. For each side $i$, generate $N_i^-$ and $N_i^+$ proposed positions and velocities of new solvent particles. For each proposed new particle position ${\mathbf x}_{\mbox{\scriptsize new}}$ 
with velocity ${\mathbf v}_{\mbox{\scriptsize new}}$, generate a random number $r$ uniformly distributed in interval $(0,1).$ \hfill\break
If $r < h_{\mbox{\scriptsize acc}}({\mathbf x}_{\mbox{\scriptsize new}},{\mathbf v}_{\mbox{\scriptsize new}})$, where $h_{\mbox{\scriptsize acc}}({\mathbf x}_{\mbox{\scriptsize new}},{\mathbf v}_{\mbox{\scriptsize new}})$ is given by~(\ref{acceptprob}), then increase $N(t)$ by 1 and initialize the new solvent particle at the proposed position ${\mathbf x}_{\mbox{\scriptsize new}}$ 
with velocity ${\mathbf v}_{\mbox{\scriptsize new}}$.

\smallskip

[A6] \hskip 1.2mm
Continue with step [A1] using time $t=t+\Delta t$.

\par \vskip 0.8mm}
} 
\end{table}

We initialize the position of the solute particle as $X(0)=0$ and generate a Poisson number (with mean $\lambda_{\mu} \, L^3$) of solvent particles in our simulation domain~(\ref{coframe}). The solvent particles are initially placed uniformly in the frame (\ref{coframe}), where we remove particles overlapping with the solute particle (before we begin our simulation) to get the initial number, $N(0)$, of simulated solvent particles. Their initial velocities are drawn from the distribution $f_\mu({\mathbf v},{\mathbf u})$ given by~(\ref{Fgeneral}).

We first update the positions in Step [A1] over the time interval $(t,t+\Delta t]$ using the free-flight equations
\begin{eqnarray}
{\widehat{{\mathbf X}}}(t+\Delta t) &=& {\mathbf X}(t) + {\mathbf V}(t) \, \Delta t, 
\label{frfl1} \\
{\widehat{{\mathbf x}}_j}(t+\Delta t) &=& {\mathbf x}_j(t) + {\mathbf v}_j(t) \, \Delta t,
\label{frfl2}
\end{eqnarray}
where $j=1,2,\dots,N(t)$. Since $\Delta t$ is chosen small enough so that only one collision happens during the time interval
$[t,t+\Delta t]$, most of the free-flight positions~(\ref{frfl2}) of solvent particles are accepted 
in Step~[A3] as their updated positions ${\mathbf x}_j(t+\Delta t)$ and only the position of the solvent particle colliding with the solute particle is further updated. In Step~[A2], we update the velocities of the solvent particles which did not collide with the solute particle by a discretized version of equation~(\ref{xveqgen}), namely by
\begin{equation}
{\mathbf v}_j(t+\Delta t) 
=
{\mathbf v}_j(t) + \left(
\nabla {\mathbf u} (\mathbf{x}_j(t)) 
\right)
{\mathbf v}_j(t)
\, \Delta t,
\label{discvel}    
\end{equation}
where $\nabla {\mathbf u} (\mathbf{x}_j(t))$ means that we evaluate $\nabla {\mathbf u}$ at the position $\mathbf{x}_j(t)$ of the $j$-th particle at time $t$. This means that equations~(\ref{frfl2}) and (\ref{discvel}) form the forward Euler discretization of ODEs~(\ref{xveqgen}). To determine the colliding particle, we evaluate the condition
$$
\left\|
{\widehat{{\mathbf x}}_j}(t+\Delta t)
-
{\widehat{{\mathbf X}}}(t+\Delta t)
\right\|_2
< 
R
$$
for all solvent particles in steps [A2] and [A3]. If the particle did collide, then we do not use equation~(\ref{discvel}) during the collision time step. In Step~[A4], we update the position and velocity of the co-moving frame to ${\mathbf X}(t+\Delta t)$ and 
\begin{equation}
{\mathbf V}_{\mbox{\scriptsize f}} 
= 
\frac{{\mathbf X}(t+\Delta t) - {\mathbf X}(t)}{\Delta t}.
\label{framevel}
\end{equation}
We remove solvent particles which are outside of the simulation domain and update $N(t)$ accordingly. In Step~[A5], we calculate the number of particles which entered the simulation domain during the time interval $(t,t+\Delta t]$ through each corresponding side of the co-moving frame. The mean number of particles entering through the left ($-$) and right ($+$) of the $i$-th side is denoted as 
\begin{equation}
p_i^{\pm}({\mathbf V}_{\!\mbox{\scriptsize f}}), 
\qquad
\mbox{for} \quad i=1,2,3,
\label{incomeprob}
\end{equation}
and it is calculated for each considered distribution in Section~\ref{secdistincome}, where we also specify the distributions of the initial positions ${\mathbf x}_{\mbox{\scriptsize new}}$ and velocities ${\mathbf v}_{\mbox{\scriptsize new}}$ of the introduced solvent particles. To derive the corresponding equations, we integrate over the half-space $(-\infty,0) \times \mathbb{R}^2$, meaning that once we consider all six faces of the cubic frame~(\ref{coframe}) we have over-counted twice at the edges and three times at the corners.
To compensate for this effect, we consider the sampled position,
${\mathbf x}_{\mbox{\scriptsize new}}$ and velocity ${\mathbf v}_{\mbox{\scriptsize new}}$ of the new incoming
particle at time $t+\Delta t$ and calculate its previous position at time $t$ by
$$
{\mathbf y}
=
{\mathbf x}_{\mbox{\scriptsize new}} - {\mathbf v}_{\mbox{\scriptsize new}} \, \Delta t.
$$
If ${\mathbf y}$ is in the regions which were counted twice or three times in
our derivation, we reject the proposed introduction of the 
new solvent particle with the corresponding probability.
Namely, we use the acceptance probability in Step~[A5] given by 
\begin{equation}
h_{\mbox{\scriptsize acc}}({\mathbf x}_{\mbox{\scriptsize new}},{\mathbf v}_{\mbox{\scriptsize new}})
=
\left\{
\begin{array}{ll}
1,
&
\mbox{ for}\;\; \; {\mathbf y}-\mathbf{X} (t) \in {\mathcal Y}_1;
\\
1/2,
&
\mbox{ for}\;\; \; {\mathbf y}-\mathbf{X} (t) \in {\mathcal Y}_2; 
\\
1/3,
&
\mbox{ for}\;\; \; {\mathbf y}-\mathbf{X} (t)\in {\mathcal Y}_3,
\\
\end{array}
\right.
\label{acceptprob}
\end{equation}
where ${\mathcal Y}_j \subset {\mathbb R}^3$ is the region of the space
which consists of points which have exactly $j$ of their coordinates outside of the interval $[-L/2,L/2]$. Regions ${\mathcal Y}_1$ and ${\mathcal Y}_2$ are illustrated in the schematic of our co-moving frame in Figure~\ref{fig1}(a) with some of their boundaries highlighted by the magenta and green shading.
Region ${\mathcal Y}_3$ is next to the corners of the cube consisting of points, which have all coordinates outside of the interval $[-L/2,L/2]$.

\begin{figure}
(a) \hskip 8cm (b) \hfill\break
\rule{0pt}{1pt}
\hskip 4mm \epsfig{file=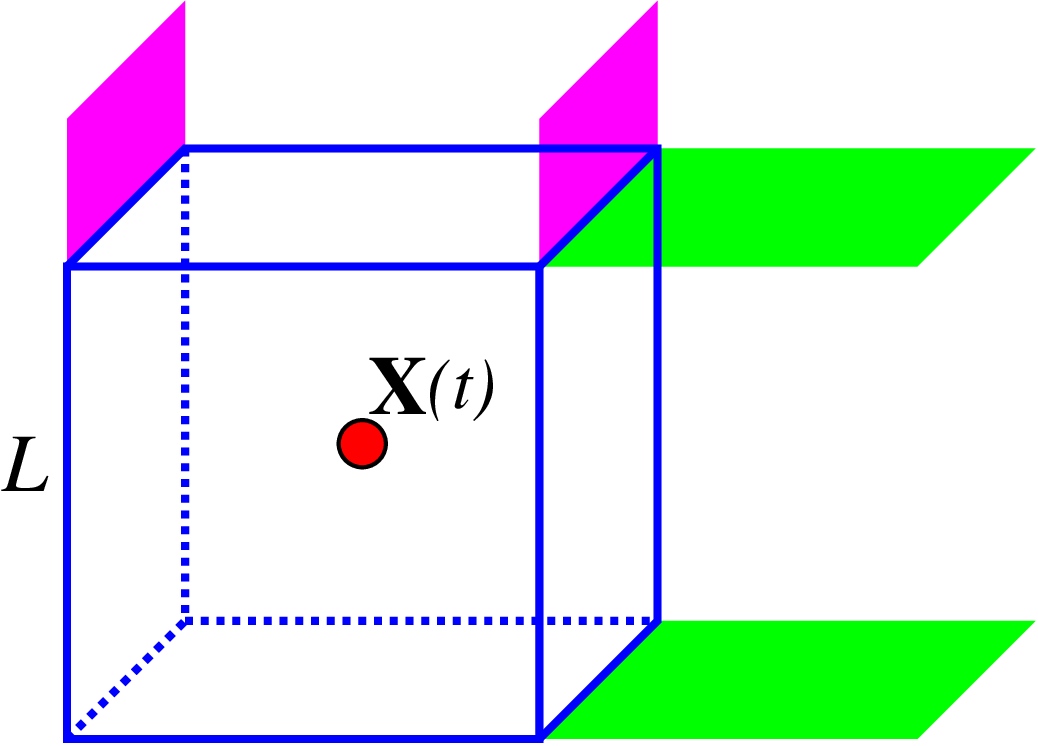,height=5cm} \hskip 8mm \raise -4mm \hbox{\epsfig{file=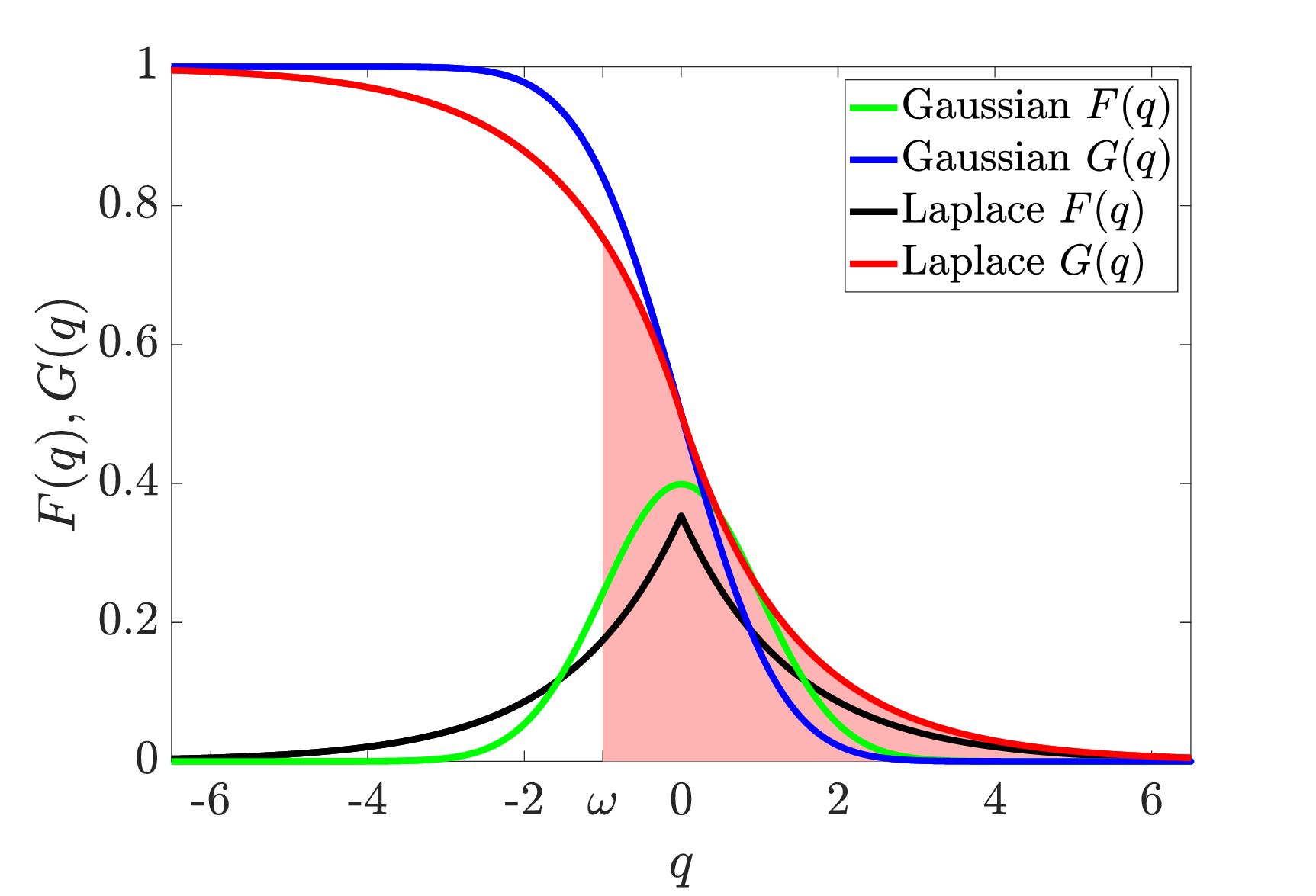,height=5.9cm}}
\caption{\label{fig1} (a) 
{\it Schematic of the co-moving frame~$(\ref{coframe})$ is shown as the blue cube with centre ${\mathbf X}(t)$ and size $L$. Some parts of region ${\mathcal Y}_1$ $($used in equation~$(\ref{acceptprob}))$ are between the magenta shaded planes and between the green shaded planes, while the region $($next to the edge$)$ bounded by the green and magenta planes is a part of ${\mathcal Y}_2$.}
(b) {\it Functions ${\mathcal F}(q)$ and ${\mathcal G}(q)$ given by equation~$(\ref{functioncalgh})$ and Table~$\ref{table3}$. The shaded part illustrates the probability distribution~$(\ref{shiftcalG})$ for $\omega=-1$.}}
\end{figure}

\subsection{Distributions of particles entering the co-moving frame in Step~[A5]}
\label{secdistincome}

To apply Algorithm [A1]-[A6], we need to calculate the number, positions and velocities of incoming solvent particles in Step~[A5]. We will study the behaviour at the boundary side 
\begin{equation}
\left\{ X_1(t) - \frac{L}{2}\right\}
\times
\left[
X_2(t)
-
\frac{L}{2}, 
X_2(t)
+
\frac{L}{2}
\right]
\times
\left[
X_3(t) -
\frac{L}{2},
X_3(t)
+
\frac{L}{2}
\right]
\label{leftfirst}
\end{equation}
in Theorem~\ref{thmbound}, where this boundary is approximated as the plane
$\{ X_1(t) - L/2\} \times {\mathbb R}^2$ dividing the space ${\mathbb R}^3$ into two half-spaces approximating the exterior and the interior of the co-moving frame~(\ref{coframe}). We consider those particles which are not explicitly simulated at time $t$ (because they are outside of the co-moving frame at time $t$), but they have to be included in the simulation at time $t+\Delta t$, because they entered the co-moving frame at time $t+\Delta t$. The results for the five other sides of the co-moving frame~(\ref{coframe}) can be obtained in the same way. 

\begin{theorem} \label{thmbound}
Let $\Delta t > 0$. 
Let us assume that solvent particles are distributed 
according to the Poisson distribution with density 
$\lambda_\mu$ in the half space $(-\infty,X_1(t)-L/2) \times {\mathbb R}^2$; 
their initial velocities are distributed according to 
$f_\mu({\mathbf v},{\mathbf u})$ and there are no particles in 
the half space $(X_1(t)-L/2,\infty) \times {\mathbb R}^2$ 
at time $t$. Then the positions ${\mathbf x}$ and velocities 
${\mathbf v}$ of solvent particles  are distributed at time 
$t + \Delta t$ according to 
\begin{align}
g({\mathbf x},\mathbf{v}; \Delta t)
& =
H \big(X_1(t) - L/2 - x_1 + v_1 \Delta t
\big) \, \lambda_\mu \, f_\mu({\mathbf v},{\mathbf u})
\nonumber
\\
& \quad -
H \big(X_1(t) - L/2 - x_1 \big) \, \lambda_\mu \, \nabla_{\mathbf u}
f_\mu({\mathbf v},{\mathbf u})
\left(\nabla {\mathbf u} \right){\mathbf v}
\, \Delta t
+
O\left( (\Delta t)^2\right),
\label{distveldeltatbound}
\end{align}
where $H(\cdot)$ is the Heaviside step function and
$$
\nabla_{\mathbf u}
f_\mu({\mathbf v},{\mathbf u})
\left(\nabla {\mathbf u} \right){\mathbf v}
\,
\equiv
\,
\sum_{j,\ell=1}^3 
\frac{\partial f_\mu}{\partial u_j}({\mathbf v},{\mathbf u}({\mathbf x})) \,
\frac{\partial u_j}{\partial x_\ell} ({\mathbf x})\, v_{\ell} \, .
$$
The marginal distribution of the positions ${\mathbf x}$ of solvent particles at time $t = \Delta t$ is then
\begin{align}
\varrho({\mathbf x}; \Delta t)
& =
\lambda_\mu \, 
\int_{(L/2 + x_1-X_1(t))/\Delta t}^\infty
\int_{-\infty}^{\infty}\int_{-\infty}^{\infty}
f_\mu({\mathbf v},{\mathbf u})
\, \mbox{{\rm d}} v_3
\, \mbox{{\rm d}} v_2
\, \mbox{{\rm d}} v_1 
\nonumber
\\
& \quad - 
H \big(X_1(t) - L/2 - x_1 \big) \, \lambda_\mu \, \Delta t \,
\int_{{\mathbb R}^3}
\nabla_{\mathbf u}
f_\mu({\mathbf v},{\mathbf u})
\left(\nabla {\mathbf u} \right){\mathbf v}
\, \mbox{{\rm d}} {\mathbf{v}} 
+ O\left( (\Delta t)^2\right)\,.
\qquad
\label{distdeltatbound}
\end{align} 
\end{theorem}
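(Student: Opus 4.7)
I would prove Theorem~\ref{thmbound} by treating it as a pull-back of the initial Poisson phase-space density through the free-flight map used in Step [A1], namely $(\mathbf{x}_0,\mathbf{v}_0)\mapsto(\mathbf{x}_0+\mathbf{v}_0\Delta t,\mathbf{v}_0)$; the discretized velocity update of Step [A2] is not applied to candidate incoming particles before they are sampled in Step [A5], so only the position translation enters this calculation. Since the free-flight map has unit Jacobian, conservation of probability gives, directly from the initial joint density $g(\mathbf{x},\mathbf{v};0)=\lambda_\mu f_\mu(\mathbf{v},\mathbf{u}(\mathbf{x}))H(X_1(t)-L/2-x_1)$,
\begin{equation*}
g(\mathbf{x},\mathbf{v};\Delta t)=\lambda_\mu\, f_\mu\!\big(\mathbf{v},\mathbf{u}(\mathbf{x}-\mathbf{v}\Delta t)\big)\, H\!\big(X_1(t)-L/2-x_1+v_1\Delta t\big).
\end{equation*}

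Next I would Taylor-expand only the velocity distribution, using $\mathbf{u}(\mathbf{x}-\mathbf{v}\Delta t)=\mathbf{u}(\mathbf{x})-(\nabla\mathbf{u})\mathbf{v}\,\Delta t+\mathcal{O}(\Delta t^2)$ together with the chain rule in the second slot of $f_\mu$:
\begin{equation*}
f_\mu\!\big(\mathbf{v},\mathbf{u}(\mathbf{x}-\mathbf{v}\Delta t)\big) = f_\mu(\mathbf{v},\mathbf{u})-\nabla_{\mathbf{u}} f_\mu(\mathbf{v},\mathbf{u})(\nabla\mathbf{u})\mathbf{v}\,\Delta t+\mathcal{O}(\Delta t^2),
\end{equation*}
where $\nabla_{\mathbf{u}} f_\mu(\mathbf{v},\mathbf{u})(\nabla\mathbf{u})\mathbf{v}=\sum_{j,\ell}(\partial f_\mu/\partial u_j)(\partial u_j/\partial x_\ell)v_\ell$ exactly as in the theorem statement. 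Distributing this expansion against the shifted Heaviside gives a leading term and an $\mathcal{O}(\Delta t)$ correction. In the correction term I would replace the shifted Heaviside by the unshifted $H(X_1(t)-L/2-x_1)$; distributionally their difference is $v_1\Delta t\,\delta(X_1(t)-L/2-x_1)+\mathcal{O}(\Delta t^2)$, and the explicit factor of $\Delta t$ already in front absorbs this into the $\mathcal{O}(\Delta t^2)$ remainder. The resulting identity is exactly \eqref{distveldeltatbound}.

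The marginal \eqref{distdeltatbound} then follows by integrating \eqref{distveldeltatbound} term-by-term over $\mathbf{v}\in\mathbb{R}^3$. In the first term the shifted Heaviside restricts $v_1$ to $v_1>(L/2+x_1-X_1(t))/\Delta t$, producing the stated triple integral of $f_\mu(\mathbf{v},\mathbf{u})$. In the second term the unshifted Heaviside depends only on $\mathbf{x}$ and factors out of the $\mathbf{v}$-integral, leaving the full $\mathbb{R}^3$ integral of $\nabla_{\mathbf{u}} f_\mu(\mathbf{v},\mathbf{u})(\nabla\mathbf{u})\mathbf{v}$ multiplied by $\Delta t \,H(X_1(t)-L/2-x_1)$.

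The main technical subtlety is not the calculus but the bookkeeping of the Heaviside shift: at intermediate stages the densities involve generalized functions, and the replacement of the shifted Heaviside by the unshifted one inside the $\Delta t$-order term must be interpreted in the sense of distributions paired with smooth test functions, so that the explicit $\Delta t$ factor renders the resulting error $\mathcal{O}(\Delta t^2)$ uniformly. Once this point is settled, the remaining steps are a routine application of the chain rule and Fubini, and the formulae are directly usable to derive the incoming-particle rates $p_i^{\pm}(\mathbf{V}_{\mbox{\scriptsize f}})$ of Step [A5].
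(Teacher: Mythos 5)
Your proposal is correct and takes essentially the same route as the paper: both arguments backtrack each solvent particle over one time step, rewrite the initial phase-space density $\lambda_\mu f_\mu H$ in the time-$(t+\Delta t)$ coordinates, Taylor-expand $\mathbf{u}(\mathbf{x}-\mathbf{v}\,\Delta t)$ via the chain rule in the second slot of $f_\mu$, and integrate over $\mathbf{v}$ to obtain the marginal~(\ref{distdeltatbound}). The only differences are cosmetic: the paper formally invokes the discretized velocity update~(\ref{discvel}) when expressing the old coordinates (its contribution to the Heaviside argument and to $\mathbf{x}_{\mbox{\scriptsize old}}$ is $O\left((\Delta t)^2\right)$, so the resulting formula is identical to yours), and your explicit distributional justification for replacing the shifted Heaviside by the unshifted one inside the $O(\Delta t)$ term is a detail the paper leaves implicit.
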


\begin{proof}
Solvent particles which have their position ${\mathbf x} = (x_1,x_2,x_3)^{\mathrm{T}}$ and velocity ${\mathbf v}=(v_1,v_2,v_3)^{\mathrm{T}}$ at time $t + \Delta t$ were previously (at time $t$) with the position and velocity, which we denote by ${\mathbf x}_{\mbox{\scriptsize old}}=(x_{1,\mbox{\scriptsize old}},x_{2,\mbox{\scriptsize old}},x_{3,\mbox{\scriptsize old}})^{\mathrm{T}}$ and ${\mathbf v}_{\mbox{\scriptsize{old}}}
=(v_{1,\mbox{\scriptsize{old}}},v_{2,\mbox{\scriptsize{old}}},v_{3,\mbox{\scriptsize{old}}})^{\mathrm{T}}$. There will be nonzero solvent particles with velocity ${\mathbf v}$ at point 
${\mathbf x}$ at time $t + \Delta t$ provided that 
$x_{1,\mbox{\scriptsize{old}}} < X_1(t)-L/2$ which implies that 
\begin{equation}
g({\mathbf x},\mathbf{v}; \Delta t)
=
H(X_1(t) - L/2 - x_{1,\mbox{\scriptsize{old}}}
) \, \lambda_\mu \, f_\mu({\mathbf v},{\mathbf u}(\mathbf{x}_{\mbox{\scriptsize{old}}}) ).
\label{exactboundist}
\end{equation}
Using equations~(\ref{frfl2}) and (\ref{discvel}), we have
\begin{eqnarray*}
x_{1,\mbox{\scriptsize old}}
&=&
x_1 -
v_{1,\mbox{\scriptsize{old}}} \, \Delta t,
\\
v_{1,\mbox{\scriptsize{old}}} 
&=&
v_1 -
\left(
\nabla {\mathbf u} (\mathbf{x}_{\mbox{\scriptsize{old}}}) 
\right)
{\mathbf v}_{\mbox{\scriptsize{old}}}
\, \Delta t,
\end{eqnarray*}
Substituting into equation~(\ref{exactboundist}), we obtain 
$$
g({\mathbf x},\mathbf{v}; \Delta t)
=
H \big(X_1(t) - L/2 - x_1 + v_1 \Delta t
\big) \, \lambda_\mu \, f_\mu({\mathbf v},{\mathbf u}(\mathbf{x} - {\mathbf v} \Delta t))
+
O\big((\Delta t)^2\big).
$$
Using the Taylor expansion, we obtain
equation (\ref{distveldeltatbound}) to order $O((\Delta t)^2)$. The marginal distribution of the positions ${\mathbf x}$ of solvent particles at time $t + \Delta t$ can then be obtained by integrating  $g({\mathbf x},\mathbf{v}; \Delta t)$ over all possible velocities, giving
$$
\varrho({\mathbf x}; \Delta t)
=
\int_{{\mathbb R}^3}
g({\mathbf x},\mathbf{v}; \Delta t)
\, \mbox{{\rm d}} {\mathbf v}.
$$
Substituting~(\ref{distveldeltatbound}), we obtain
(\ref{distdeltatbound}).
\end{proof}

To apply Theorem~\ref{thmbound} to the left side in the first direction~(\ref{leftfirst}) of the co-moving frame~(\ref{coframe}), we identify the co-moving frame~(\ref{coframe}) at time $t+\Delta t$ with the semi-infinite cuboid 
\begin{equation}
{\mathbf X}(t+\Delta t)
+
\left[
- \frac{L}{2},\infty
\right) 
\times
\left[
-
\frac{L}{2}, 
\frac{L}{2}
\right]
\times
\left[
-
\frac{L}{2},
\frac{L}{2}
\right].
\label{semicuboid}
\end{equation}
Then the mean number of particles entering through the left side in the first direction~(\ref{leftfirst}) of the co-moving frame can be calculated as the average number of particles in the semi-infinite cuboid~(\ref{semicuboid}) 
by integrating
\begin{equation}
\label{p1min}
p_1^{-}({\mathbf V}_{\!\mbox{\scriptsize f}})
=
\int_{[X_1(t+\Delta t)-L/2,\infty)  \times [0,L]^2}
\varrho({\mathbf x}; \Delta t)
\,\mbox{{\rm d}} {\mathbf x},
\end{equation}
where $\varrho({\mathbf x}; \Delta t)$ is given by~(\ref{distdeltatbound}) and we have used that the integration domain can be translated in the second and third directions without changing the integral. In order to evaluate integrals in equations~(\ref{distdeltatbound}) and (\ref{p1min}), 
 we will assume that the velocity distribution $f_\mu({\mathbf v},{\mathbf u})
= 
F_{\mu}(\mathbf{v}-\mathbf{u})$ in equation~(\ref{Fgeneral}) is given in the product form~(\ref{productform}) and ${\mathbf u}$ is a constant vector (uniform flow). We define
\begin{equation}
{\mathcal G}(z)
=
\int_z^\infty
{\mathcal F}(q)
\, \mbox{d} q,
\qquad
\mbox{and}
\qquad
{\mathcal K}(z)
=
\int_z^\infty
{\mathcal G}(q)
\, \mbox{d} q.
\label{functioncalgh}
\end{equation}
Since $\nabla {\mathbf u}$ is a zero matrix for uniform flow ${\mathbf u}$, equation~(\ref{distdeltatbound}) simplifies as follows
\begin{equation}
\varrho({\mathbf x}; \Delta t)
\equiv
\varrho(x_1; \Delta t)
=
\lambda_\mu \, 
{\mathcal G}\!\left(\frac{
L + 2 \, x_1 - 2 \, X_1(t)
- 2 \, u_1 \, \Delta t}{2 \, \sigma_\mu \, \Delta t} \right).
\label{distdeltatboundproduct}
\end{equation}
Substituting into~(\ref{p1min}), we get
$$
p_1^{-}({\mathbf V}_{\!\mbox{\scriptsize f}})=
\lambda_\mu \, L^2 \, \sigma_\mu \, \Delta t \,\,
{\mathcal K}\!\left(\frac{V_{\mbox{\scriptsize f},1}
- u_1}{\sigma_\mu} \right)
$$
where
$
{\mathbf V}_{\!\mbox{\scriptsize f}}
=
(
V_{\mbox{\scriptsize f},1},
V_{\mbox{\scriptsize f},2},
V_{\mbox{\scriptsize f},3}
)^{\mathrm{T}}
$
is the frame velocity defined by~(\ref{framevel}) and $\mathbf{u}$ is the underlying flow vector field.
Using symmetry, we can then express the numbers of particles entering the co-moving frame during one time step through each side by
\begin{equation}
p_i^{\pm}(\mathbf{V}_f)=
\lambda_\mu \, L^2 \, \sigma_\mu \, \Delta t \,\,
{\mathcal K}\!\left(
\pm \frac{u_i - V_{\mbox{\scriptsize f},i}
}{\sigma_\mu} \right), 
\qquad
\mbox{for} \quad i=1,2,3.
\label{pivf}
\end{equation}
These numbers are then used in Step~[A5] of Algorithm~[A1]-[A6] in Table~\ref{table2}. The distribution of the initial positions of the incoming particle is proportional to~(\ref{distdeltatboundproduct}) which is restricted to the semi-infinite cuboid~(\ref{semicuboid}). Since~(\ref{distdeltatboundproduct}) only depends on $x_1,$ the distribution of the second and third coordinates is uniform in 
$[X_2(t+\Delta t)-L/2,X_2(t+\Delta t)+L/2]$ and
$[X_3(t+\Delta t)-L/2,X_3(t+\Delta t)+L/2]$, respectively, while the first coordinate can be sampled as
\begin{equation}
x_1 = X_1(t+\Delta t) - L/2 + \zeta \, \sigma_\mu \, \Delta t \,,
\label{zetadef}
\end{equation}
where $\zeta$ is the dimensionless distance (expressed in units of $\sigma_\mu \, \Delta t$) from the boundary~(\ref{leftfirst}) at time $t+\Delta t.$ Using~(\ref{distdeltatboundproduct}), we can sample $\zeta$ according to a distribution proportional to
$$
{\mathcal G}\!\left(
\zeta
+
\frac{V_{\mbox{\scriptsize f},1}
- u_1}{\sigma_\mu} \right),
\qquad
\mbox{for} \quad \zeta > 0.
$$
Using symmetry, the distances of particles entering the co-moving frame through each of its sides are sampled according to distributions proportional to
\begin{equation}
{\mathcal G}\!\left(
\zeta
\pm
\frac{u_i - V_{\mbox{\scriptsize f},i}}{\sigma_\mu} \right),
\qquad
\mbox{for} \quad \zeta > 0.
\label{distdeltatboundproductzeta}
\end{equation}
To sample random numbers from distributions~(\ref{distdeltatboundproductzeta}), we will use the acceptance-rejection algorithm [S1]-[S4] presented in Table~\ref{tableaccrej}.

\begin{table}
\caption{\label{tableaccrej} {\it Acceptance-rejection algorithm for sampling random numbers according to the probability distribution $p(\zeta;\omega)$ given by~$(\ref{shiftcalG})$.}}
\framebox{%
\hsize=0.961\hsize
\vbox{
\leftskip 8.8mm
\parindent -8.8mm

[S1] \hskip 1.2mm
Calculate $a_1(\omega)$ and $a_2(\omega)$ that satisfy the inequality~(\ref{accrejcond}) for all $\zeta > 0$.

\smallskip

[S2] \hskip 1.2 mm 
Generate two random numbers $\eta_1$ and $\eta_2$ uniformly distributed in the interval (0,1).

\smallskip

[S3] \hskip 1.2 mm
Compute an exponentially distributed random number $\eta_3$ by
$\eta_3 = - a_1(\omega) \, \log(\eta_1).$

\smallskip

[S4] \hskip 1.2mm
If $\eta_1 \, \eta_2 < a_2(\omega) \, {\mathcal G} (\eta_3 + \omega)\,$, then choose $\eta_3$ as a sample from the probability \hfill\break distribution
(\ref{shiftcalG}). Otherwise, go to step [S2] of the algorithm.
\par \vskip 0.8mm}
}
\end{table}

This is a generalization of the acceptance-rejection algorithms that were previously used for simulations with Maxwell-Boltzmann statistics~\cite{Erban:2014:MDB,gunaratne2019multi}. In the case of the distributions~(\ref{distdeltatboundproductzeta}), we need to sample random numbers according to the probability distribution
\begin{equation} 
p(\zeta;\omega) = \frac{{\mathcal G}(\zeta+\omega)}{{\mathcal K}(\omega)} \, ,
\qquad \mbox{for} \quad \zeta > 0,
\label{shiftcalG}
\end{equation}
where $\omega \in {\mathbb R}$ is a parameter. This distribution is illustrated in Figure~\ref{fig1}(b) using the pink shading for $\omega=-1$ and ${\mathcal F}$ being the Laplace distribution.

The algorithm in Table~\ref{tableaccrej} samples random numbers according to the distribution~(\ref{shiftcalG}) by generating an exponentially distributed 
random number $\eta_3$ with mean $a_1(\omega)$, which is a parameter of the method satisfying
\begin{equation}
p(\zeta;\omega)
\,\le\,
\frac{1}{a_2(\omega) \, {\mathcal K}(\omega)} 
\, \exp \! 
\left[
-
\frac{\zeta}{a_1(\omega)}
\right] 
\qquad \mbox{for all} \quad \zeta > 0,
\label{auxaccrej}
\end{equation}
where $a_2(\omega)$ is the second parameter of the method. Substituting~(\ref{shiftcalG}) into~(\ref{auxaccrej}), we get
\begin{equation}
a_2(\omega)
\, {\mathcal G}(\zeta+\omega)
\,
\exp \! 
\left[
\frac{\zeta}{a_1(\omega)}
\right] 
\,\le\,
1
\qquad \mbox{for all} \quad \zeta > 0.
\label{accrejcond}
\end{equation}
The inequality~(\ref{accrejcond}) can always be satisfied for some choices of parameters $a_1(\omega)$ and $a_2(\omega)$, because ${\mathcal G}$ is exponentially decreasing for large values of $\zeta$ as can be seen in Figure~\ref{fig1}(b). In practice, we have to choose $a_1(\omega)$ and $a_2(\omega)$ to have a relatively high acceptance probability which is the number on the left-hand side of condition~(\ref{accrejcond}). Since our exponentially distributed random number $\eta_3$ is obtained in Step [S3] as $\eta_3 = - a_1(\omega) \, \log(\eta_1),$ we can substitute this into the left-hand-side of condition~(\ref{accrejcond}) for $\zeta$ to get the acceptance probability in Step~[S4] as
\begin{equation}
\frac{a_2(\omega)
\, {\mathcal G}(\eta_3+\omega)}{\eta_1} \, .
\label{accprobS4}
\end{equation}
A relatively high acceptance probability gives an efficient algorithm, because it decreases the number of repeats of Steps~[S2]-[S4] in Table~\ref{tableaccrej}. Since $\eta_1$ is uniformly distributed in Step~[S2] and $\eta_3 = - a_1(\omega) \, \log(\eta_1),$ we can substitute into~(\ref{accprobS4}) to get the probability that the algorithm [S2]-[S4] finishes in one iteration as
\begin{equation}
\int_0^1
\frac{a_2(\omega)
\, {\mathcal G}(- a_1(\omega) \, \log(s)
+
\omega)}{s}
\, \mbox{d} s
=
\int_{\omega}^\infty
\frac{a_2(\omega)}{a_1(\omega)}
\, {\mathcal G}(\zeta)
\, \mbox{d} \zeta
=
\frac{a_2(\omega)\, {\mathcal K}(\omega)}{a_1(\omega)}.
\label{accprobS4t}
\end{equation}
An appropriate choice of the parameters $a_1(\omega)$ and $a_2(\omega)$ will depend on our choice of ${\mathcal F}$. In our illustrative simulations, we will use two of the functions ${\mathcal F}: {\mathbb R} \to [0,\infty)$ which have been given in Table~\ref{tab:mdfxnscalinglist} in Section~\ref{sec3}, namely, the Gaussian (Maxwell-Boltzmann statistics) and Laplace distributions. For these distributions, the integrals ${\mathcal G}$ and ${\mathcal K}$ defined by~(\ref{functioncalgh}) can be evaluated and are given in Table~\ref{table3} together with the choices of $a_1(\omega)$ and $a_2(\omega)$, which we use in our simulations. However, any choice of $a_1(\omega)$ and $a_2(\omega)$ will lead to correct sampling of random numbers by the algorithm [S1]-[S4] provided that they satisfy the inequality~(\ref{accrejcond}) for all $\zeta>0.$ Finally, we need a method for sampling velocities of particles introduced into the simulation in Step~[A5] according to distribution~(\ref{distveldeltatbound}). In the case of a velocity distribution $f_\mu({\mathbf v},{\mathbf u})
= 
F_{\mu}(\mathbf{v}-\mathbf{u})$ taking the form of a product as in~(\ref{productform}) and ${\mathbf u}$ being a constant vector (uniform flow), this simplifies to sampling the first coordinate of the incoming velocity of the particle entering through the first side according to the truncated distribution proportional to
\begin{equation}
 \, {\mathcal F}\!\left(\frac{v_1-u_1}{\sigma_\mu} \right)
\qquad
\mbox{restricted to the subdomain}
\quad
v_1 > \frac{x_1 + L/2 - X_2(t)}{\Delta t},
\label{distveldeltatboundsimp}
\end{equation}
while the second and third coordinate of the velocity, $v_2$ and $v_3$, are sampled according to untruncated distributions ${\mathcal F}(v_j/\sigma_\mu)/\sigma_\mu$, for $j=2,3$. Using~(\ref{zetadef}) and symmetry, the $i$-th coordinates of velocities of particles entering the co-moving frame through the $i$-th left and right sides are sampled according to distributions proportional to
\begin{equation}
 \, {\mathcal F}\!\left(\frac{v_i-u_i}{\sigma_\mu} \right)
\qquad
\mbox{restricted to the subdomain}
\quad
v_i > \zeta \sigma_\mu 
+ V_{\mbox{\scriptsize f},i} \, ,
\label{distveldeltatboundsimp2}
\end{equation}
and
\begin{equation}
 \, {\mathcal F}\!\left(\frac{v_i-u_i}{\sigma_\mu} \right)
\qquad
\mbox{restricted to the subdomain}
\quad
v_i < - \zeta \sigma_\mu 
+ V_{\mbox{\scriptsize f},i} \, ,
\label{distveldeltatboundsimp3}
\end{equation}
respectively. An appropriate choice of the algorithm for sampling random numbers according to truncated distributions~(\ref{distveldeltatboundsimp2})--(\ref{distveldeltatboundsimp3}) will depend on the choice of ${\mathcal{F}}$, and it is provided for specific distributions in the next subsection.

\begin{table}
\caption{{\it Functions ${\mathcal G}$ and ${\mathcal K}$ defined by~$(\ref{functioncalgh})$ and parameters $a_1(\omega)$ and $a_2(\omega)$ satisfying the inequality~$(\ref{accrejcond})$ for some marginal density functions $\mathcal{F}$ introduced in Table~$\ref{tab:mdfxnscalinglist}$.} \label{table3}}
\centering
\vskip 3mm
\begin{tabular}{ccc}
\rowcolor{jobcolor}  & \color{white}{Gaussian Distribution} & \color{white}{Laplace Distribution} \\ 
$\mathcal{F}(q)$  & 
\raise -4.9mm \hbox{\rule{0pt}{11.6mm}}$\dfrac{1}{\sqrt{2\pi}}\,\exp\!\left(\! -\dfrac{q^2}{2}\right)$  & {\hskip -2mm}
\raise -4.9mm \hbox{\rule{0pt}{11.6mm}}$\dfrac{1}{2\sqrt{2}}\,\exp\!\left(\! -\dfrac{|q|}{\sqrt{2}}\right)$ 
\\
\hline
{\hskip -2mm} $\mathcal{G}(q)$ {\hskip -2mm}  
& \rule{0pt}{6mm} $\dfrac{1}{2} \, \mbox{erfc} \!\left( \dfrac{q}{\sqrt{2}}\right)$ {\hskip -2mm} 
& 
{\hskip -2mm} 
$\sqrt{2} \, {\mathcal F}(q) \, \mbox{sign}(q) + H(-q)$
{\hskip -2mm} 
\\
& &
\raise -9.5mm \hbox{\rule{0pt}{16mm}}$=
\begin{cases}
\dfrac{1}{2}\,\exp\!\left(\! -\dfrac{|q|}{\sqrt{2}}\right),
& \mbox{for} \; q \ge 0; \\
1 - \dfrac{1}{2}\,\exp\!\left(\! -\dfrac{|q|}{\sqrt{2}}\right),
& \mbox{for} \; q < 0.
\end{cases}
$
\\
\hline
$\mathcal{K}(q)$ &
${\mathcal F}(q) - q \, {\mathcal G}(q)$ 
 & \raise -3.5mm \hbox{\rule{0pt}{10.2mm}}$\displaystyle 2 {\mathcal F}(q) + \frac{|q| - q}{2}$ \\
\hline
$a_1(\omega)$ & \raise -9.4mm \hbox{\rule{0pt}{20.9mm}}$
\left\{
\begin{array}{ll}
\dfrac{{\mathcal G}(\omega)}{{\mathcal F}(\omega)} ,
&
\mbox{ for} \; \omega \ge 0;
\\
\dfrac{\sqrt{\pi}}{\sqrt{2}},
&
\mbox{ for} \; \omega \le 0. 
\\ 
\end{array}
\right.$
& $\sqrt{2}$ \\
\hline
$a_2(\omega)$ &\raise -10.4mm \hbox{\rule{0pt}{22.9mm}}$
\left\{
\begin{array}{ll}
\dfrac{1}{{\mathcal G}(\omega)},
&
\mbox{ for} \; \omega \ge 0;
\\
2 \, \exp
\!
\left(\!
\dfrac{\omega \sqrt{2}}{\sqrt{\pi}}
\right),
&
\mbox{ for} \; \omega \le 0. 
\\
\end{array}
\right.$
& $2
\,
\exp \! 
\left[
\dfrac{\omega}{\sqrt{2}}\right] 
$ \\
\hline
\end{tabular}
\end{table}

\subsection{Illustrative simulations for Gaussian and Laplace distributions}

\begin{figure}
(a) \hskip 8cm (b) \hfill\break
\epsfig{file=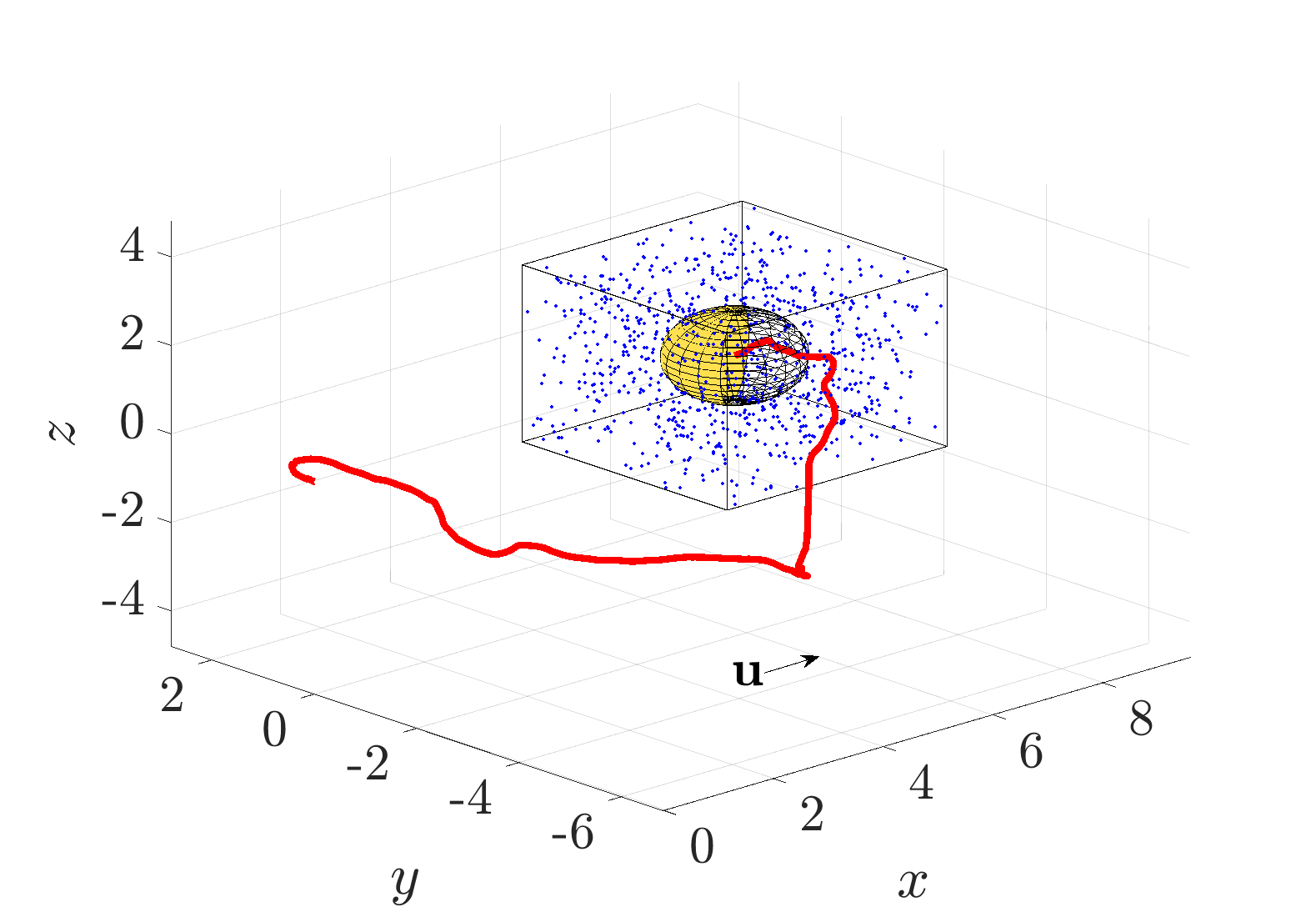,height=5.7cm} \hskip 4mm \epsfig{file=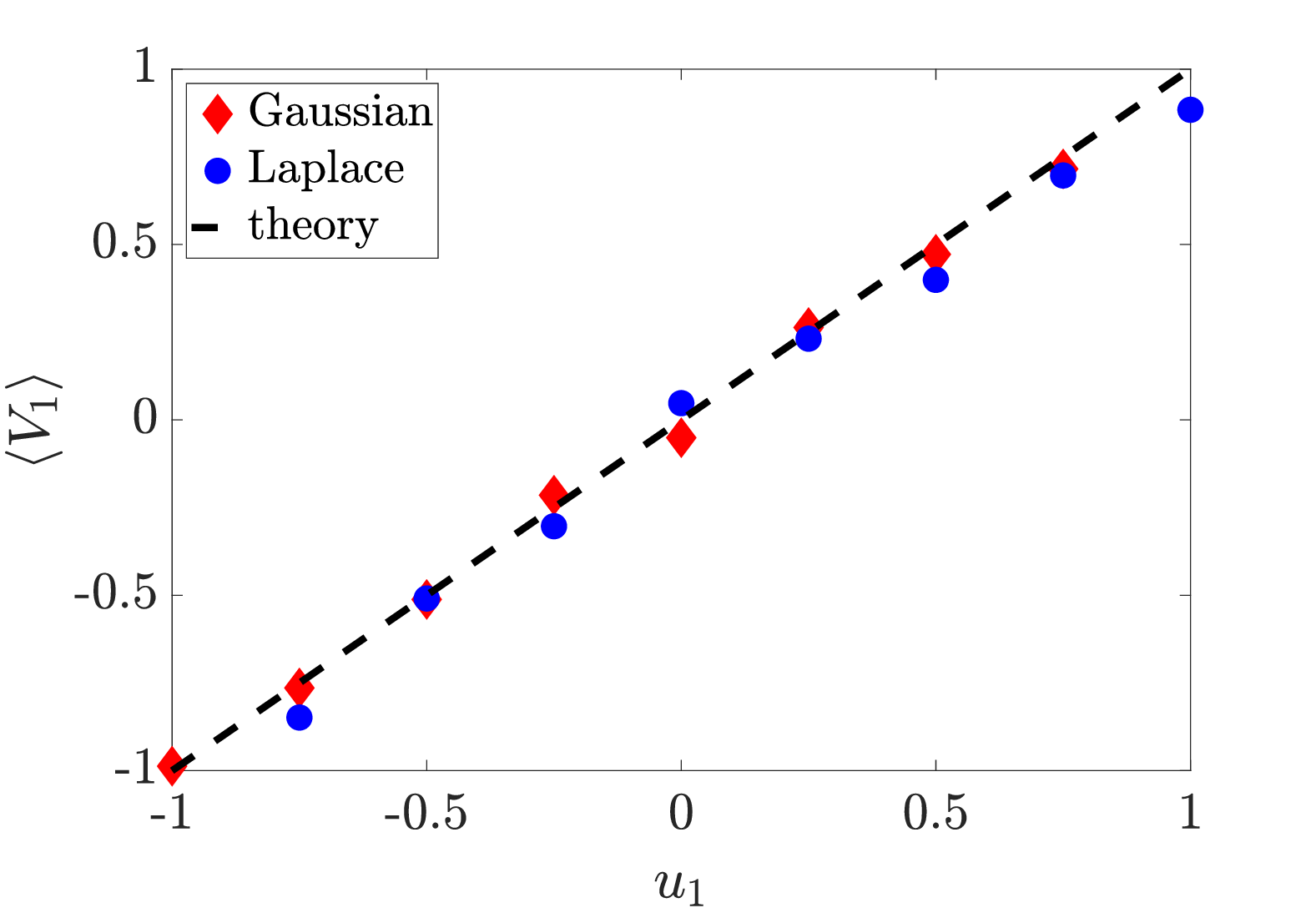,height=5.7cm}
\caption{\label{fig2} (a) {\it The initial trajectory of the large particle for $\mathbf{u}=(1,0,0)^{\mathrm{T}}$ in the time interval $t \in [0,10]$ is visualized as the red line together with the snapshot of positions of solvent particles at time $t=10$ (blue dots).} \hfill\break (b) 
{\it The first coordinate of the average velocity of the large particle, $\langle V_1 \rangle$, plotted as a function of the first coordinate of the fluid flow, $u_1$.}
}
\end{figure}

In Figure~\ref{fig2}(a), we present a trajectory of the solute particle calculated by Algorithm~[A1]-[A6] and visualized as the red line over a relatively short (dimensionless) time interval $t \in [0,10].$ In this simulation, we use the uniform flow ${\mathbf{u}}=(1,0,0)^{\mathrm{T}}$, dimensionless parameters
\begin{equation}
D = \gamma = m = R = 1,
\qquad
\Delta t = 10^{-5},
\qquad
M = 10^4,
\qquad
L = 4,
\label{parvalues}
\end{equation}
and the Gaussian distribution for ${\mathcal{F}}(q).$ In particular, we use the formulas for $a_1(\omega)$ and $a_2(\omega)$ for the Gaussian distribution given in Table~\ref{table3}. To justify these choices, we note that in the case of~$\omega \ge 0$, the formulas for $a_1(\omega)$ and $a_2(\omega)$ in Table~\ref{table3} imply that condition~(\ref{accrejcond}) can be rewritten as 
$$
\frac{{\mathcal G}(\zeta+\omega)}{{\mathcal G}(\omega)}
\,
\exp \! 
\left[
\frac{\zeta \, {\mathcal F}(\omega)}{{\mathcal G}(\omega)}
\right] 
\,\le\,
1
\qquad \mbox{for all} \quad \zeta > 0.
$$
The left-hand side is equal to $1$ for $\zeta=0$, which is the maximum value of the left-hand side for $\zeta \ge 0$, meaning that the condition~(\ref{accrejcond}) is satisfied. On the other hand, if $\omega<0$, then the formulas for $a_1(\omega)$, $a_2(\omega)$ and ${\mathcal G}(\omega)$ in Table~\ref{table3} for the Gaussian distribution imply that  the condition~(\ref{accrejcond}) can be rewritten as 
$$
\mbox{erfc} \!\left( \frac{\zeta+\omega}{\sqrt{2}}\right)
\,
\exp \! 
\left[
\frac{(\zeta+\omega) \sqrt{2}}{\sqrt{\pi}}
\right] 
\,\le\,
1
\qquad \mbox{for all} \quad \zeta > 0,
$$
where the left-hand side is equal to 1 at point $\zeta = -\omega$ and this is the maximum value of the left-hand side for $\zeta \ge 0$, meaning that the condition~(\ref{accrejcond}) is again satisfied. 

Using different formulas of $a_1(\omega)$ and $a_2(\omega)$ for $\omega \ge 0$ and $\omega < 0$ improves the acceptance probability~(\ref{accprobS4t}) of Algorithm [S1]-[S4], which can be further improved if we use tabulated functions, see~\cite{gunaratne2019multi} for further discussion. The initial velocity of the introduced solvent particle is sampled according to the truncated Gaussian distribution~(\ref{distveldeltatboundsimp}) using an acceptance-rejection method presented in the literature~\cite{Robert:1995:STN}.

Using our parameter values~(\ref{parvalues}), equations~(\ref{lambda3Dexp}) and~(\ref{sigmaold}) give
\begin{equation}
\mu = 10^4,
\qquad
\lambda_\mu \approx 14.96,
\qquad
\sigma_\mu \approx 100.
\label{derivedparameters}
\end{equation}
In particular, the volume of the co-moving frame available to solvent particles is $L^3-(4/3) \pi R^3$ and it contains around $\lambda_\mu (L^3-(4/3) \pi R^3) \approx 895$ solvent particles on average. The positions of solvent particles in the co-moving frame at time $t=10$ are visualized in Figure~\ref{fig2}(a) as blue dots.

Figure~\ref{fig2}(a) illustrates a relatively short simulation for time $t \in [0,10]$. Next, we increase the simulated time interval to $t \in \big[0, \, 2\!\times\! 10^3\big]$ using the same values of parameters~(\ref{parvalues}) and implementing Algorithm~[A1]-[A6] over $2 \!\times\! 10^8$ simulated time steps. We use uniform flow 
\begin{equation}
{\mathbf{u}}=(u_1,0,0)^{\mathrm{T}},
\qquad
\mbox{where} \quad
u_1 \in \left\{ 
-1, \, -\frac{3}{4}, \, -\frac{1}{2}, \, -\frac{1}{4}, \, 0, \, \frac{1}{4}, \, \frac{1}{2}, \, \frac{3}{4}, \, 1 \right\},
\label{uniformflowu1}
\end{equation} 
presenting the results of these nine long-time simulations in Figure~\ref{fig2}(b). In each simulation, we average the first component of the velocity of the solute particle, $\langle V_1 \rangle$, at every two time steps, so that we average over $10^8$ individual data points to calculate $\langle V_1 \rangle$. We plot this quantity in Figure~\ref{fig2}(b) as a function of $u_1$, confirming the theoretical result $\langle V_1 \rangle = u_1$.

Our calculated results for the Laplace distribution are also presented in Figure~\ref{fig2}(b), where we use the same parameter values~(\ref{parvalues})--(\ref{derivedparameters}) and the same nine underlying homogeneous flows~(\ref{uniformflowu1}) as in the Gaussian case. Using the formulas for ${\mathcal F}$, ${\mathcal G}$, ${\mathcal K}$, $a_1(\omega)$ and $a_2(\omega)$ in Table~\ref{table3} for the Laplace distribution and equation (\ref{accprobS4t}), we conclude 
that the algorithm [S1]-[S4] finishes in one iteration of Steps~[S2]-[S4] with probability
$$
\frac{a_2(\omega)\, {\mathcal K}(\omega)}{a_1(\omega)}
=
\left\{
\begin{array}{ll}
1 \, ,
&
\mbox{ for} ~~ \omega \ge 0; 
\\
\exp \! 
\left[
\sqrt{2}
\, \omega
\right]
+
\sqrt{2}
\, |\omega| \,
\exp \! 
\left[
\dfrac{\omega}{\sqrt{2}}
\right] 
\, ,
&
\mbox{ for} ~~ \omega \le 0. 
\\
\end{array}
\right.
$$
In particular, if $\omega \ge 0$, then the distribution~(\ref{shiftcalG}) is an exponential distribution and our choices of $a_1(\omega)$ and $a_2(\omega)$ given in Table~\ref{table3} ensure that every exponentially distributed random number calculated in Step~[S3] is accepted in Step~[S4], so that the algorithm in Table~\ref{tableaccrej} finishes in one iteration of steps~[S2]-[S4]. Initial velocities of the introduced solvent particles are sampled according to the truncated Laplace distribution~(\ref{distveldeltatboundsimp}). To do this, we can sample a random number according to the Laplace distribution and accept it, if it is inside the desired range of values. Such an acceptance-rejection algorithm will have its acceptance probability greater than 1/2, provided that the truncated Laplace distribution~(\ref{distveldeltatboundsimp}) includes both positive and negative values. If not, then the truncated Laplace distribution becomes an exponential distribution and we do not need to use an acceptance-rejection algorithm.

\begin{figure}
(a) \hskip 8cm (b) \hfill\break
\epsfig{file=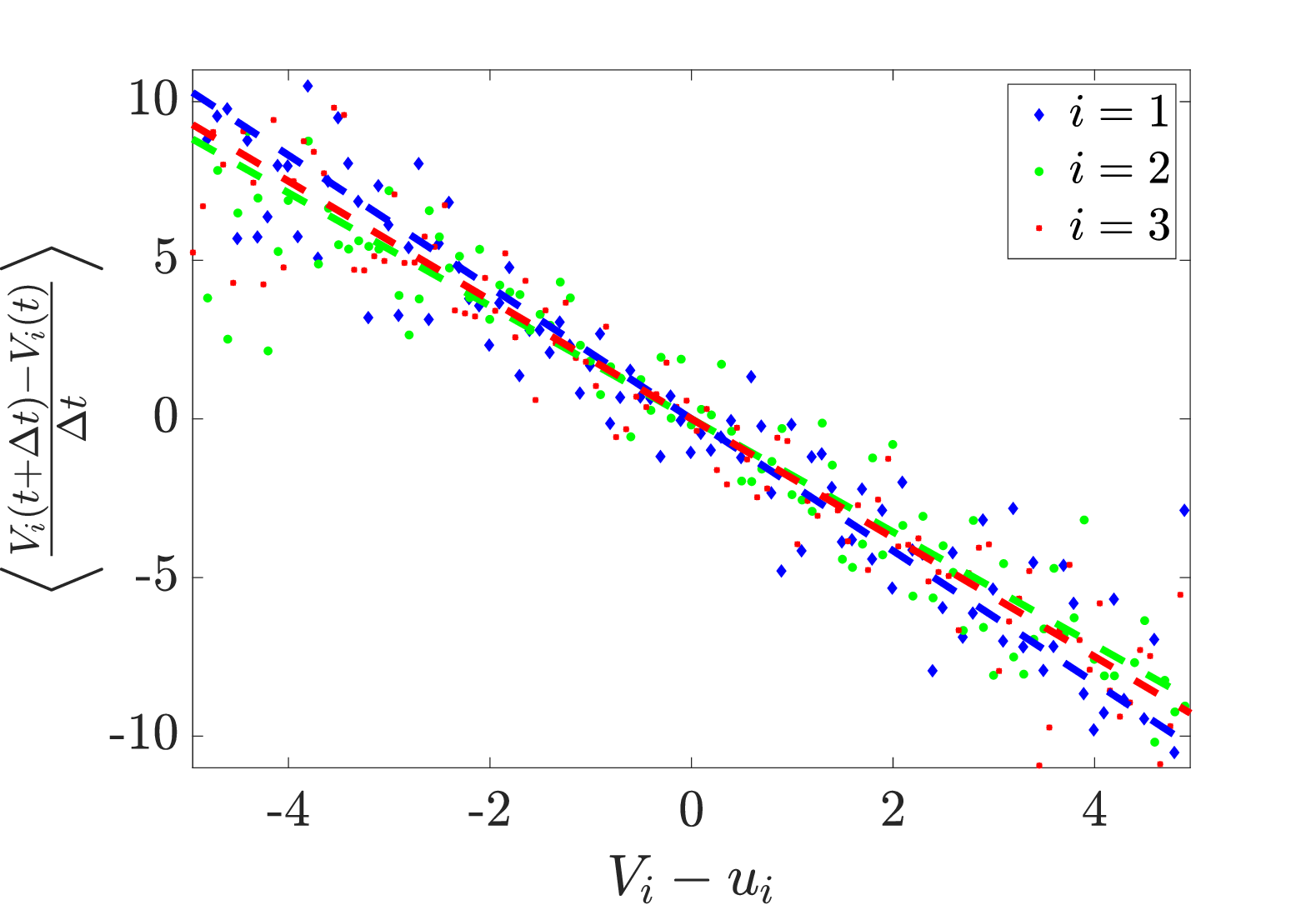,height=5.7cm} \hskip 4mm \epsfig{file=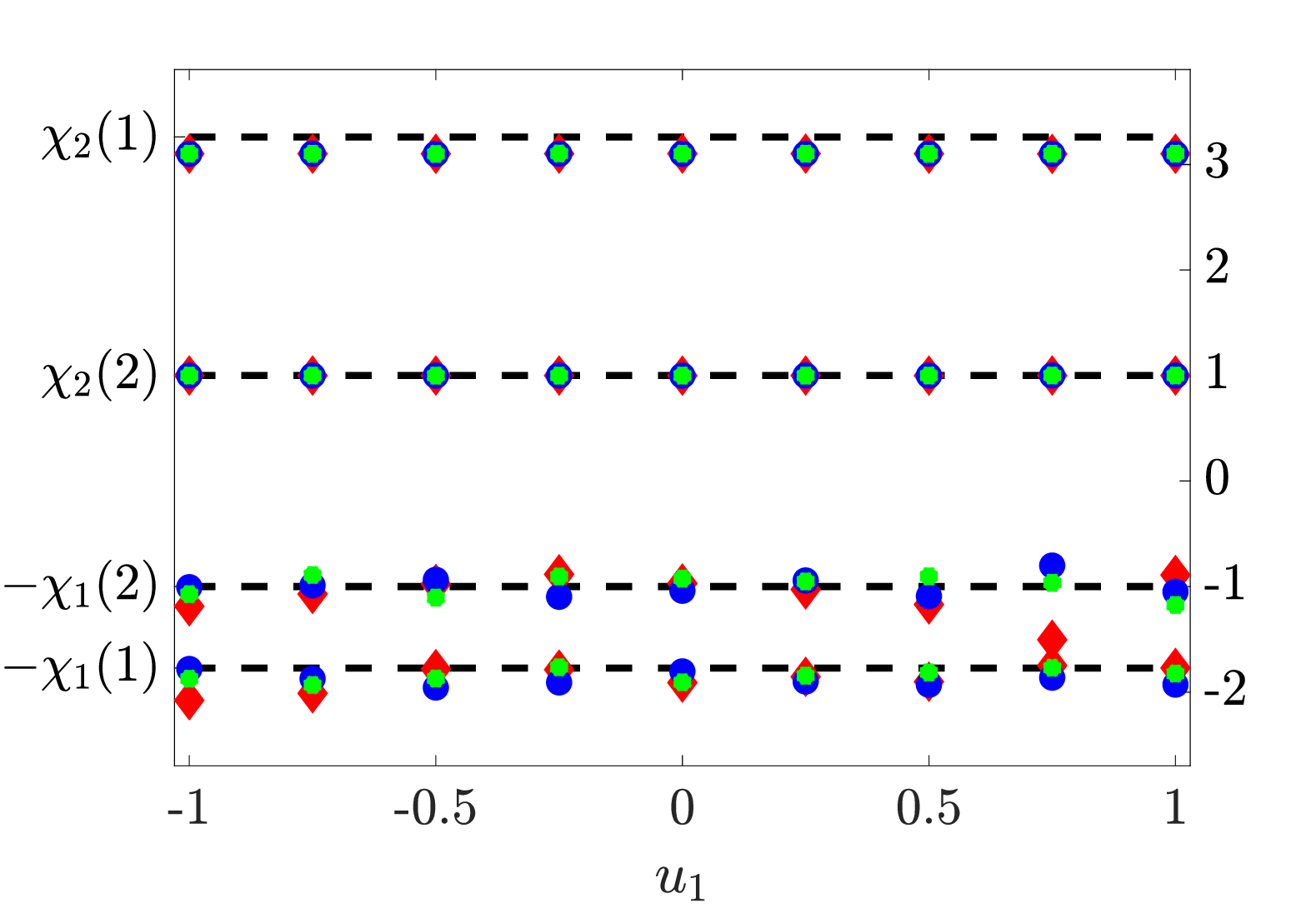,height=5.7cm}
\caption{\label{fig3} (a) 
{\it Estimation of the coefficient $\chi_1(1)$ defined in $\eqref{chiscalefactors}$ for the simulation of the large particle in flow $\mathbf{u}=(1,0,0)^{\mathrm{T}}$. We plot $\langle V_i(t+\Delta t) - V_i(t) \rangle/\Delta t$ as a function of $V_i - u_i$, for each coordinate $i=1,2,3$, and use the best linear fit to estimate the slope, $-\chi_1(1)$, shown in the second panel.} \hfill\break (b)  
{\it Coefficients~$\chi_1(\theta)$ and $\chi_2(\theta)$ estimated using~$(\ref{estdiff})$ from the simulation of the Algorithm~{\rm [A1]-[A6]} for $\theta=1$ (Laplace distribution) and $\theta=2$ (Gaussian distribution) as functions of flow velocity $u_1$.}}
\end{figure}

Figure~\ref{fig3}(b) shows the estimated values of coefficients~$\chi_1(\theta)$ and $\chi_2(\theta)$
for both Laplace ($\theta=1$) and Gaussian ($\theta=2$) distributions. To estimate $\chi_2(\theta)$, we calculate the average
\begin{equation}    
\chi_2(\theta)
\approx
\left(
\frac{1}{2 \, D \, \gamma^2} \left\langle \frac{(V_i(t+\Delta t) - V_i(t))^2}{\Delta t}
\right\rangle
\right)^{\!1/2}
\label{estdiff}
\end{equation}
during the same long-time simulations of the Algorithm~{\rm [A1]-[A6]}, which we have used for Figure~\ref{fig2}. To estimate $\chi_1(\theta)$, we plot $\langle V_i(t+\Delta t) - V_i(t) \rangle/\Delta t$ as a function of $V_i - u_i$ in Figure~\ref{fig3}(a) for the simulation with $u_1=1.$ Using the best linear fit for each coordinate $i=1,2,3$, we obtain three values of slopes; the resulting estimates of $-\chi_1(\theta)$ are shown in Figure~\ref{fig3}(b). All simulation results fall within the $\mathcal{O}(\sigma_\mu^{-1})$ error predicted by the theory.

\section{Discussion}\label{sec5}
We have derived a Langevin-type macroscopic description of a solute particle immersed in a heat bath of light point particles which are subject to a stationary flow. A range of flows and velocity distributions have been considered in Section~\ref{sec3} and efficient methods for simulating the microscopic system in a co-moving frame have been designed in Section~\ref{sec4}. Our results extend the theory for Brownian motion of a heavy particle to the case where the light heat bath particles follow a prescribed flow. We highlight some key points from our theoretical findings.

The result in Theorem \ref{theorem1} provides a general approach for determining the drift and diffusion terms of the Langevin dynamics~(\ref{SDEmainX})--(\ref{SDEmainY}) governing the motion of the heavy particle. Many distribution functions -- including generalized Gaussian distributions -- obey the symmetry property manifesting as the asymptotic scaling \eqref{mdfxnScale}. For heat bath particles obeying such velocity distributions, the approach taken in Theorem \ref{theorem2} allows us to simplify the expressions in Theorem \ref{theorem1} greatly, resulting in explicit formulas (up to an error $\mathcal{O}(\sigma_\mu^{-1})$, which is small in the limit where the heat bath particle masses are much smaller than that of the heavy particle) for the drift and diffusion coefficients in the generalized Ornstein-Uhlenbeck process~(\ref{SDEmainY}). The influence of the particular velocity distribution chosen for the heat bath particles manifests through factors which scale the drift and diffusion terms. When the heat bath particle velocities obey a Gaussian distribution (Maxwell-Boltzmann statistics) these scale factors reduce to unity, simplifying to scales common in the literature. When the heat bath particle velocity distribution has a heavy tail, we find that the scale factors are greater than unity (so, faster heat bath particles are more common than in the Gaussian case), while when the heat bath particle velocity distribution is thin-tailed the scale factors are less than unity (hence, slower heat bath particles are more common than in the Gaussian case). These theoretical findings are verified in our numerical simulations in Section~\ref{sec4}. 

In addition to the velocity distribution-dependent scale factors discussed above, the drift term in the obtained generalized Ornstein-Uhlenbeck process also depends upon the flow. Prior literature (see, for instance, \cite{rubi1988brownian, gotoh1990brownian, katayama1996brownian, garbaczewski1998diffusion, orihara2011brownian, wang2022generalized}) has generally assumed that the drift term $\boldsymbol{\alpha}$ scales like $\boldsymbol{\alpha} \sim -(\mathbf{V-\mathbf{u}(\mathbf{X})})$, with a formal derivation for certain linear shear flows appearing more recently in~\cite{dobson2013derivation}. This approximation is reasonable for heavy particles of negligible size or for linear flows. However, for heavy particles of finite size in nonlinear flows, this scaling is not complete, and we show via Theorem~\ref{theorem2} that there are finite-size effects when the heavy particle is large enough to interact with the geometry of the flow. Therefore, the appropriate leading-order scaling for the drift term is $\boldsymbol{\alpha} \sim -(\mathbf{V-\mathbf{u}(\mathbf{X})} - \mathbf{A}_R)$ where the correction term $\mathbf{A}_R$ is defined by equation~\eqref{A_correction} and depends upon both the flow geometry and the size $R$ of the heavy particle. If the underlying flow of heat bath particles is smooth enough over the problem domain $\Omega$, Theorem~\ref{theoremGeometry} implies $\mathbf{A}_R=\mathcal{O}( R^2)$ in the particle size $R$. Therefore, when $\mathcal{O}( R^2)$ contributions are larger than $\mathcal{O}(\sigma_\mu^{-1})$ contributions, it is necessary to include the finite-size correction to the drift term. We have demonstrated how this correction term manifests in the specific examples of Poiseuille and boundary layer flows.

There are a number of ways our work might be extended. 
Regarding the heavy particle, we have assumed a perfect sphere and neglected rotation or angular momentum. Including angular in addition to linear momentum would greatly complicate the governing equations, yet would permit a more realistic view of how a finite-size heavy particle moves within a flow. Inclusion of solid-body motions would also allow for the consideration of non-spherical particles. Furthermore, we have considered one heavy particle immersed within a heat bath of many light particles, and the extension to two or some finite number of heavy particles would be another possible generalization. Collisions with the heat bath particles bias the motion of heavy particles toward the imposed flow, and hence the motion of heavy particles is expected to be along the mean flow yet may differ from this mean flow greatly after interacting with other heavy particles either directly by collisions~\cite{Franz} or indirectly by hydrodynamic interactions through the solvent~\cite{Ermak:1978:BDH,Rolls:2018:MPB}. Regarding the nature of the heat bath flows considered, although we have considered stationary flows for the heat bath particles it should be possible to account for time-varying flows~\cite{batchelor1967introduction,landau2013fluid}. Likewise, while we consider incompressible flows, the extension to compressible flows with a time-varying number of light particles within a given volume element would be of interest for applications in gas dynamics.

\vskip 10mm

\noindent
{\bf Acknowledgments.} RVG was a Visiting Scholar at Merton College and the Mathematical Institute, University of Oxford, UK, and a CNRS sponsored INP Guest Scientist at the Institut de Physique de Nice, France in the first half of 2023 while parts of this paper were completed, and thanks these organizations for their hospitality and support over this time. This work was supported by the Engineering and Physical Sciences Research Council, grant
number EP/V047469/1, awarded to RE.

\end{document}